\newcommand{\EQ}{\begin{eqnarray}}
\newcommand{\EN}{\end{eqnarray}}
\newcommand{\EQQ}{\begin{eqnarray*}}
\newcommand{\ENN}{\end{eqnarray*}}
\newcommand{\col}{\mbox{col}}
\newcommand{\diag}{\mbox{diag }}
\newtheorem{thm}{Theorem}
\newtheorem{lem}{Lemma}
\newtheorem{rem}{Remark}
\newtheorem{defi}{Definition}
\newtheorem{prop}{\it Property}
\newtheorem{prob}{Problem}
\newtheorem{exam}{Example}
\newtheorem{ass}{Assumption}
\newtheorem{claim}{\bf \em{Claim}}
\newcommand{\myr}{\color{red}}
\begin{document}

\title{Adaptive Cooperative Tracking and Parameter Estimation of an Uncertain Leader over General Directed Graphs}
%\title{Cooperative tracking of Euler-Lagrange systems with an uncertain leader: an observer design approach}
\author{ Shimin~Wang, ~Hongwei~Zhang, and~Zhiyong~Chen
\thanks{This work was supported by the National Natural Science Foundation of China under Projects 61773322 and 51729501. (Corresponding author: Hongwei Zhang) }
\thanks{Shimin Wang is with Department of Electrical and Computer Engineering, University
of Alberta, Edmonton, Alberta, Canada. E-mail: shimin1@ualberta.ca. }
\thanks{Hongwei Zhang is with the School of Mechanical Engineering and Automation, Harbin Institute of Technology, Shenzhen, Guangdong 518055, P.R. China. E-mail: hwzhang@hit.edu.cn.}
\thanks{Zhiyong Chen is with the School of Electrical Engineering and Computing, The University of Newcastle, Callaghan, NSW 2308, Australia. E-mail: zhiyong.chen@newcastle.edu.au.
}
}

\maketitle

\begin{abstract}
This paper studies cooperative tracking problem of heterogeneous Euler-Lagrange systems with an uncertain leader. Different from most existing works, system dynamic knowledge of the leader node is unaccessible to any follower node in our paper. Distributed adaptive observers are designed for all follower nodes, simultaneously estimate the state and parameters of the leader node.
The observer design does not rely on the frequency knowledge of the leader node, and the estimation errors are shown to converge to zero exponentially.
Moreover, the results are applied to general directed graphs, where the symmetry of Laplacian matrix does not hold. This is due to two newly developed Lyapunov equations, which solely depend on communication network topologies.
Interestingly, using these Lyapunov equations, many results of multi-agent systems over undirected graphs can be extended to general directed graphs.
Finally, this paper also advances the knowledge base of adaptive control systems by providing a main tool in the analysis of parameter convergence for adaptive observers.

\end{abstract}

\begin{IEEEkeywords}
Directed graph, leader-following consensus, multi-agent system, parameter estimation, uncertain leader
\end{IEEEkeywords}

%========================================
\section{INTRODUCTION} \label{Sect-Intro}
%========================================

 Due to its widespread applications in robotics, unmanned aerial vehicles, smart grids, wireless sensor networks, social networks, etc, cooperative control of multi-agent systems (MASs) has been receiving tremendous attentions in the control community since the early 2000s.
  For a comprehensive literature review, readers are referred to some recent survey papers \cite{CaoYuRenChen2013Review,KnornChenMiddleton2016Overview,OhaParkAhn2015Survey,xiao2017adaptive} and references therein.

Two fundamental problems of cooperative control of MASs are leaderless consensus problem and leader-following consensus problem, or  known as cooperative tracking problem \cite{ZhangLewis2011TAC-optimal}. For the former one, all agents have equal roles and  achieve consensus to a common trajectory, which depends on the initial values of the agents. This paper is concerned about the latter one where a leader node, acting as a command generator, generates a trajectory for all follower nodes to track. The challenge for cooperative tracking of MASs lies in the fact that each follower agent can only obtain local information from its neighbors, instead of information from all agents, and only partial agents can directly access the leader's information. In this sense, a distributed control law is required and the system analysis involves both agent dynamics and communication topology.

Although cooperative tracking of homogeneous MASs, with identical system dynamics for both follower nodes and the leader node, is relatively less challenging to design and analysis, heterogeneous MASs are more practical and general. For example, many industrial manufacturing processes rely on cooperation of different types of robotic manipulators.
 For cooperative tracking of  heterogeneous linear MASs, distributed observer design approach was proposed in \cite{SuHuang2012TAC-coop}, where each follower agent maintains an observer, estimating the state information of the leader agent.
 However, all observers are required to know the leader's dynamics, i.e., the system matrix. The same requirement is also assumed in \cite{LiZK2016Auto-distri}. Noting that the above condition implicitly implies direct communication between each follower node and the leader node, which violates the distributed nature of MASs.  The observer design approach was later extended by \cite{CaiHuang2016TAC-EL},
  where only those follower nodes who have directed links from the leader node need to know the leader's dynamics. Although this observer design is distributed, it is however still impractical in most engineering applications. For example, when the leader system  generates a sinusoidal signal, knowing the leader's dynamics amounts to knowing the frequency of the leader trajectory, and this is often challenging.
It is well known that frequency estimation problem itself has long been a classical topic in control community \cite{Pesun1996,HsuOrtega1999TAC-globally,Dochain2003,Astolfi2014TAC-semi}, and has many practical applications, such as active noise and vibration control \cite{fuller1995CSM-active} in helicopters and disk drives, and autopilot control of an autonomous air vehicle for vertical landing on a deck oscillating in the vertical direction due to high sea states \cite{marconi2002autonomous}.
When the leader's dynamics is unaccessible to all follower nodes, it is known as a knowledge-based leader \cite{wang2006theoretical} or an uncertain leader \citep{wanghang2019TNN,baldi2020distributed}.

 Cooperative tracking of heterogeneous MASs with an uncertain leader is an important and challenging problem and has attracted increasing {attentions} in recent years \cite{MHNSPLBL2016,wu2017Auto-adaptive,yan2018new,wanghang2019TNN,WangHuang2019IJC-coop,LvLiu2019TAC-leader,jiaoTrentelman2019,WangMeng2021TAC-adaptive,baldi2020distributed}. Particularly, \cite{MHNSPLBL2016} proposed a distributed dynamic compensator for the uncertain leader and
 showed that the compensator can estimate the state of the leader system asymptotically using off-policy reinforcement learning. However, \cite{MHNSPLBL2016} did not consider the convergence issue of the estimated parameter of the system matrix. In \cite{wu2017Auto-adaptive}, a distributed adaptive reference generator was designed for each follower node to estimate the leader's system parameters and states, and the convergence of estimation errors was shown to be exponential. But the uncertain parameters of the leader node are restricted to be within some known compact set  and the output matrix is in a special form. In \cite{wanghang2019TNN}, the authors developed an adaptive distributed observer that is able to estimate both  the state and unknown dynamics parameters of the leader system, and showed that the estimated parameters can converge to the actual values asymptotically provided that the state of the leader system is persistently exciting. The convergence of the observer was later shown to be exponential in \cite{WangHuang2019IJC-coop}. The work \cite{LvLiu2019TAC-leader} also studied leader-following consensus problem with an unknown leader, and designed a distributed dynamic compensator to deal with the unknown parameters in the leader system. Very recently, using only the output information of the leader, \cite{WangMeng2021TAC-adaptive} solved  the leader-following consensus and parameter estimation problems simultaneously, where the leader signal is a sum of sinusoids with unknown amplitudes, frequencies and initial phases. In addition, \cite{baldi2020distributed} also designed an adaptive distributed observer for an uncertain leader to solve leader-following consensus and parameter estimation problems over undirected graph.

It is worth mentioning that all the above mentioned works require the communication network among follower nodes  to be either undirected \cite{wang2006theoretical,wu2017Auto-adaptive,wanghang2019TNN,WangHuang2019IJC-coop,baldi2020distributed,LvLiu2019TAC-leader,WangMeng2021TAC-adaptive} or detailed balanced \cite{MHNSPLBL2016}. For undirected graph, the adjacency matrix and Laplacian matrix are all symmetric, leading to a relatively easier treatment for the system analysis. While for a detailed balanced graph, although its adjacency/Laplacian matrix is generally not symmetric, there exists a diagonal matrix, such that the product of this diagonal matrix and adjacency/Laplacian matrix is symmetric.  Both undirected graph and detailed balanced digraph imply full duplex communication network. However, simplex communication network, modeled by unidirectional directed graph, is often preferred when an energy efficiency and cost effective design is required, such as formation of battery powered unmanned aerial vehicles. {Moreover,} in some scenarios, different agents may {be} equipped with different sensors, and different sensing {radiuses} will result in directed communication graphs. Therefore, compared with undirected graphs, cooperative control of MASs over general directed graphs is considered to be much more practical in literature \cite{ZhangHW2014}. Yet, the challenge of dealing with directed graphs has been well recognized due to the asymmetry of the Laplacian matrix \cite{ZhangHW2014}. Although the observer in \cite{MHNSPLBL2016} was also applied to general directed graphs, the convergence is only shown to be uniformly ultimately bounded. The results in \cite{wanghang2019TNN} were further extended to directed graphs in a very recent work \cite{WangHuang2020IJC-adapt}, but the directed graph is restricted to be acyclic, i.e., the Laplacian matrix of the graph is in a lower triangular form.

Based on the above mentioned statements, this paper aims to solve the cooperative tracking problem of heterogeneous systems with an uncertain leader over general directed graphs, with emphasis on simultaneous estimation of the state and parameters of the leader system. The leader system can generate a linear combination of muliti-tone sinusoidal signals with unknown frequencies. The follower agents take the Euler-Lagrange (EL) dynamics \cite{lewis2004Book-robot}, which can describe the behaviour of a large class of engineering systems, such as mechanical systems.
It is worth mentioning that due to the separation principle, the distributed observer designed in this paper can also be used to solve more classes of cooperative tracking problems of MASs other than EL systems. The main contributions of this paper are summarized as follows.
\begin{enumerate}
\item It solves a new distributed parameter estimation problem of multi-agent systems with an uncertain leader over general directed graphs and shows that this observer design can be applied to many classes of cooperative tracking problems, such as multiple EL systems. The technical challenges lie in the facts that the leader's dynamic knowledge is unaccessible to any follower node and the adjacency/Laplacian matrix of the communication network is not symmetric.
\item  A new tool for analysing parameter convergence of adaptive observers is developed, which enriches design tools in adaptive control systems and is a complement of the fundamental result \cite[Lemma B.2.3]{marino1996nonlinear}.
\item Also, it brings two new design tools for cooperative control of MASs over directed graphs, i.e., two graph based Lyapunov equations. By using these Lyapunov equations, many results of MASs over undirected graphs can be extended to directed graphs.
\end{enumerate}
The rest of this paper is organized as follows. Section \ref{Sect-preli} presents some preliminaries of graph theory and matrix theory, and formulates the problem. A distributed adaptive observer is designed and analyzed in Section \ref{Sect-observer}, and a control law is proposed in Section \ref{Sect-controller}. A simulation example is given in Section \ref{Sect-example} to illustrate the efficacy of the proposed algorithms, and
Section \ref{Sect-Con} concludes the paper.

%=================================================================
\section{Preliminaries and problem formulation} \label{Sect-preli}
%=================================================================
This section introduces notations, some background of graph theory and M-matrix, and formulates the cooperative tracking problem.

%=============================================
\subsection{Notations}\label{Subsect-notation}
%=============================================
The notations used throughout this paper is rather standard. The identity matrix with appropriate dimensions is $I$, while $I_l \in \mathds{R}^{l \times l}$ means an {identity} matrix with specific dimensions. For a vector $ x\in \mathds{R}^n$, $x_i$ denotes the $i$-th component of $x$, and it can be denoted by $x=[x_i]$. Similarly, for a matrix $A=[a_{ij}]\in \mathds{R}^{n\times n}$, $a_{ij}$ is its $(i,j)$-th entry. Matrix $A$ is called positive (nonnegative), denoted by $A \succ 0$ ($A \succeq 0$), if all its entries $a_{ij}$ are positive (nonnegative). Matrix $A>0$ ($A \geq 0$) means it is positive definite (positive semidefinite).  The spectral radius of matrix $A$ is denoted by $\varrho(A)$. For any matrix $A$, $\sigma_{\min}(A)$ and $\sigma_{\max}(A)$ denote the minimal and maximal singular values of $A$, respectively.
 The Kronecker product is denoted by $\otimes$. Let $\|\cdot\|$ denote both Euclidian norm of a vector and induced 2-norm of a matrix, i.e., $\|A\|=\max_{\|x\|=1}\|Ax\|$, and $\|\cdot\|_F$ denote the Frobenius norm of a matrix.  For $X_1,\cdots,X_k\in \mathds{R}^{n\times m}$, let $ \mbox{col }(X_1,\cdots,X_k)=\left[ X_{1}^T,\cdots,X_{k}^T\right]^T.$  Denote $\mathds{1}_n= \col(1,\cdots,1)\in \mathds{R}^{n}$.
 For matrix $S = [s_1, s_2, \dots, s_n] \in \mathds{R}^{m\times n}$ with $s_i\in \mathds{R}^{m}$, $\hbox{vec}(S)=\col\left(s_1,\cdots,s_n\right) \in \mathds{R}^{mn}.$
 A diagonal matrix is denoted by
$$
  \textnormal{block diag}(\gamma_1,\dots,\gamma_n) =\left[
                 \begin{matrix}%{ccc}
                 \gamma_1 &\cdots  &  0 \\
                 \vdots& \ddots &  \vdots \\
                 0&  \cdots & \gamma_n\\
                 \end{matrix}
                 \right],
$$
%{\blue [I suggest not using smallmatrix that looks odd. TAC uses 9pt and smallmatrix will look tiny.   ]}
% where $\gamma_i$ can either be a scalar, a vector or a matrix.
Let $\diag(\cdot)$ be such a function that
\begin{align}
% \nonumber to remove numbering (before each equation)
  \diag(x) &=\textnormal{block diag}(x_1,\dots,x_n).\nonumber
\end{align}
 For  $x = \col(x_1, x_2, \dots, x_{2l})\in \mathds{R}^{2l}$, define
\begin{equation}\label{phimatrix}
  \phi(x)=\left[
                           \begin{array}{cccc}
                              -x_{2} &   x_{1}  & \cdots &  0  \\
                              \vdots &  \ddots & \ddots &  \vdots \\
                             0  &  \cdots & -x_{2l} &  x_{2l-1} \\
                           \end{array}
                         \right].
\end{equation}

%=============================================
\subsection{Graph theory}\label{Subsect-Graph}
%=============================================

For cooperative tracking problem, follower nodes collaborate with each other through a communication network, which can be modeled by a graph $\mathcal{G}(\mathcal{V}, \mathcal{E})$, where $\mathcal{V}=\{1,2,\dots,N\}$ is the node set and $\mathcal{E}= \mathcal{V} \times \mathcal{V}$ is the edge set. An edge from node $i$ to node $j$ can be denoted by an ordered pair $(i,j)$.  We write $(i,j)\in \mathcal{E}$, if there is a directed edge from node $i$ to node $j$. The topology of the graph $\mathcal{G}$ can be fully captured by the adjacency matrix $\mathcal{A}=[a_{ij}]\in \mathds{R}^{N \times N}$, where $a_{ij}>0$ if $(j,i)\in \mathcal{E}$, and $a_{ij}=0$ otherwise. If $a_{ij}>0$, then node $j$ is called a neighbor of node $i$. Denote $\mathcal{N}_i= \{j~|~a_{ij}>0,~j=1,2,\dots,N\}$ as the neighbor set of node $i$.
% In this paper, to simplify the development, we assume $a_{ij}=1$ if $(j,i)\in \mathcal{E}$, by noting that the same results hold for general $a_{ij}$.
The Laplacian matrix of the graph $\mathcal{G}$ is defined as $L=[l_{ij}]\in \mathds{R}^{N \times N}$, with $l_{ii}=\sum_{j=1}^N a_{ij}$ and $l_{ij}=-a_{ij}$ for $i \neq j$. Label the leader node as node $0$ and define an augmented graph consisting both leader node and all follower nodes, along with all corresponding edges, i.e., $\bar{\mathcal{G}}(\bar{\mathcal{V}}, \bar{\mathcal{E}})$, where $\bar{\mathcal{V}}=\{0, 1,2,\dots,N\}$ and $\bar{\mathcal{E}}=\bar{\mathcal{V}} \times \bar{\mathcal{V}}$. If there is a directed edge from the leader node to node $i$, then $a_{i0}=1$; otherwise $a_{i0}=0$. Denote $\bar{\mathcal{N}}_i= \{j~|~a_{ij}>0,~j=0,1,2,\dots,N\}$.  A graph is undirected, if $a_{ij}=a_{ji}$; otherwise, it is directed. A directed path is a sequence of edges $\{(i,j), (j,l), (l, m), \dots\}$  in a directed graph. An undirected path is defined similarly. An undirected graph is connected if there exists an undirected path between any two distinct nodes. A directed graph has a spanning tree, if there is a root node which have no incoming edges, and there is a directed path from the root node to every other node. A directed graph is said to be detailed balanced if there exist some positive real numbers $k_i$, such that $k_i a_{ij} = k_j a_{ji}$ for all $i,j$.
%{\blue[Do we need to define an undirected graph here? Only a directed graph is used in the paper.]}

%=============================================
\subsection{Matrix theory}\label{Subsect-Matrix}
%=============================================
\begin{defi}\cite{berman1994nonnegative} \label{Def-Mmatrix}
Matrix $A_M \in \mathbb{R}^{n\times n}$ is a nonsingular $M$-matrix, if it can be expressed in the following form
$$A_M=\beta_M I - B_M, \quad \beta_M > \varrho(B_M), \quad B_M \succeq 0.$$
If $\beta_M = \varrho(B_M)$, then $A_M$ is called a singular $M$-matrix.
\end{defi}
By definition, if $A_M=\left[a_{M,ij}\right]$ is a nonsingular $M$-matrix, then it has the following two properties:
\begin{enumerate}
 \item $a_{M,ii}>0, ~\forall i$, $a_{M,ij}\leq 0, \forall i\neq j$;
 \item all its eigenvalues have positive real parts.
\end{enumerate}
\begin{lem} \cite[Theorem 6.2.3 \& 6.2.7]{berman1994nonnegative} \label{nsMmatrix}
If $A_M$ is a nonsingular $M$-matrix, then $A_M^{-1}\succeq 0$. Moreover, if $A_M$ is irreducible, then $A_M^{-1}\succ 0$.
\end{lem}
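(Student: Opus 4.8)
The plan is to prove the result via the Neumann (matrix geometric) series for $A_M^{-1}$, exploiting the defining decomposition $A_M = \beta_M I - B_M$ from Definition \ref{Def-Mmatrix}. First I would factor out the positive scalar and write $A_M = \beta_M(I - C)$, where $C := \beta_M^{-1} B_M \succeq 0$. Because the spectral radius is homogeneous of degree one, $\varrho(C) = \varrho(B_M)/\beta_M < 1$ by the hypothesis $\beta_M > \varrho(B_M)$. This strict inequality is precisely the condition under which the series $\sum_{k=0}^{\infty} C^k$ converges to $(I-C)^{-1}$, so that
\[
A_M^{-1} = \beta_M^{-1}(I - C)^{-1} = \beta_M^{-1}\sum_{k=0}^{\infty} C^k .
\]

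For the first assertion, I would note that entrywise nonnegativity is preserved under matrix multiplication and addition: since $C \succeq 0$, every power $C^k \succeq 0$, hence each partial sum is nonnegative and the limit inherits $\succeq 0$. Scaling by the positive factor $\beta_M^{-1}$ does not alter the sign pattern, so $A_M^{-1} \succeq 0$.

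For the irreducibility claim, the key observation is that $A_M$ and $C$ share the same off-diagonal zero pattern, since the off-diagonal entries of $A_M$ are exactly $-\beta_M c_{ij}$. Thus $A_M$ irreducible forces $C$ irreducible, i.e.\ the directed graph associated with $C$ is strongly connected. Strong connectivity guarantees that for every ordered pair $(i,j)$ there is a directed walk from $i$ to $j$ of some length $k$, which makes $(C^k)_{ij} > 0$; the diagonal entries are already covered by the $k=0$ term $I$. Since all summands are nonnegative and for each $(i,j)$ at least one of them is strictly positive, the full series is entrywise positive, giving $A_M^{-1} \succ 0$.

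The main obstacle I anticipate is the careful bookkeeping in the irreducible case, namely invoking the standard combinatorial fact that in a strongly connected digraph every ordered pair of nodes is joined by a walk of some length, and translating ``positive $(i,j)$ entry of $C^k$'' correctly through the graph interpretation of powers of a nonnegative matrix. By contrast, the nonnegativity part and the convergence of the Neumann series are routine once $\varrho(C) < 1$ has been established.
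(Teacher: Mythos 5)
Your proof is correct. The paper does not actually prove this lemma -- it is imported verbatim as a citation to Theorem 6.2.3 and 6.2.7 of \cite{berman1994nonnegative} -- so there is no in-paper argument to compare against; your Neumann-series derivation $A_M^{-1} = \beta_M^{-1}\sum_{k\geq 0} C^k$ with $C = \beta_M^{-1}B_M$, $\varrho(C)<1$, is the standard textbook proof of exactly this fact, and both the nonnegativity step and the irreducibility step (strong connectivity of the digraph of $C$ forcing some power $C^k$ to have a positive $(i,j)$ entry for every pair) are handled soundly.
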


\begin{lem} \cite{Fisherfuller1958,Hershkowitz1992}\label{dstable} Suppose $A_M \in \mathds{R}^{n \times n}$ has positive leading principle minors. Then there exists a positive diagonal matrix $D_M$ such that $D_MA_M$ has simple positive eigenvalues.
\end{lem}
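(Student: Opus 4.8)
The plan is to prove the statement by induction on the dimension $n$, combining a Schur-complement expansion of the characteristic polynomial with the continuity of eigenvalues under perturbation. The intuition that the induction makes rigorous is the following: if the diagonal entries of $D_M$ are chosen with widely separated scales $d_1 \gg d_2 \gg \cdots \gg d_n > 0$, then the $n$ eigenvalues of $D_M A_M$ approach the $n$ numbers $d_k\,\Delta_k/\Delta_{k-1}$, where $\Delta_k$ denotes the $k$-th leading principal minor of $A_M$ and $\Delta_0 \defeq 1$. Since every $\Delta_k > 0$ by hypothesis these limit values are positive, and since they sit at vastly different magnitudes they are distinct; the real work is to show that the genuine eigenvalues inherit reality, positivity and simplicity for a suitable finite choice of scales.

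For the base case $n=1$ we have $A_M = [a_{11}]$ with $a_{11} = \Delta_1 > 0$, so $D_M = [1]$ already works. For the inductive step I would partition
\[
A_M = \begin{bmatrix} B & p \\ q^\top & a_{nn} \end{bmatrix},
\]
where $B \in \mathds{R}^{(n-1)\times(n-1)}$ is the leading $(n-1)\times(n-1)$ block. The leading principal minors of $B$ coincide with $\Delta_1,\dots,\Delta_{n-1}$ and are therefore positive, so the induction hypothesis supplies a positive diagonal $D_1$ for which $D_1 B$ has $n-1$ distinct positive eigenvalues $\mu_1 > \cdots > \mu_{n-1} > 0$. I then take $D_M$ to be the block-diagonal matrix with blocks $c\,D_1$ and $1$, and study the spectrum of $D_M A_M$ as the scaling parameter $c \to \infty$.

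The key computation is a Schur-complement expansion of $\det(\lambda I - D_M A_M)$ with respect to the top-left block $\lambda I - c D_1 B$, which is invertible whenever $\lambda$ stays bounded and $c$ is large. Letting $c \to \infty$ one finds that $n-1$ eigenvalues grow like $c\mu_i$, hence are positive, large, and mutually distinct, while the remaining eigenvalue converges to the Schur-complement value $a_{nn} - q^\top B^{-1} p = \det A_M / \det B = \Delta_n / \Delta_{n-1} > 0$ (the factor $D_1$ cancels in $q^\top (D_1 B)^{-1} D_1 p = q^\top B^{-1} p$). Thus in the limit the $n$ eigenvalues approach $n$ distinct positive reals.

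Finally I would promote this limiting statement to an exact one for large but finite $c$. Since the coefficients of the characteristic polynomial of $D_M A_M$ depend continuously on $c$ and, after the appropriate rescaling of the large roots, converge to polynomials whose roots are these $n$ distinct positive numbers, each limit value is surrounded, for $c$ large enough, by a small disk containing exactly one eigenvalue. The main obstacle is ruling out that such an eigenvalue is complex: I would take each disk about its real centre to be symmetric across the real axis, so that a non-real eigenvalue would force its complex conjugate into the same disk, contradicting the one-eigenvalue-per-disk count; hence every eigenvalue is real, being close to a positive number it is positive, and distinctness of the centres gives simplicity. Fixing $D_M$ at this value of $c$ completes the induction.
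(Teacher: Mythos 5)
The paper does not prove this lemma: it is quoted verbatim from the cited references (it is the classical Fisher--Fuller theorem, \cite{Fisherfuller1958,Hershkowitz1992}), so there is no in-paper argument to compare against. Judged on its own, your proof is correct and is in fact essentially the standard Fisher--Fuller induction, with one cosmetic difference: the classical argument takes $D_M=\textnormal{block diag}(D_1,\varepsilon)$ and sends $\varepsilon\to 0^+$, so that $n-1$ eigenvalues converge to the distinct positive eigenvalues of $D_1B$ and the last one tends to $0$ from the positive side (its sign read off from $\det(D_MA_M)>0$), whereas you scale the leading block up by $c\to\infty$; the two are equivalent up to an overall positive factor, which does not affect reality, positivity or simplicity of the spectrum. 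All the essential ingredients are present and sound: the Schur-complement identity $a_{nn}-q^{\top}B^{-1}p=\Delta_n/\Delta_{n-1}>0$, the separation of scales making the $n$ limit values distinct, and the conjugate-pair argument (one eigenvalue per real-centred disk forces reality) that converts the limiting picture into an exact statement for finite $c$. The only step a careful write-up should tighten is the claim that exactly one eigenvalue stays bounded while the other $n-1$ grow like $c\mu_i$: this follows either from continuity of the spectrum of $c^{-1}D_MA_M$, whose limit is block upper triangular with eigenvalues $\mu_1,\dots,\mu_{n-1},0$, combined with the identity $\det(D_MA_M)=c^{\,n-1}\det(D_1)\det(A_M)$ to pin the bounded eigenvalue at $\Delta_n/\Delta_{n-1}$, or from a Rouch\'e-type count applied to your Schur-complement factorization on a bounded region. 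With that detail filled in, the induction closes as you describe.
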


\begin{lem} \cite[Lemma 1]{Barkana2006Auto-mitigation} \label{matWC}
Consider matrices $W_M\in \mathds{R}^{m\times m}$ and $C_M\in \mathds{R}^{m\times m}$. If $C_M$ is diagonalizable
and has real and positive eigenvalues, then
$$W_MC_M=C^T_M W_M$$
has a symmetric positive definite solution for $W_M$ that makes $W_MC_M$ symmetric and positive definition.
\end{lem}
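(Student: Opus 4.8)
The plan is to exploit the diagonalizability hypothesis to construct $W_M$ explicitly, rather than attempting to solve the matrix equation abstractly. Since $C_M$ is diagonalizable with real positive eigenvalues, I would write $C_M = V \Lambda V^{-1}$, where $\Lambda = \textnormal{block diag}(\lambda_1, \dots, \lambda_m)$ collects the positive eigenvalues and the columns of the invertible matrix $V$ are the associated eigenvectors. The candidate I would propose is $W_M = (V^{-1})^T V^{-1}$.

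First I would verify that this $W_M$ qualifies as a symmetric positive definite matrix. Symmetry is immediate from $W_M^T = (V^{-1})^T V^{-1} = W_M$, and for any nonzero $x$ we have $x^T W_M x = \|V^{-1} x\|^2 > 0$ because $V^{-1}$ is nonsingular. Next I would substitute the factorization to obtain $W_M C_M = (V^{-1})^T V^{-1} V \Lambda V^{-1} = (V^{-1})^T \Lambda V^{-1}$. Since $\Lambda$ is diagonal and hence symmetric, the product $(V^{-1})^T \Lambda V^{-1}$ is manifestly symmetric; combined with the symmetry of $W_M$, this yields $W_M C_M = (W_M C_M)^T = C_M^T W_M^T = C_M^T W_M$, which is exactly the required identity.

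It remains to check positive definiteness of $W_M C_M$. For any nonzero $x$, set $y = V^{-1} x$, which is nonzero by nonsingularity of $V^{-1}$; then $x^T W_M C_M x = y^T \Lambda y = \sum_{i=1}^m \lambda_i y_i^2 > 0$ because every $\lambda_i > 0$. This establishes that the constructed $W_M$ is a symmetric positive definite solution making $W_M C_M$ symmetric and positive definite. The same computation can be read as a congruence: $V^T W_M V = I$ and $V^T (W_M C_M) V = \Lambda$, so $W_M$ and $W_M C_M$ are congruent to $I$ and $\Lambda$ respectively, which makes both properties transparent.

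The step I expect to be the crux is not a calculation but the choice of the right ansatz: recognizing that the congruence factor $W_M = (V^{-1})^T V^{-1}$ simultaneously symmetrizes $C_M$ (turning it into $\Lambda$ in the transformed coordinates) and inherits positive definiteness from $\Lambda$. Everything afterwards is routine verification. One caveat worth noting is that the eigenbasis $V$ is not unique, but any valid choice produces a valid $W_M$, so the nonuniqueness poses no difficulty.
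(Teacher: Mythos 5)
Your proof is correct. Note that the paper itself does not prove this lemma at all; it is imported verbatim as Lemma 1 of the cited reference (Barkana, 2006), so there is no in-paper argument to compare against. Your construction $W_M=(V^{-1})^TV^{-1}$ with $C_M=V\Lambda V^{-1}$ is the standard (and essentially the original) proof of this fact, and every step you give checks out: symmetry and positive definiteness of $W_M$, symmetry of $W_MC_M=(V^{-1})^T\Lambda V^{-1}$, the equivalence of that symmetry with the equation $W_MC_M=C_M^TW_M$, and positive definiteness via the congruence $V^T(W_MC_M)V=\Lambda$. The only point worth making explicit is that you need $V$ to be a \emph{real} invertible matrix for $W_M=(V^{-1})^TV^{-1}$ to be a real symmetric positive definite matrix; this does hold here, because a real matrix that is diagonalizable with all eigenvalues real admits a real eigenbasis (each eigenspace $\ker(C_M-\lambda_i I)$ is defined over $\mathds{R}$ and the dimensions sum to $m$), but as written you pass from ``diagonalizable'' to the factorization without flagging that the factorization can be taken over $\mathds{R}$. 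With that one sentence added, the argument is complete and self-contained, which is arguably an improvement over the paper's bare citation.
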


%===================================================
\subsection{Problem formulation}\label{Subsect-Prob}
%===================================================

Consider a group of $N$ follower agents, modeled by heterogeneous Euler-Lagrange dynamics \cite{lewis2004Book-robot}:
\begin{equation}\label{Model-follower}
  M_i(q_i)\ddot{q}_i+C_i(q_i,\dot{q}_i)\dot{q}_i+ G_i(q_i)=\tau_i, \quad i=1,2,\dots,N,
\end{equation}
where $q_i\in \mathds{R}^n$ is the vector of generalized coordinates,
$M_i\left(q_i\right)\in \mathds{R}^{n\times n}$ is the inertia matrix, $C_i\left(q_i,\dot{q}_i\right)\in \mathds{R}^{n\times n}$ is the {vector of} Coriolis and centripetal forces, ${G_i}\left(q_i\right)\in \mathds{R}^{n}$ is the vector of gravitational force, and $\tau_i\in \mathds{R}^{n}$ is the control torque.

The dynamics \eqref{Model-follower} satisfy the following three properties \cite{lewis2004Book-robot}:
\begin{prop}
$M_i(q_i)$ is symmetric and positive definite.
\end{prop}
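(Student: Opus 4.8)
The plan is to trace the inertia matrix $M_i(q_i)$ back to its origin in the kinetic energy of the $i$-th mechanical system, since both claimed properties are most transparent there. First I would recall that for a system described by the generalized coordinates $q_i \in \mathds{R}^n$, the total kinetic energy takes the quadratic form $T_i(q_i,\dot{q}_i) = \tfrac{1}{2}\dot{q}_i^T M_i(q_i)\dot{q}_i$; this identity is precisely how $M_i(q_i)$ enters the Euler--Lagrange equations \eqref{Model-follower}, and it is the object I would take as the working definition of the inertia matrix.

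Next I would make the kinematic dependence explicit. Writing the configuration of each constituent rigid body as a smooth function of $q_i$, the Cartesian translational and angular velocities are linear in $\dot{q}_i$ through the corresponding Jacobians $J_{i,k}(q_i)$. Summing the individual contributions, the kinetic energy assembles into $M_i(q_i) = \sum_k J_{i,k}^T(q_i)\, W_{i,k}\, J_{i,k}(q_i)$, where each $W_{i,k}$ is a constant symmetric positive-definite weight built from the corresponding mass and body inertia tensor. Symmetry is then immediate: every summand is of the Gram type $J^T W J$ with symmetric $W$, hence symmetric, and a finite sum of symmetric matrices is symmetric.

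For positive definiteness I would evaluate the quadratic form on an arbitrary $v \neq 0$, obtaining $v^T M_i(q_i)\, v = \sum_k (J_{i,k} v)^T W_{i,k} (J_{i,k} v) \geq 0$, with equality only if $J_{i,k} v = 0$ for every $k$, i.e.\ only if the generalized velocity $v$ induces no motion of any body. The key step, and the one I expect to be the main obstacle, is to conclude from this that $v=0$: this requires that the generalized coordinates form a \emph{minimal, non-redundant} parametrization of the configuration manifold, so that the stacked Jacobian has full column rank and no nonzero generalized velocity produces identically vanishing physical motion. Under this standing assumption on the coordinate chart---equivalently, that distinct generalized velocities $\dot{q}_i$ generate distinct physical velocity fields---the inequality becomes strict for every $v \neq 0$, yielding $M_i(q_i) > 0$. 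I would close by noting that this is exactly why the property is posed pointwise in $q_i$: it holds at each configuration where the chart is non-degenerate.
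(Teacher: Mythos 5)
Your argument is correct and is the standard textbook derivation: $M_i(q_i)$ is the Hessian of the kinetic energy $\tfrac{1}{2}\dot q_i^T M_i(q_i)\dot q_i$, assembles as a sum of Gram-type terms $J_{i,k}^T W_{i,k} J_{i,k}$, and is therefore symmetric and positive semidefinite, with strictness following from the stacked Jacobian having full column rank (i.e.\ the generalized coordinates being a minimal parametrization). You were right to single out that rank condition as the only non-trivial step. For comparison: the paper does not prove this statement at all --- it lists it as Property~1 of Euler--Lagrange dynamics and cites the robotics literature \cite{lewis2004Book-robot}, so your derivation is exactly the argument that underlies the citation rather than an alternative to anything in the paper. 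The only thing worth adding if you wanted a fully self-contained statement is to make the minimality of the coordinate chart an explicit standing hypothesis (as the robotics texts do implicitly for serial manipulators such as the two-link arm in Section~V), since without it $M_i(q_i)$ can degenerate to positive semidefinite at singular parametrizations.
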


\begin{prop}\label{prop1}
$M_i(q_i)\ddot{q}_i+C_i(q_i,\dot{q}_i)\dot{q}_i+G_i(q_i)=Y_i(q_i,\dot{q}_i,\ddot{q}_i)\Theta_i$, where $Y_i(q_i,\dot{q}_i,\ddot{q}_i)\in \mathds{R}^{n\times q}$ is a known regression matrix and $\Theta_i\in \mathds{R}^{q}$ is a constant vector consisting of the parameters of \eqref{Model-follower}.
\end{prop}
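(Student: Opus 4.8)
The plan is to exploit the fact that the left-hand side of \eqref{Model-follower} is produced by the Euler--Lagrange operator acting on a Lagrangian, and that this operator is \emph{linear} in the physical inertial parameters of the mechanism. First I would write the Lagrangian of the $i$-th manipulator as $\mathcal{L}_i(q_i,\dot q_i)=T_i(q_i,\dot q_i)-V_i(q_i)$, with kinetic energy $T_i=\tfrac12\dot q_i^T M_i(q_i)\dot q_i$ and gravitational potential $V_i(q_i)$, so that applying $\tfrac{d}{dt}\tfrac{\partial}{\partial\dot q_i}-\tfrac{\partial}{\partial q_i}$ reproduces exactly the three terms $M_i(q_i)\ddot q_i$, $C_i(q_i,\dot q_i)\dot q_i$ and $G_i(q_i)$. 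In particular the Coriolis/centripetal term is built from the Christoffel symbols formed out of the partial derivatives $\partial M_i/\partial q_i$.

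The core step is a parameterisation of the energies. For a rigid body the kinetic energy is a quadratic form in the body velocities whose coefficients are the mass, the three components of the first mass moment, and the six independent entries of the inertia tensor, while the gravitational potential is an affine function of the mass and first moment. Collecting these standard inertial parameters of every link of agent $i$ into a single constant vector $\Theta_i\in\mathds{R}^{q}$, both $T_i$ and $V_i$ become \emph{linear} in $\Theta_i$, with coefficients that are known functions of $(q_i,\dot q_i)$ alone.

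Because the Euler--Lagrange operator is a composition of linear operations --- partial differentiation in $q_i$ and $\dot q_i$, together with the total time derivative, which introduces only $\ddot q_i$ --- linearity in the constant vector $\Theta_i$ is preserved throughout. Hence the whole left-hand side can be written as $\Phi_i(q_i,\dot q_i,\ddot q_i)\,\Theta_i$ for some matrix $\Phi_i$ whose entries are known functions of $(q_i,\dot q_i,\ddot q_i)$; setting $Y_i:=\Phi_i$ yields the claimed factorisation, and its entries are computable from the geometry of the mechanism, establishing that $Y_i$ is indeed a known regression matrix.

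I expect the main obstacle to be bookkeeping rather than anything conceptual: one must verify that differentiating the parameter-affine $M_i(q_i)$ to form the Christoffel symbols, and hence $C_i$, does not destroy the affine-in-$\Theta_i$ structure, and that the total time derivative contributes $\ddot q_i$ linearly. Both facts follow because differentiation with respect to $q_i$, $\dot q_i$ and $t$ acts only on the configuration-dependent coefficients and leaves the constant $\Theta_i$ untouched, so no genuine difficulty arises once the energies have been shown to be linear in $\Theta_i$.
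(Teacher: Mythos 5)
The paper does not prove this statement at all: it is listed as one of the three standard properties of Euler--Lagrange dynamics and is simply cited from the robotics textbook \cite{lewis2004Book-robot}, so there is no in-paper argument to compare against. Your proposal is the standard derivation that underlies that citation, and it is essentially correct: the kinetic energy $\tfrac12\dot q_i^T M_i(q_i)\dot q_i$ and the gravitational potential are affine in the link masses, first mass moments, and inertia-tensor entries; collecting these into $\Theta_i$ and observing that the Euler--Lagrange operator $\tfrac{d}{dt}\tfrac{\partial}{\partial\dot q_i}-\tfrac{\partial}{\partial q_i}$ differentiates only the known coefficient functions of $(q_i,\dot q_i)$ (the Christoffel symbols built from $\partial M_i/\partial q_i$ remain affine in $\Theta_i$, and the total time derivative introduces $\ddot q_i$ linearly) gives the factorisation $Y_i(q_i,\dot q_i,\ddot q_i)\Theta_i$. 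Two small remarks: the property as stated only asserts existence of \emph{some} such factorisation, so you need not worry that $Y_i$ and $\Theta_i$ are non-unique or non-minimal; and your argument implicitly assumes the model contains no friction or other non-conservative joint terms, which is consistent with \eqref{Model-follower} as written. Since the paper treats this as a citable textbook fact rather than something to be established, your reconstruction is the right level of detail and contains no gap.
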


\begin{prop}
$\dot{M}_i\left(q_i\right)-2C_i\left(q_i,\dot{q}_i\right)$ is skew symmetric, where $\dot{M}_i\left(q_i\right)$ is the time derivative of $M_i(q_i)$.
\end{prop}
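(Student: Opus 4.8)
The plan is to recall that the Coriolis/centripetal matrix $C_i(q_i,\dot q_i)$ in \eqref{Model-follower} is not an arbitrary matrix reproducing the torque, but the \emph{canonical} one obtained from the Lagrangian derivation of the Euler-Lagrange equations, whose entries are built from the Christoffel symbols of the inertia matrix. Writing $M_{i,kj}$ for the $(k,j)$-th entry of $M_i(q_i)$ and $q_{i,\ell}$ for the $\ell$-th generalized coordinate, this construction gives
\begin{equation*}
  [C_i]_{kj}=\frac12\sum_{\ell=1}^{n}\Big(\frac{\partial M_{i,kj}}{\partial q_{i,\ell}}+\frac{\partial M_{i,k\ell}}{\partial q_{i,j}}-\frac{\partial M_{i,\ell j}}{\partial q_{i,k}}\Big)\dot q_{i,\ell}.
\end{equation*}
Fixing this definition is the first and conceptually most important step: skew-symmetry of $\dot M_i-2C_i$ fails for a generic $C_i$ that merely satisfies $M_i\ddot q_i+C_i\dot q_i+G_i=\tau_i$, so the proof must make explicit that \eqref{Model-follower} is understood with $C_i$ in this form.

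With $C_i$ so chosen, I would compute both matrices entrywise. By the chain rule the $(k,j)$-th entry of $\dot M_i$ is $[\dot M_i]_{kj}=\sum_{\ell}(\partial M_{i,kj}/\partial q_{i,\ell})\,\dot q_{i,\ell}$. Subtracting $2C_i$ cancels the common term $\partial M_{i,kj}/\partial q_{i,\ell}$ and leaves
\begin{equation*}
  [\dot M_i-2C_i]_{kj}=\sum_{\ell=1}^{n}\Big(\frac{\partial M_{i,\ell j}}{\partial q_{i,k}}-\frac{\partial M_{i,k\ell}}{\partial q_{i,j}}\Big)\dot q_{i,\ell}.
\end{equation*}

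To finish, I would verify skew-symmetry by interchanging $k$ and $j$ and invoking the symmetry of $M_i(q_i)$ asserted in the first property. Swapping the indices gives $[\dot M_i-2C_i]_{jk}=\sum_{\ell}(\partial M_{i,\ell k}/\partial q_{i,j}-\partial M_{i,j\ell}/\partial q_{i,k})\,\dot q_{i,\ell}$; relabelling the entries of $M_i$ through $M_{i,\ell k}=M_{i,k\ell}$ and $M_{i,j\ell}=M_{i,\ell j}$ turns this expression into the negative of $[\dot M_i-2C_i]_{kj}$, which is exactly $[\dot M_i-2C_i]^T=-[\dot M_i-2C_i]$. The main obstacle is therefore not the algebra, which is routine, but the conceptual point flagged above: the statement concerns the specific Christoffel-symbol form of $C_i$, and the symmetry of $M_i$ must be used at precisely the relabelling step. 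Once the definition of $C_i$ is stated, the write-up is short.
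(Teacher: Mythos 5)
Your proof is correct and is the standard argument. The paper does not prove this property at all --- it simply states it as a known structural property of Euler--Lagrange dynamics with a citation to the robotics textbook \cite{lewis2004Book-robot} --- and your Christoffel-symbol computation is precisely the proof that appears in that literature. You are also right to flag the one genuinely important caveat: the factorization $C_i(q_i,\dot q_i)\dot q_i$ of the Coriolis/centripetal terms is not unique, and skew-symmetry of $\dot M_i - 2C_i$ holds only for the canonical Christoffel-symbol choice of $C_i$ (a weaker identity, $\dot q_i^T(\dot M_i - 2C_i)\dot q_i = 0$, holds for any valid factorization, and that weaker form would in fact suffice for the Lyapunov argument in the paper's controller proof, where the property is only ever used in the quadratic form $s_i^T(\dot M_i - 2C_i)s_i$ with $s_i$ appearing on both sides). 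The entrywise cancellation and the index relabelling via $M_{i,\ell k}=M_{i,k\ell}$ are both carried out correctly.
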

%{\blue [$\dot{M}_i\left(q_i\right)$ isn't well defined.]}
The leader agent is described by the following linear system
\begin{subequations}\label{leader}
\begin{align}
 \dot{v}&=S(\omega)v,\label{leadera}\\
y&=Ev,\label{leaderb}
 \end{align}
 \end{subequations}
where $v\in \mathds{R}^{m}$ is the state, $y\in \mathds{R}^{n}$ is the output, $S(\omega)\in\mathds{R}^{m\times m}$ is the system matrix with an unknown parameter $\omega\in \mathds{R}^{l}$, and $E\in \mathds{R}^{n\times m}$ is an unknown output matrix. Suppose $S(\omega)$ satisfies the following assumption.

\begin{ass} \label{Ass-S}
All the eigenvalues of $S(\omega)$ are simple with zero real parts.
\end{ass}
This assumption is rather standard in the literature of cooperative tracking of MASs \cite{CaiHuang2016TAC-EL,wu2017Auto-adaptive, wanghang2019TNN}.  Under Assumption \ref{Ass-S}, the leader agent \eqref{leader} can generate multi-tone sinusoidal signals, an important class of signals and frequently encountered in industry, such as vibration signals of a rotating machinery \cite{fuller1995CSM-active} and vertical oscillating motion of a ship due to high sea states\cite{marconi2002autonomous}. See also \cite{HsuOrtega1999TAC-globally, Astolfi2014TAC-semi} and references therein for more examples. Since both $\omega$ and $E$ are unknown, the generated multi-tone sinusoidal signals may have unknown frequencies, amplitudes, and initial phases.

Under Assumption \ref{Ass-S},  we can assume
\begin{equation}\label{skew}
 S(\omega) = \diag(\omega)\otimes a,
 \end{equation}
where $a=\left[
          \begin{smallmatrix}%{cc}
            0& 1 \\
            -1 & 0 \\
          \end{smallmatrix}
        \right]$,
$\omega=\textnormal{\col}\left(\omega_{01},\cdots,\omega_{0l} \right)\in \mathds{R}^{l}$, $l=m/2$.  Obviously, $S(\omega)$ is skew-symmetric.

This paper aims to solve the following cooperative tracking problem of heterogeneous EL systems with emphasis on simultaneous estimation of the state $v$, and the parameters $\omega$ and $E$ of the leader system \eqref{leader}.

\begin{prob}\label{Problem-track} (Cooperative tracking problem)
Consider a MAS consisting of $N$ follower nodes \eqref{Model-follower} and a leader node \eqref{leader}. Given two compact sets $\mathds{V}_0\subset \mathds{R}^{m}$ and $\mathds{W}\subset \mathds{R}^{l}$ containing the origins, design distributed control law $\tau_i$ for each follower node $i$, such that for any $v(0)\in \mathds{V}_0$, $q_i(0)\in \mathds{R}^{n}$, $\dot q_i(0)\in \mathds{R}^{n}$ and $\omega\in \mathds{W}$,
 the solution of the closed-loop system is uniformly bounded for all $t\geq0$ and
 $$ \lim_{t \to \infty}(q_i(t) - y(t)) =0,~~~~i=1,\cdots,N.$$
\end{prob}

To solve this problem, we need the following assumption on topology of the augmented communication graph $\bar{\mathcal{G}}$, which was recognized as the most mild condition on the communication graph in the literature of cooperative tracking of  MASs \cite{ZhangLewis2012Auto-adaptive}.

\begin{ass} \label{Ass-graph}
The augmented graph $\bar{\mathcal{G}}$ has a spanning tree with the leader node being the root.
\end{ass}

Define $G = \textnormal{block diag}(a_{10}, a_{20}, \dots, a_{N0})$ and $$H=L+G.$$ Then under Assumption \ref{Ass-graph}, $H$ is nonsingular and all its eigenvalues have positive real parts \cite{HuHong2007PA-leader}.

%=================================================================
\section{Adaptive observer design and analysis} \label{Sect-observer}
%=================================================================
To solve Problem \ref{Problem-track}, we first need to estimate the unknown parameters $\omega$ and $E$, and the state $v$ of the leader agent \eqref{leader}, which is the objective of this section.

\subsection{Observer design}
A distributed adaptive observer for each follower node $i$ $(i=1,2,\dots,N)$ is designed as
\begin{subequations}\label{compensator2}
\begin{align}
\dot{\hat{\eta}}_i &=S\left(\hat{\omega}_i \right)\hat{\eta}_i + \mu_1 d_i\sum\nolimits_{j \in \mathcal{\bar{N}}_i} a_{ij}(\hat{\eta}_j-\hat{\eta}_i), \label{compensator2b}\\
\dot{\hat{\omega}}_{i} &=\mu_2 d_i \phi \left( \sum\nolimits_{j \in \mathcal{\bar{N}}_i}a_{ij}(\hat{\eta}_j-\hat{\eta}_i )\right )\hat{\eta}_i,\label{compensator2c}\\
\dot{\hat{E}}_i &=d_i\sum\nolimits_{j\in \mathcal{\bar{N}}_i}a_{ij}(\hat{y}_j-\hat{y}_i)\hat{\eta}_i^{T}, \label{compensator2-E}
\end{align}
\end{subequations}
where $\hat{\eta}_i$, $\hat{\omega}_i$, and $\Hat{E}_i$ are the estimated values of $v$, $\omega$, and $E$, respectively,
$\mu_1>0$, $\mu_2>0$, and $d_i>0$ are the design parameters to be specified later,  $\hat{y}_i=\hat{E}_i \hat{\eta}_i$, and $\hat{\eta}_0=v$, $\hat{y}_0=y$.

Let the estimation errors be $\tilde{\eta}_i = \hat{\eta}_i - v$, $\tilde{\omega}_i = \hat{\omega}_i - \omega$, and $\tilde{E}_i = \hat{E}_i - E$. In the following development, we shall show how to design $\mu_1$, $\mu_2$, and $d_i$ such that
\begin{subequations}
\begin{align}
 &\lim_{t \to \infty} \tilde{\eta}_i(t) = 0, \label{des1}\\
& \lim_{t \rightarrow \infty} \tilde{\omega}_i(t) = 0, \label{des2}\\
& \lim_{t \rightarrow \infty} \tilde{E}_i(t) = 0. \label{des3}
\end{align}
\end{subequations}

\begin{rem}
It was shown that, for the detailed balanced graph, the dynamic compensator proposed in \cite{MHNSPLBL2016} can also estimate the state of the leader \eqref{leader} asymptotically. However, \cite{MHNSPLBL2016} did not consider the convergence issue of the estimation of the leader's dynamics $S$.
In contrast to the observer in \cite{CaiHuang2016TAC-EL}, the information of the leader's frequency is not required by any observer, as can be observed from \eqref{compensator2}.
Another advantages of the adaptive distributed observer \eqref{compensator2} is that it estimates only $\frac{m}{2}$ unknown parameters of $S$ by $\frac{mN}{2}$ equations rather than all the $m\times m$ entries of $S$ by $m^2N$ equations as in \cite{MHNSPLBL2016}. Therefore, the computational cost is significantly reduced.
Synchronization of heterogeneous agent dynamics represented by equations similar to
\eqref{compensator2b}-\eqref{compensator2c}
 was studied in
\cite{yan2020,hu2020}, called autonomous synchronization, for a leaderless network.
The technical challenge was caused by a specified structure of \eqref{compensator2b} while a free design of observer structure
 is allowed in this paper.  However, the  technical challenge of this paper, compared with \cite{yan2020,hu2020}, is that only $\hat{\eta}_i$, instead of $\hat{\omega}_i$, is allowed to transmit among the agents.
\end{rem}

Let $D=\textnormal{block diag}\left(d_1,\cdots,d_N\right)$,
$e_{vi}=\sum_{j \in \mathcal{\bar{N}}_i}a_{ij} (\hat{\eta}_j-\hat{\eta}_i)$,  $e_{v}=\col(e_{v1},\cdots,e_{vN})$, $\bar{v}=\mathds{1}_N\otimes v$, $\bar{\omega}=\mathds{1}_N \otimes \omega$, $\hat{\omega}=\col\left(\hat{\omega}_1,\cdots,\hat{\omega}_N \right)$, $\hat{\eta}=\col\left(\hat{\eta}_1,\cdots,\hat{\eta}_N\right)$.
Then, we have
\begin{align*}
\dot{\hat{\eta}}&=S_d\left(\hat{\omega}\right)\hat{\eta} +\mu_1\left(D\otimes I_m\right)e_{v},\\
\dot{\hat{\omega}}&=\mu_2\phi_d(e_{v})\left(D\otimes I_m\right)\hat{\eta},
\end{align*}
where
\begin{align}\phi_d\left(e_v\right)&=\textnormal{block diag} \left(\phi(e_{v1}),\cdots,\phi(e_{vN})\right),\nonumber\\
S_d\left(\hat{\omega}\right)&=\textnormal{block diag} \left(S\left(\hat{\omega}_1\right),\cdots,S\left(\hat{\omega}_N\right)\right).\nonumber
\end{align}
By definitions of $e_v$, $H$, $\hat{\eta}$ and $\bar{v}$ and by noting $\hat{\eta}_0= v$, we can show that
\begin{equation} \label{eqev}
e_v =-(H\otimes I_m)(\hat{\eta}-\bar{v}).
\end{equation}
Then, by using the above notations, we have the following compact form
 \begin{subequations}\label{noleader}\begin{align}
\dot{\hat{\eta}}&=S_d\left(\hat{\omega}\right)\hat{\eta} - \mu_1\left(DH\otimes I_m\right)(\hat{\eta}-\bar{v}),\label{noleader1}\\
\dot{\hat{\omega}}&=\mu_2\phi_d(e_{v})\left(D\otimes I_m\right)\hat{\eta}.\label{noleaderd3}
\end{align}\end{subequations}
Define $\tilde{{\eta}}=[\tilde{\eta}_i]=\hat{\eta} - \bar{v}$ and $\tilde{\omega}=[\tilde{\omega}_i]= \hat{\omega} - \bar{\omega}$.  We have
\begin{subequations}\label{leadererror}\begin{align}
\dot{\tilde{\eta}} &=\left(I_N\otimes S(\omega)- \mu_1\left(DH\otimes I_m\right)\right)\tilde{\eta} +S_d\left(\tilde{\omega}\right)\hat{\eta},\label{leadererror1}\\
\dot{\tilde{\omega}}&=\mu_2\phi_d(e_{v})\left(D\otimes I_m\right)\hat{\eta}. \label{leadererror2}
\end{align}
\end{subequations}
Using Lemma 3.3 in \cite{WangHuang2019IJC-coop}, \eqref{leadererror} can be further put into the following form
\begin{subequations}\label{creeq2}
\begin{align}
% \nonumber to remove numbering (before each equation)
  \dot{\tilde{\eta}} &=\left(I_N\otimes S\left(\omega\right)-\mu_1DH\otimes I_{m}\right)\tilde{\eta} - \phi_d^T (\hat{\eta})\tilde{\omega},\label{creeq2b}\\
\dot{\tilde{\omega}}&=\mu_2\phi_d\left(\hat{\eta}\right)\left(DH\otimes I_{m}\right)\tilde{\eta},\label{creeq2abbbb}
\end{align}
\end{subequations}
where  $\phi_d\left(\hat{\eta}\right)=\textnormal{block diag} \left(\phi(\hat{\eta}_{1}),\cdots,\phi(\hat{\eta}_{N})\right)$.

Also, the dynamics of the estimation error of the output matrix $\tilde{E}_i$ is
\begin{align}\label{output2}
\dot{\tilde{E}}_i=&d_i\sum\nolimits_{j\in \mathcal{\bar{N}}_i} a_{ij}(\tilde{E}_j-\tilde{E}_i)vv^{T}+\psi_i(t),
\end{align}
where
\begin{align*}
\psi_i(t)=&d_i\sum\nolimits_{j\in \mathcal{\bar{N}}_i} a_{ij}\big((\tilde{E}_j-\tilde{E}_i)v\tilde{\eta}_i^{T}\big)+d_iEe_{vi}\hat{\eta}_i^{T}\nonumber\\
&+d_i\sum\nolimits_{j\in \mathcal{\bar{N}}_i} a_{ij}\big(\tilde{E}_j\tilde{\eta}_j\hat{\eta}_i^{T}-\tilde{E}_i\tilde{\eta}_i\hat{\eta}_i^{T}\big).
\end{align*}
Define $\zeta_i=\mbox{vec}\big(\tilde{E}_i\big)$, $\zeta_0=\mbox{vec}\big(E\big)$ and $\pi_i=\mbox{vec}\big(\psi_i(t)\big)$. Equation \eqref{output2} can be written as
\begin{align}\label{output3}
\dot{\zeta}_i=&\left(vv^{T}\otimes I_n\right)d_i\sum\nolimits_{j\in \mathcal{\bar{N}}_i}a_{ij}\left(\zeta_j-\zeta_i\right)+\pi_i,
\end{align}
where
\begin{align}\label{output2pi}
\pi_i(t)=&d_i\sum\limits_{j\in \mathcal{\bar{N}}_i} a_{ij}\left[\left(\tilde{\eta}_i v^{T}\otimes I_n\right)\big(\zeta_j-\zeta_i)\big)+\left(\hat{\eta}_i\tilde{\eta}_j^{T}\otimes I_n\right)\zeta_j\right.\nonumber\\
&\left.-\left(\hat{\eta}_i\tilde{\eta}_i^{T}\otimes I_m\right)\zeta_i\right]+d_i\left(\hat{\eta}_ie_{vi}^{T}\otimes I_n\right)\zeta_0.
\end{align}
Further, let $\zeta=\col\left(\zeta_1,\cdots,\zeta_N\right)$ and $\pi=\col\left(\pi_1,\cdots,\pi_N\right)$. Then, \eqref{output3} can be put as
\begin{align}\label{outputcom}
\dot{\zeta}=&-\left(I_N\otimes\left(vv^{T}\otimes I_n\right)\right)\left(DH\otimes I_{nm}\right)\zeta+\pi\nonumber\\
=&-\left(DH\otimes\left(vv^{T}\otimes I_n\right)\right)\zeta+\pi.
\end{align}

The observer \eqref{compensator2} is equivalent to the system composed of \eqref{creeq2} and  \eqref{outputcom}
through the above analysis based on coordinate transformation. Therefore, the convergence of the former
can be verified by investigating the stability of the latter, which is the major  objective in Subsection~\ref{Convergence-as}.

\subsection{Some technical lemmas}

Before analysing the stability of systems  \eqref{creeq2} and  \eqref{outputcom},  we present several technical lemmas.
Let us begin with Lemma~\ref{Lm-WP}, which guides the choice of the design parameter $d_i$.

\begin{lem}\label{Lm-WP}
Under Assumption \ref{Ass-graph}, there exists a positive diagonal matrix $D \in \mathds{R}^{N \times N}$ such that
all the eigenvalues of $DH$ are  real, positive, and distinct. Moreover, there   exists a symmetric positive definite matrix $W \in \mathds{R}^{N \times N}$ such that
\begin{equation} \label{Eq-Q}
P=WDH\; \text{and}\;
Q=PDH+H^TDP
\end{equation} are symmetric positive definite.
\end{lem}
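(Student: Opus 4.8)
The plan is to prove the two assertions in turn: first I obtain the diagonal matrix $D$ from the $M$-matrix structure of $H$ via Lemma \ref{dstable}, then I obtain $W$ from Lemma \ref{matWC} and verify the positivity of $Q$ through a short algebraic identity.

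First I would recognize $H=L+G$ as a nonsingular $M$-matrix. Its off-diagonal entries are nonpositive (the off-diagonals of $L$ are $-a_{ij}\le 0$ and $G$ is diagonal and nonnegative), so $H$ is a $Z$-matrix; combined with the fact recalled after Assumption \ref{Ass-graph} that $H$ is nonsingular with all eigenvalues having positive real parts, this matches the characterization of a nonsingular $M$-matrix in Definition \ref{Def-Mmatrix}. A standard property of nonsingular $M$-matrices (from \cite{berman1994nonnegative}) is that all their principal minors are positive, in particular the leading principal minors. Lemma \ref{dstable} then applies to $A_M=H$ and yields a positive diagonal matrix $D$ such that $DH$ has simple positive eigenvalues; being simple they are distinct, and being eigenvalues of a real matrix that are asserted positive they are real. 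This establishes the first claim that the spectrum of $DH$ is real, positive, and distinct.

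Next, set $M=DH$. Since its eigenvalues are distinct, $M$ is diagonalizable, and since they are real and positive, Lemma \ref{matWC} applies with $C_M=M$: there exists a symmetric positive definite $W$ such that $WM=M^{T}W$ and $P=WM$ is symmetric positive definite. This is precisely the matrix $P=WDH$ of \eqref{Eq-Q}. To verify $Q>0$, observe that $M^{T}=(DH)^{T}=H^{T}D$ since $D$ is diagonal, so $Q=PM+M^{T}P$. The key identity is that $P=WM=M^{T}W$ gives $PM=M^{T}WM=M^{T}P$, whence $Q=2M^{T}WM$. Because $M=DH$ is nonsingular (as $H$ is nonsingular and $D$ is a positive diagonal matrix) and $W$ is symmetric positive definite, the congruence $M^{T}WM$ is symmetric positive definite, and therefore so is $Q$.

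The only delicate point is the first claim, the simultaneous realness, positivity, and distinctness of the spectrum of $DH$, which rests entirely on recognizing $H$ as a nonsingular $M$-matrix so that the hypothesis of Lemma \ref{dstable} (positive leading principal minors) is met. Once $D$ is in hand, the existence of $W$ is immediate from Lemma \ref{matWC}, and the positivity of $Q$ reduces to the one-line identity $PM=M^{T}P$ followed by the observation that a congruence of a symmetric positive definite matrix by a nonsingular matrix remains symmetric positive definite.
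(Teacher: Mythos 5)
Your proof is correct and follows essentially the same route as the paper: recognize $H$ as a nonsingular $M$-matrix so that Lemma \ref{dstable} yields $D$, apply Lemma \ref{matWC} to $DH$ to get $W$ and $P$, and conclude $Q=2H^TDWDH=2(DH)^TW(DH)>0$. Your write-up just spells out the intermediate identity $PM=M^TP$ that the paper leaves implicit.
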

\begin{proof}
Under Assumption \ref{Ass-graph}, $H$ is nonsingular, with all its eigenvalues having positive real parts, and all its off-diagonal entries being  nonpositive \cite{HuHong2007PA-leader}. Thus, $H$ is a nonsingular $M$-matrix, and all its leading principal minors are positive.
 By Lemma \ref{dstable}, there exists a positive diagonal matrix $D=\textnormal{block diag}(d_1,\cdots,d_N)$ such that all the eigenvalues of $DH$ are real, positive, and distinct. This implies that $DH$ is diagonalizable. Therefore, by Lemma \ref{matWC},  there exists a symmetric positive definite matrix $W \in \mathds{R}^{N \times N}$ such that $WDH$ is symmetric positive definite. The positive definiteness of $Q$ can be easily shown by observing
 $Q=2H^TDWDH.$
\end{proof}

\begin{rem}
For a general directed communication graph $\mathcal{G}$, the details on how to choose $D=\textnormal{block diag}(d_1,\dots,d_N)$ and $W$ can be found in \cite{Fisherfuller1958} and \cite{Barkana2006Auto-mitigation}, respectively. When  $\mathcal{G}$ is undirected, under Assumption \ref{Ass-graph}, we can pick $D=W=I$. When $\mathcal{G}$ is detailed balanced, under Assumption \ref{Ass-graph}, we can design $W=I$ and $D=\diag(d)$ with $d$ being
the left eigenvector of the Laplacian matrix $L$ corresponding to its zero eigenvalue.
\end{rem}

\begin{lem} \label{Lm-BDH}
Consider $H$ and $D$ as in Lemma \ref{Lm-WP}. There exists a positive diagonal matrix $B=\textnormal{block diag}(b_1,\cdots,b_N)$ such that
\begin{equation}\label{Eq-BDH}
\bar{H}=BDH+H^TDB
\end{equation} is positive definite.
\end{lem}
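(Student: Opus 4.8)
The plan is to reduce the statement to the classical fact that every nonsingular $M$-matrix admits a positive diagonal Lyapunov solution. I would set $A = DH$ and note that, since $D$ is diagonal, $A^T = (DH)^T = H^T D$, so that
$\bar{H} = BDH + H^T D B = BA + A^T B$. The goal is therefore to produce a positive diagonal $B = \textnormal{block diag}(b_1,\dots,b_N)$ with $BA + A^T B > 0$ (positive definite).

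First I would verify that $A = DH$ is itself a nonsingular $M$-matrix. By Lemma \ref{Lm-WP} all eigenvalues of $DH$ are real and positive, hence their real parts are positive. Moreover, since $d_i > 0$ and the off-diagonal entries of $H$ satisfy $h_{ij} = -a_{ij} \le 0$ for $i \neq j$, the off-diagonal entries of $A$ obey $(DH)_{ij} = d_i h_{ij} \le 0$. Thus $A$ has nonpositive off-diagonal entries and a spectrum contained in the open right half plane, and by the characterization of nonsingular $M$-matrices \cite[Theorem 6.2.3]{berman1994nonnegative} this forces $A = DH$ to be a nonsingular $M$-matrix. Note that this is exactly the converse of the two properties listed after Definition \ref{Def-Mmatrix}.

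Next I would invoke the diagonal-stability characterization of nonsingular $M$-matrices: for a matrix $A$ with nonpositive off-diagonal entries, being a nonsingular $M$-matrix is equivalent to the existence of a positive diagonal matrix $B$ for which $BA + A^T B$ is positive definite \cite{berman1994nonnegative}. Applying this to $A = DH$ yields a positive diagonal $B = \textnormal{block diag}(b_1,\dots,b_N)$ with $\bar{H} = BDH + H^T D B = BA + A^T B > 0$, which is precisely the claim.

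The hard part is not the positive definiteness as such. Indeed, Lemma \ref{matWC} already supplies a symmetric positive definite $W$ with $W(DH)$ symmetric positive definite, hence $W(DH) + (DH)^T W = 2 W DH > 0$; this is how Lemma \ref{Lm-WP} produced $P = WDH$. The genuine obstacle is the requirement that the multiplier be \emph{diagonal}: a generic matrix with positive eigenvalues (even a diagonalizable one) need not be diagonally Lyapunov stable. The argument therefore rests essentially on the sign pattern of $DH$, i.e.\ on combining $d_i > 0$ with $h_{ij} \le 0$, and this is the one place where the full $M$-matrix structure of $DH$, rather than mere diagonalizability with positive spectrum, is indispensable.
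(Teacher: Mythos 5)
Your proposal is correct and follows essentially the same route as the paper: establish that $DH$ is a nonsingular $M$-matrix and then invoke the classical diagonal Lyapunov stability property of nonsingular $M$-matrices (the paper cites Theorem 2.5.3 of \cite{merino1992topics} for this step, while you cite the equivalent characterization in \cite{berman1994nonnegative}). Your additional verification of the sign pattern and spectrum of $DH$, and your remark on why diagonalizability with positive spectrum alone would not suffice, are correct but do not change the substance of the argument.
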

\begin{proof}
Since $H$ is an $M$-matrix and $D$ is a positive diagonal matrix, $DH$ is an $M$-matrix. Theorem 2.5.3 of \cite{merino1992topics} completes the proof.
\end{proof}
\begin{rem}
Lemmas \ref{Lm-WP} and \ref{Lm-BDH}  play a central role in the convergence analysis of observers. This will be shown in Lemmas \ref{Lm-unif-bund-eta} and \ref{etape}, and Theorem \ref{Thm-1}.
More importantly, equations \eqref{Eq-Q} and \eqref{Eq-BDH} provide two choices of the so-called graph related Lyapunov equations (cf. the Lyapunov equation \cite[Equation 4.12]{khalil2002nonlinear}), which can be used as building blocks for Lyapunov functions in the controller design and stability analysis of MASs over directed graphs. It is well known that analysis of MASs over undirected graphs relies heavily on {symmetry of certain matrices, such as Laplacian matrices}, reflecting the symmetric topology of undirected graphs, while the non-symmetric property of directed graphs constitutes the major challenge for analysis of MASs over directed graphs. The {merits} of equations \eqref{Eq-Q} and \eqref{Eq-BDH} are that they construct two symmetric matrices from the non-symmetric matrix $H$ of directed graphs. Therefore, by using these Lyapunov equations, many existing results of  MASs can be extended from undirected graphs or balanced directed graphs to general directed graphs
(see Example \ref{20210324 Example}).
This further enriches the Lyapunov function design tools in \cite{zhqzl2015} for MASs.
\end{rem}

The following example shows how to use Lemma \ref{Lm-WP} to extend results of \cite{HengsterLewis2014TAC-optimal} to general directed graphs. Due to space limitation, we only provide some key ideas of this extension, instead of detailed derivation.
\begin{exam}\label{20210324 Example}
The work \cite{HengsterLewis2014TAC-optimal} considers the cooperative optimal tracking problem of linear MASs with follower nodes
\begin{align}\label{ex1follower}
\dot{x}_i=A_ox_i+B_o u_i, \quad i=1,2,\dots,N,
\end{align}
and a leader node
\begin{align}\label{ex1leader}
\dot{x}_0=A_ox_0,%\nonumber.
\end{align}
where $x_i\in \mathds{R}^{n}$. Its distributed control law in \cite{HengsterLewis2014TAC-optimal} is designed as
\begin{align}\label{20210325 cont1}
u_i=c K \sum\nolimits_{j\in \mathcal{\bar{N}}_i}a_{ij}(x_j-x_i).
\end{align}
The convergence of the tracking error is based on the condition that $H=L+G$ in  \cite[equation (20)]{HengsterLewis2014TAC-optimal} should be real, positive, and distinct. However, this condition does not hold for general directed graphs. Using Lemma \ref{Lm-WP}, we can construct a matrix $D=\textnormal{block diag}(d_1,\cdots,d_N)$. Then, we modify \eqref{20210325 cont1} as
\begin{align}\label{20210325 cont2}
u_i=c d_i K \sum\nolimits_{j\in \mathcal{\bar{N}}_i}a_{ij}(x_j-x_i),
\end{align}
where $K=R_{o}^{-1}B^T_oP_1$ and $P_1$ is positive definite matrix satisfying
\begin{align}
&A^T_oP_1+P_1A_o+Q_o-P_1B_oR_o^{-1}B^T_oP_1=0,\nonumber
\end{align}
 with $Q_o$ and $R_o$ being some positive definite matrices and $$c>\frac{\sigma_{\max}\left(WDH\otimes (Q_o-K^T R_{o}K)\right)}{\sigma_{\min}\left((H^TDWDH)\otimes(K^TR_oK)\right)},$$
 where $W$ is defined in Lemma \ref{Lm-WP}.
 Let $x=\textnormal{\col}(x_1,\cdots,x_N)$ and $\tilde{x}=x-\mathds{1}\otimes x_0$. Then, the closed loop system composed of \eqref{ex1follower}, \eqref{ex1leader} and \eqref{20210325 cont2}, can be put in the following compact form
\begin{align}%\label{compactoptimal}
\dot{\tilde{x}}=(I_N\otimes A_o-cDH\otimes B_oK)\tilde{x}.\nonumber
\end{align}
Let $P_2=cWDH$ and pick a Lyapunov function candidate as $V(\tilde{x})=\tilde{x}^T\left(P_2\otimes P_1\right)\tilde{x}$. Then, after some mathematical manipulation and applying Lyapunov stability theory, it can be shown that $\lim\limits_{t\rightarrow\infty}\tilde{x}(t)= 0$. Thus the results in \cite{HengsterLewis2014TAC-optimal} is extended to general directed graphs satisfying Assumption \ref{Ass-graph}.
\end{exam}

The following two lemmas show the boundedness of some signals $\hat{\eta}_i$, $\dot{\hat{\eta}}_i$ and $\hat{\omega}_i$, which will be used for the convergence analysis of the observer.
\begin{lem}\label{Lm-unif-bund-eta}
Consider systems \eqref{leader} and \eqref{compensator2}.
 Given two compact sets $\mathds{V}_0\subset \mathds{R}^{m}$ and $\mathds{W}\subset \mathds{R}^{l}$ with $0\in \mathds{V}_0$ and $0\in \mathds{W}$. Under Assumptions \ref{Ass-S} and \ref{Ass-graph}, for any $\hat{\omega}_i(0)\in \mathds{W}$, $\omega\in \mathds{W}$, $\hat{\eta}_i(0)\in \mathds{V}_0$, $v(0)\in \mathds{V}_0$, $\mu_1>1$, and $\mu_2>0$,
 \begin{enumerate}
 \item $\hat{\eta}_i(t)$ is uniformly bounded for $t\geq 0$; and there exists a $t_0$ such that ${\hat{\eta}}_i(t)$ is uniformly bounded, regardless of $\mu_1$ and $t\geq t_0$;
  \item $\dot{\hat{\eta}}_i(t)$ is uniformly bounded for $t\geq 0$; and there exists a $t_1$ such that $\dot{\hat{\eta}}_i(t)$ is uniformly bounded, regardless of $\mu_1$ and $t\geq t_1$; %and
  \item $ S({\hat{\omega}}_i(t))\hat{\eta}_i(t)$ is uniformly bounded, regardless of $\mu_1$ and $t$.
\end{enumerate}
\end{lem}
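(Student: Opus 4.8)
The plan is to lean on two structural facts: (i) by \eqref{skew} the matrix $S(\cdot)$ is skew-symmetric \emph{for every value of its argument}, so $\xi^{T}S(\hat\omega_i)\xi=0$ for all $\xi$; and (ii) the leader state has constant norm, since $\tfrac{d}{dt}\|v\|^{2}=2v^{T}S(\omega)v=0$ gives $\|v(t)\|=\|v(0)\|$, whence $\bar v$ and $\dot{\bar v}$ are uniformly bounded over the compact sets $\mathds{V}_0,\mathds{W}$. For claim~(1) I would take $B=\textnormal{block diag}(b_1,\dots,b_N)$ as in Lemma~\ref{Lm-BDH} and use $V_1=\tfrac12\hat\eta^{T}(B\otimes I_m)\hat\eta$. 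Differentiating along \eqref{noleader1} and using that $B$ is diagonal, the term $\hat\eta^{T}(B\otimes I_m)S_d(\hat\omega)\hat\eta=\sum_i b_i\,\hat\eta_i^{T}S(\hat\omega_i)\hat\eta_i$ vanishes by (i) \emph{regardless of} $\hat\omega$ --- this is the key point, as the parameter estimate drops out of the energy balance. What remains is $\dot V_1=-\mu_1\hat\eta^{T}(BDH\otimes I_m)(\hat\eta-\bar v)=-\tfrac{\mu_1}{2}\hat\eta^{T}(\bar H\otimes I_m)\hat\eta+\mu_1\hat\eta^{T}(BDH\otimes I_m)\bar v$, and since $\bar H>0$ by Lemma~\ref{Lm-BDH} this yields $\dot V_1\le-\tfrac{\mu_1}{2}\sigma_{\min}(\bar H)\|\hat\eta\|^{2}+\mu_1\|BDH\|\,\|\bar v\|\,\|\hat\eta\|$. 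Both terms carry the factor $\mu_1$, so $\dot V_1<0$ once $\|\hat\eta\|>2\|BDH\|\,\|\bar v\|/\sigma_{\min}(\bar H)$, a threshold independent of $\mu_1$; a standard ultimate-boundedness argument then gives~(1), with the post-transient bound independent of $\mu_1$ and only the entry time $t_0$ depending on it.

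For claim~(2) I would first observe that $\dot{\hat\omega}_i=\mu_2 d_i\phi(e_{vi})\hat\eta_i$ is uniformly bounded --- even though $\hat\omega_i$ itself need not be --- because $e_{vi}$ and $\hat\eta_i$ are bounded by~(1). Applying the same device to $\chi:=\dot{\hat\eta}$, differentiation of \eqref{noleader1} gives $\dot\chi=S_d(\dot{\hat\omega})\hat\eta+S_d(\hat\omega)\chi-\mu_1(DH\otimes I_m)(\chi-\dot{\bar v})$, and with $V_2=\tfrac12\chi^{T}(B\otimes I_m)\chi$ the block $\chi^{T}(B\otimes I_m)S_d(\hat\omega)\chi$ again vanishes by (i). The forcing terms $S_d(\dot{\hat\omega})\hat\eta$ (bounded, $\mu_1$-free) and $\mu_1(DH\otimes I_m)\dot{\bar v}$ (carrying $\mu_1$) are dominated by the dissipation $-\tfrac{\mu_1}{2}\chi^{T}(\bar H\otimes I_m)\chi$ exactly as before, so $\dot{\hat\eta}$ is uniformly ultimately bounded with a $\mu_1$-independent bound for $t\ge t_1$.

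Finally, claim~(3) I would read off the identity $S(\hat\omega_i)\hat\eta_i=\dot{\hat\eta}_i-\mu_1 d_i e_{vi}$: for a fixed $\mu_1$ this is immediate from (1)--(2), so the real content is the \emph{uniformity in} $\mu_1$, which amounts to bounding the scaled consensus forcing $\mu_1 e_{vi}$ (equivalently $\mu_1\tilde\eta$) independently of $\mu_1$. Here the high-gain character of the observer is decisive: writing \eqref{creeq2b} as $\dot{\tilde\eta}=(I_N\otimes S(\omega)-\mu_1 DH\otimes I_m)\tilde\eta+r$ with $r=-\phi_d^{T}(\hat\eta)\tilde\omega=S_d(\tilde\omega)\hat\eta$ and repeating the $B$-weighted estimate (the $S(\omega)$ term again annihilated because $B$ is diagonal) gives $\|\tilde\eta\|\le c\,\|r\|_{\infty}/\mu_1$ for a graph-dependent constant $c$, so that $\mu_1 e_{vi}$ inherits a bound of order $\|r\|_{\infty}$. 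The main obstacle is to close this estimate uniformly: the forcing $r=S_d(\tilde\omega)\hat\eta$ contains the parameter error $\tilde\omega$, which is not asserted bounded and may drift without persistent excitation, and $\|r\|$ feeds back --- through $S_d(\hat\omega)\hat\eta=\dot{\hat\eta}+\mu_1(DH\otimes I_m)\tilde\eta$ --- into the very quantity to be bounded. I expect the decisive step to be breaking this loop by combining the mode-wise orthogonality $\hat\eta_i^{T}S(\tilde\omega_i)\hat\eta_i=0$ with the $1/\mu_1$ contraction rate, so that the induced self-gain is strictly less than one and the bound on $S(\hat\omega_i)\hat\eta_i$ closes uniformly in $\mu_1$ and $t$.
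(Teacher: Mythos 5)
Your arguments for claims (1) and (2) coincide with the paper's: the same weighted energies $\hat{\eta}^T(B\otimes I_m)\hat{\eta}$ and $\dot{\hat{\eta}}^T(B\otimes I_m)\dot{\hat{\eta}}$ with $B$ from Lemma \ref{Lm-BDH}, the same cancellation $\sum_i b_i\xi_i^TS(\hat{\omega}_i)\xi_i=0$, the same observation that $\dot{\hat{\omega}}$ (not $\hat{\omega}$) is bounded, and the same comparison-lemma/ultimate-boundedness conclusion with a $\mu_1$-independent post-transient bound. That part is fine.

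Claim (3) is where you have a genuine gap, and you have in fact flagged it yourself without resolving it. Starting from $S(\hat{\omega}_i)\hat{\eta}_i=\dot{\hat{\eta}}_i-\mu_1 d_i e_{vi}$ forces you to bound $\mu_1\tilde{\eta}$ uniformly in $\mu_1$, and your $1/\mu_1$ contraction estimate gives $\mu_1\|\tilde{\eta}\|\leq c\,\|S_d(\tilde{\omega})\hat{\eta}\|_\infty$ with $c$ a fixed graph constant. Since $S_d(\tilde{\omega})\hat{\eta}=S_d(\hat{\omega})\hat{\eta}-(I_N\otimes S(\omega))\hat{\eta}$, the resulting inequality for $u:=\sup_t\|S_d(\hat{\omega})\hat{\eta}\|$ reads $u\leq \|\dot{\hat{\eta}}\|_\infty+c\|DH\|\,(u+\mathrm{const})$, whose self-gain $c\|DH\|$ is a graph-dependent number with no reason to be less than one; the ``mode-wise orthogonality'' $\hat{\eta}_i^TS(\tilde{\omega}_i)\hat{\eta}_i=0$ does not shrink it, because the quantity you need is the norm of $S(\tilde{\omega}_i)\hat{\eta}_i$, not its projection onto $\hat{\eta}_i$. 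You also cannot fall back on boundedness of $\tilde{\omega}$: in the paper that is only established in Lemma \ref{etape}, whose proof itself invokes part (3) of the present lemma, so assuming it here would be circular.

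The paper closes (3) by a completely different, $\mu_1$-free route: the algebraic identity $S(\hat{\omega}_i)\hat{\eta}_i=-\phi^T(\hat{\eta}_i)\hat{\omega}_i$ (Lemma 3.3 of the cited reference, a consequence of the structure \eqref{skew} and \eqref{phimatrix}), combined with the integrated form of the $\hat{\omega}$-dynamics,
\begin{equation*}
S_d(\hat{\omega}(t))\hat{\eta}(t)=-\phi_d^T(\hat{\eta}(t))\Big(\hat{\omega}(t_0)+\mu_2\int_{t_0}^{t}\phi_d(\hat{\eta}(\tau))\left(DH\otimes I_m\right)\tilde{\eta}(\tau)\,d\tau\Big),
\end{equation*}
in which $\mu_1$ never appears and only the already-established bounds on $\hat{\eta}$ and $\bar{v}$ enter. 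If you want to repair your argument, you should replace the observer-equation identity by this one; the high-gain/small-gain loop you describe does not close as stated.
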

\begin{proof}
See Appendix \ref{Appendix-A}.
\end{proof}
In the following lemma, we will show that $\phi(\hat{\eta}_i(t))$ is persistently exciting for a sufficiently large  $\mu_1$, we first recall the definition of persistently exciting.
\begin{defi}\label{defipe}\cite{Anderson1977}
A uniformly bounded piecewise continuous function $f : [0, +\infty) \mapsto \mathds{R}^{m \times n}$ is said to be persistently exciting (PE) in $\mathds{R}^{m}$  with a level of excitation $\alpha_0$ if there exist positive constants $\alpha_1$, $T_0$ and $T$, such that,
$$\alpha_1 I_m\geq \frac{1}{T}\int^{t+T}_{t} f(s) f^T (s) ds\geq\alpha_0 I_m,~~~~\forall t\geq T_0.$$
\end{defi}
The properties and various other equivalent
definitions of persistently exciting are given in \cite{narendra1987persistent,Pesun1996,adetola2014adaptive}.
\begin{lem}\label{etape}
Consider systems (\ref{leader}) and (\ref{compensator2}). Given two compact sets $\mathds{V}_0\subset \mathds{R}^{m}$ and $\mathds{W}\subset \mathds{R}^{l}$ with $0\in \mathds{V}_0$ and $0\in \mathds{W}$. Under Assumptions \ref{Ass-S} and \ref{Ass-graph}, for any $\hat{\omega}_i(0)\in \mathds{W}$, $\omega\in \mathds{W}$, $\hat{\eta}_i(0)\in \mathds{V}_0$, $v(0)\in \mathds{V}_0$
with $\textnormal{\col} (v_{2k-1} (0), v_{2k} (0)) \neq 0$, $k = 1, \cdots, l$, and $\mu_2>0$, there exists a sufficiently large $\mu_1$ such that $\phi(\hat{\eta}_i(t))$ is PE, and both $\hat{\omega}_i(t)$ and $\tilde{\omega}_i(t)$ are uniformly bounded, regardless of $\mu_1$ and $t$.
\end{lem}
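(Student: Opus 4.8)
The plan is to deduce the persistency of excitation (PE) of $\phi(\hat\eta_i)$ from that of $\phi(v)$, using that a large $\mu_1$ forces $\hat\eta_i$ to track the leader state $v$ closely. I begin with $\phi(v)$. By \eqref{skew}, each coordinate pair $\col(v_{2k-1},v_{2k})$ satisfies the planar rotation $\tfrac{d}{dt}\col(v_{2k-1},v_{2k})=\omega_{0k}\,a\,\col(v_{2k-1},v_{2k})$; since $a$ is skew-symmetric, $r_k^2:=v_{2k-1}(t)^2+v_{2k}(t)^2$ is conserved and therefore equals $v_{2k-1}(0)^2+v_{2k}(0)^2>0$, the positivity being exactly the standing hypothesis $\col(v_{2k-1}(0),v_{2k}(0))\neq 0$. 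Reading off \eqref{phimatrix}, the rows of $\phi(v)$ have pairwise disjoint supports, so $\phi(v(t))\phi^{T}(v(t))=\textnormal{block diag}(r_1^2,\dots,r_l^2)$ is a \emph{constant} positive definite matrix. Hence $\tfrac1T\int_t^{t+T}\phi(v)\phi^{T}(v)\,ds=\textnormal{block diag}(r_1^2,\dots,r_l^2)$ for all $t,T$, and $\phi(v)$ is PE in the sense of Definition \ref{defipe} with level $\alpha_0=\min_k r_k^2$.

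Next I estimate $\tilde\eta_i=\hat\eta_i-v$. Rewriting \eqref{compensator2b} as $\mu_1 d_i e_{vi}=\dot{\hat\eta}_i-S(\hat\omega_i)\hat\eta_i$ and invoking parts (2) and (3) of Lemma \ref{Lm-unif-bund-eta}, the right-hand side is bounded by a constant independent of $\mu_1$ for $t\ge t_1$; thus $\|e_{vi}(t)\|=O(1/\mu_1)$. Because $H$ is nonsingular, \eqref{eqev} gives $\tilde\eta=-(H^{-1}\otimes I_m)e_v$, so $\sup_{t\ge t_1}\|\tilde\eta_i(t)\|\le c/\mu_1$ with $c$ independent of $\mu_1$. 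Writing $\phi(\hat\eta_i)=\phi(v)+\phi(\tilde\eta_i)$ with $\|\phi(\tilde\eta_i)\|\le c'/\mu_1$ and expanding $\phi(\hat\eta_i)\phi^{T}(\hat\eta_i)$, the extra terms are bounded in norm by $2(\max_k r_k)c'/\mu_1+(c'/\mu_1)^2$; choosing $\mu_1$ large makes the averaged integral exceed $\tfrac{\alpha_0}{2}I_l$, so $\phi(\hat\eta_i)$ is PE.

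For the boundedness of $\hat\omega_i$, and hence of $\tilde\omega_i=\hat\omega_i-\omega$, I would not attempt an energy argument, since the asymmetry of $H$ blocks the cross-term cancellation available on undirected graphs; instead I use part (3) of Lemma \ref{Lm-unif-bund-eta} geometrically. As $a$ is orthogonal, the $k$-th coordinate pair of $S(\hat\omega_i)\hat\eta_i$ has norm $|\hat\omega_{ik}|\sqrt{\hat\eta_{i,2k-1}^2+\hat\eta_{i,2k}^2}$, which part (3) bounds by some $M$ uniformly in $t$ and $\mu_1$. By the estimate above, for $t\ge t_1$ and $\mu_1$ large, $\sqrt{\hat\eta_{i,2k-1}^2+\hat\eta_{i,2k}^2}\ge r_k-c/\mu_1\ge r_k/2$, so $|\hat\omega_{ik}(t)|\le 2M/r_k$. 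On $[0,t_1]$ the bound $\|e_{vi}\|=O(1/\mu_1)$ keeps $\dot{\hat\omega}_i$ small, so $\hat\omega_i$ remains in a bounded neighbourhood of $\hat\omega_i(0)\in\mathds{W}$; the two ranges together yield a uniform bound.

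The main obstacle I anticipate is this last lower bound: the estimate $\sqrt{\hat\eta_{i,2k-1}^2+\hat\eta_{i,2k}^2}\ge r_k/2$ is only available after the transient $t\ge t_1$, whereas near $t=0$ the pair $\col(\hat\eta_{i,2k-1},\hat\eta_{i,2k})$ may start at or pass through the origin, so the geometric argument must be patched with the slow-variation estimate on $[0,t_1]$, and one must verify that $t_1$ can be chosen independently of $\mu_1$. A secondary subtlety is that the PE perturbation estimate needs $\|\tilde\eta_i\|=O(1/\mu_1)$ uniformly across each averaging window $[t,t+T]$; this follows from the uniform (in $t\ge t_1$) bound already obtained, but it is the step where the directed-graph structure, through Lemma \ref{Lm-unif-bund-eta} and the nonsingularity of $H$, does the essential work.
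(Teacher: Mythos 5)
Your proposal is correct in substance but reaches the key estimate on $\tilde\eta_i$ by a genuinely different route than the paper. The paper's Appendix B runs a Lyapunov/comparison-lemma argument on $X_2(\tilde\eta)=\tilde\eta^T(B\otimes I_m)\tilde\eta$, where $B$ comes from the graph Lyapunov equation of Lemma \ref{Lm-BDH}; this yields $\dot X_2\le-\mu_1\delta X_2+\rho_1^*$ with $\rho_1^*$ independent of $\mu_1$ (via part (3) of Lemma \ref{Lm-unif-bund-eta}) and hence an ultimate bound $\|\tilde\eta_i\|\le\alpha(\mu_1)=O(\mu_1^{-1/2})$. You instead rearrange \eqref{compensator2b} as $\mu_1 d_i e_{vi}=\dot{\hat\eta}_i-S(\hat\omega_i)\hat\eta_i$, invoke parts (2)--(3) of Lemma \ref{Lm-unif-bund-eta} to bound the right side independently of $\mu_1$ for $t\ge t_1$, and invert $H$ to get $\|\tilde\eta_i\|=O(\mu_1^{-1})$; this bypasses Lemma \ref{Lm-BDH} entirely in this step and gives a sharper rate, at the cost of leaning harder on the uniformity of $t_1$ in $\mu_1$ (which does hold, since the transient in the paper's proof of Lemma \ref{Lm-unif-bund-eta} decays at rate $\mu_1\gamma$ against an initial condition growing only polynomially in $\mu_1$ — you correctly flag this as the point to verify). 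From there the two proofs reconverge: your perturbation expansion of $\phi(\hat\eta_i)\phi^T(\hat\eta_i)$ about the constant matrix $\phi(v)\phi^T(v)=\textnormal{block diag}(C_1^2,\dots,C_l^2)$ is the same mechanism as the paper's reverse-triangle-inequality argument on the block entries $A_k$, and your coordinatewise bound $|\hat\omega_{ik}|\le 2M/r_k$ is exactly the paper's inequality (B.5) obtained from $S(\hat\omega_i)\hat\eta_i=-\phi^T(\hat\eta_i)\hat\omega_i$. One imprecision: you attribute the control of $\hat\omega_i$ on $[0,t_1]$ to the bound $\|e_{vi}\|=O(1/\mu_1)$, but that bound is only available for $t\ge t_1$; what actually closes this interval is the weaker, $\mu_1$-independent $O(1)$ bound on $e_{vi}$ and $\hat\eta_i$ from part (1) of Lemma \ref{Lm-unif-bund-eta}, which bounds $\dot{\hat\omega}_i$ by a constant and hence keeps $\hat\omega_i$ within a fixed neighbourhood of $\hat\omega_i(0)$ over the (uniformly bounded) interval $[0,t_1]$. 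With that correction your argument is complete and, arguably, more elementary than the paper's on the decay step.
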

\begin{proof}
See Appendix \ref{Appendix-B}.
\end{proof}

To facilitate the exponential convergence analysis of our adaptive distributed observer, a technical lemma is first presented, which studies the stability of a different class of adaptive systems (c.f. system (B.29) in \cite{marino1996nonlinear}), and is interesting in itself, since it provides a main tool in the analysis of parameter convergence for adaptive observers.

\begin{lem}\label{Lm-exp}
Consider the following system
\begin{align}
  \dot{x} =&\left(I_n\otimes A_a- Y\otimes I_m\right)x- \left(I_n\otimes\psi^T (t)\right)z,\nonumber\\
\dot{z}=&\kappa \left(Y\otimes (\psi(t)P_a)\right)x,\label{creeq3exabbbb}
\end{align}
where $A_a\in  \mathds{R}^{m\times m}$ is such that $P_a A_a +A_a^T P_a \leq 0$ with $P_a\in  \mathds{R}^{m\times m}$ is a positive symmetric definite matrix, $x(t)\in \mathds{R}^{nm}$, $z(t)\in \mathds{R}^{ns}$, $\psi(t)\in \mathds{R}^{s\times m}$, $\kappa$ is a positive constant, and $Y\in \mathds{R}^{n\times n}$ is diagonalizable with positive eigenvalues.
If $\| \psi(t)\|$ and $\| \dot{\psi}(t)\|$ are uniformly bounded, and $\psi(t)$ is PE, then the equilibrium point at the origin is  globally exponentially stable.
\end{lem}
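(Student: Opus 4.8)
The plan is to treat \eqref{creeq3exabbbb} as a \emph{linear} time-varying system in the augmented state $\col(x,z)$, so that establishing uniform exponential stability of the origin automatically delivers \emph{global} exponential stability by homogeneity. The starting point is to manufacture symmetric building blocks from the nonsymmetric, diagonalizable $Y$. Since $Y$ has real positive eigenvalues, Lemma \ref{matWC} provides a symmetric positive definite $W\in\mathds{R}^{n\times n}$ with $\Gamma:=WY=Y^{T}W$ symmetric positive definite. The identity $WY=Y^{T}W$ also gives $WY^{-1}=Y^{-T}W$, so $WY^{-1}$ is symmetric, and $x^{T}(WY^{-1})x=(Y^{-1}x)^{T}(WY)(Y^{-1}x)>0$ shows it is positive definite as well. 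These two matrices will weight the $x$- and $z$-coordinates.

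Next I would take the radially unbounded, positive definite Lyapunov function
\[
V=x^{T}\!\left(W\otimes P_a\right)x+\tfrac{1}{\kappa}\,z^{T}\!\left((WY^{-1})\otimes I_s\right)z.
\]
Differentiating along \eqref{creeq3exabbbb} and using $(W\otimes P_a)(I_n\otimes A_a)=W\otimes(P_aA_a)$ and $(W\otimes P_a)(Y\otimes I_m)=(WY)\otimes P_a$, the indefinite cross terms produced by $-(I_n\otimes\psi^{T})z$ in $\dot x$ and by $\kappa(Y\otimes(\psi P_a))x$ in $\dot z$ cancel \emph{exactly}, precisely because $(WY^{-1})Y=W$. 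What remains is
\[
\dot V=x^{T}\!\left(W\otimes(P_aA_a+A_a^{T}P_a)\right)x-2\,x^{T}\!\left((WY)\otimes P_a\right)x\le -2\,x^{T}\!\left(\Gamma\otimes P_a\right)x\le -c\,\|x\|^{2},
\]
for some $c>0$: the first term is nonpositive since $W\succ0$ is combined with $P_aA_a+A_a^{T}P_a\le0$, and the second is strictly negative in $x$ because $\Gamma\otimes P_a>0$. Hence $V$ is nonincreasing, so $x,z$ are uniformly bounded and $x\in L_2$; boundedness of $\psi$ then makes $\dot x$ uniformly bounded, so $x(t)\to0$. This settles the $x$-component but leaves $z$ undetermined, which is the crux.

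The main obstacle is upgrading $\dot V\le -c\|x\|^{2}$, which is only negative \emph{semi}definite, to a genuine exponential estimate, and this is exactly where persistency of excitation enters. I would use the standard interval-decrease argument. Over any window $[t,t+T]$, integrating the $x$-dynamics shows that a small $\int_{t}^{t+T}\|x\|^{2}$ forces $\int_{t}^{t+T}\|(I_n\otimes\psi^{T}(s))z(s)\|^{2}\,ds$ to be small; since $\dot z=\kappa(Y\otimes(\psi P_a))x$ is itself controlled by $\|x\|$, the vector $z$ is nearly constant across the window, so this integral is close to $z(t)^{T}\big(I_n\otimes\int_{t}^{t+T}\psi\psi^{T}\,ds\big)z(t)\ge \alpha_0 T\,\|z(t)\|^{2}$ by Definition \ref{defipe}. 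The uniform bounds on $\|\psi\|$ and $\|\dot\psi\|$ are what make this ``frozen-$z$'' approximation quantitative. Combining the resulting lower bound on the $z$-component with the strict decrease already available in the $x$-component yields $V(t+T)\le\rho\,V(t)$ for some $\rho\in(0,1)$; iterating over successive windows gives uniform exponential decay of $V$, hence of $\col(x,z)$, and linearity makes this stability global. An equivalent route that makes the window argument most transparent is to first write $Y=T\Lambda T^{-1}$ (a \emph{real} diagonalization, since $Y$ is diagonalizable with real eigenvalues) and set $\bar x=(T^{-1}\otimes I_m)x$, $\bar z=(T^{-1}\otimes I_s)z$: the system then splits into $n$ decoupled copies $\dot{\bar x}_i=(A_a-\lambda_i I_m)\bar x_i-\psi^{T}\bar z_i$, $\dot{\bar z}_i=\kappa\lambda_i\psi P_a\bar x_i$ of the classical adaptive-observer error system with gain $\lambda_i>0$, to each of which the PE-based exponential stability result in the spirit of \cite{marino1996nonlinear} applies.
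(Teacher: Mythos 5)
Your first half is the paper's argument in disguise: setting $U:=WY^{-1}$ (which is symmetric positive definite, with $UY=W$ symmetric positive definite) turns your $V$ into exactly the paper's $V_0=x^T(UY\otimes P_a)x+\kappa^{-1}z^T(U\otimes I_s)z$; the cross-term cancellation, the bound $\dot V\le -2x^T((WY)\otimes P_a)x$, the resulting boundedness of $(x,z)$, and the conclusion $x\in L_2$ with $x(t)\to 0$ all coincide with Appendix C. The genuine divergence is in how PE is converted into convergence of $z$. The paper does \emph{not} run a window-contraction argument: it shows by contradiction, using the auxiliary function $\varphi(t)=\tfrac12\bigl[z^T(t+T)(U\otimes I_s)z(t+T)-z^T(t)(U\otimes I_s)z(t)\bigr]$, that $\|z\|$ cannot stay above any $\epsilon$, upgrades this to $z(t)\to0$ via monotonicity of $V_0$, and then obtains the exponential rate for free from the equivalence of uniform asymptotic and exponential stability for linear time-varying systems \cite[Theorem 4.11]{khalil2002nonlinear}. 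Your direct route would produce an explicit contraction factor per window, which is a real advantage, but the pivotal sentence is compressed to the point of being unjustified: ``integrating the $x$-dynamics'' does not by itself show that small $\int_t^{t+T}\|x\|^2$ makes $\int_t^{t+T}\|(I_n\otimes\psi^T(s))z(s)\|^2\,ds$ small, because $(I_n\otimes\psi^T)z=Fx-\dot x$ and $\int\|\dot x\|^2$ is not controlled by $\int\|x\|^2$. The implication is in fact true, but only after (i) upgrading $L_2$-smallness of $x$ to $L_\infty$-smallness on the window using boundedness of $\dot x$, and (ii) an integration by parts that exploits the uniform bound on $\tfrac{d}{ds}(\psi^T z)$ --- this is precisely where boundedness of $\dot\psi$ and of $\dot z$ enters, i.e.\ the classical Morgan--Narendra/Anderson chain, not a one-line consequence of the dynamics. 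Your closing alternative is the cleanest version of your strategy and is fully correct: since $Y$ admits a real diagonalization $T\Lambda T^{-1}$, the change of variables $\bar x=(T^{-1}\otimes I_m)x$, $\bar z=(T^{-1}\otimes I_s)z$ decouples \eqref{creeq3exabbbb} into $n$ independent classical adaptive-observer error systems with Hurwitz matrices $A_a-\lambda_i I_m$ and adaptation driven by $\psi P_a\bar x_i$, to which the PE-based exponential stability result in the spirit of \cite[Lemma B.2.3]{marino1996nonlinear} applies blockwise; that reduction sidesteps both the paper's contradiction argument and the delicate window bookkeeping, at the price of importing the single-system result rather than proving it.
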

\begin{proof}
See Appendix \ref{Appendix-C}.
\end{proof}
%Noted that, the system \eqref{}W-strictly passive
%definition proposed in \cite{Barkana2006Auto-mitigation}
%==========================================================================
\subsection{Convergence analysis of the observer} \label{Convergence-as}
%==========================================================================

Now, we are ready to present the first main result about  asymptotic convergence of the observers \eqref{compensator2b} and \eqref{compensator2c}.

\begin{thm}\label{Thm-1}
Consider systems \eqref{leader}, \eqref{compensator2b} and \eqref{compensator2c}. Given two compact sets $\mathds{V}_0\subset \mathds{R}^{m}$ and $\mathds{W}\subset \mathds{R}^{l}$ with $0\in \mathds{V}_0$ and $0\in \mathds{W}$. Under Assumptions \ref{Ass-S} and \ref{Ass-graph}, for any $\hat{\omega}_i(0)\in \mathds{W}$, $\omega\in \mathds{W}$, $\hat{\eta}_i(0)\in \mathds{V}_0$, $v(0)\in \mathds{V}_0$ with
$\textnormal{\col}(v_{2k-1} (0), v_{2k} (0)) \neq 0$,
$k = 1, \cdots, l$,  and $\mu_2>0$, there exists a sufficiently large $\mu_1$ such that $\hat{\eta}(t)$ and $\hat{\omega}(t)$ are uniformly bounded, and
\EQ
&&\lim\limits_{t\rightarrow\infty} \tilde{\eta} (t) =0,  \label{eq101a}\\
&&\lim\limits_{t\rightarrow\infty}\dot{\tilde{\omega}}(t)=0,  \label{eq101b}\\
&&\lim\limits_{t\rightarrow\infty}\tilde{\omega}(t)=0.  \label{eq101c}
 \EN
\end{thm}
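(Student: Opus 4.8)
The plan is to reduce the whole statement to the abstract exponential-stability result in Lemma \ref{Lm-exp}, after first harvesting the boundedness and persistency-of-excitation facts that the preceding lemmas already supply. First I would fix the design data: pick $D$ and $W$ as in Lemma \ref{Lm-WP}, so that $Y \defeq \mu_1 DH$ is diagonalizable with real, positive, distinct eigenvalues for every $\mu_1>0$, and keep $\mu_2>0$ arbitrary. By Lemma \ref{etape} there is a threshold such that for all sufficiently large $\mu_1$ the signal $\phi(\hat\eta_i(t))$ is PE for each $i$, while Lemma \ref{Lm-unif-bund-eta} gives uniform boundedness of $\hat\eta_i$ and $\dot{\hat\eta}_i$, hence of $\phi(\hat\eta_i)$ and of $\dot\phi(\hat\eta_i)=\phi(\dot{\hat\eta}_i)$ (using linearity of $\phi$), and of $\hat\omega_i,\tilde\omega_i$. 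Stacking over $i$, the matrices $\hat\eta$ and $\hat\omega$ are uniformly bounded, $\phi_d(\hat\eta)$ and $\dot\phi_d(\hat\eta)$ are uniformly bounded, and $\phi_d(\hat\eta)$ is PE as a block-diagonal regressor. This already delivers the uniform boundedness asserted in the theorem.

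The core step is to view the error system \eqref{creeq2} as an instance of the class in Lemma \ref{Lm-exp}, via the identification $x\leftrightarrow\tilde\eta$, $z\leftrightarrow\tilde\omega$, $A_a\leftrightarrow S(\omega)$, $Y\leftrightarrow\mu_1 DH$, $\psi\leftrightarrow\phi_d(\hat\eta)$, $\kappa\leftrightarrow\mu_2/\mu_1$, and, crucially, $P_a\leftrightarrow I_m$. With this choice the hypothesis $P_aA_a+A_a^TP_a\le 0$ holds with equality, since by \eqref{skew} the matrix $S(\omega)$ is skew-symmetric, so $S(\omega)+S(\omega)^T=0$; this is exactly where Assumption \ref{Ass-S} enters. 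Diagonalizability of $Y$ with positive eigenvalues is Lemma \ref{Lm-WP}, and the remaining hypotheses—uniform boundedness of $\|\psi\|$, $\|\dot\psi\|$ and persistency of excitation of $\psi$—are the facts collected above. Lemma \ref{Lm-exp} then yields global exponential stability of the origin of \eqref{creeq2}, i.e. $\tilde\eta(t)\to 0$ and $\tilde\omega(t)\to 0$, which are \eqref{eq101a} and \eqref{eq101c}. Finally \eqref{eq101b} is immediate: from \eqref{creeq2abbbb}, $\dot{\tilde\omega}=\mu_2\phi_d(\hat\eta)(DH\otimes I_m)\tilde\eta$ with $\phi_d(\hat\eta)$ uniformly bounded and $\tilde\eta\to 0$, so $\dot{\tilde\omega}(t)\to 0$.

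I expect the main obstacle to be the matching step itself rather than the verification of the individual hypotheses, and the difficulty is twofold. On the graph side, $DH$ is not symmetric—this is the whole reason general digraphs are hard—so the identification $Y=\mu_1 DH$ is legitimate only because Lemma \ref{Lm-WP}, backed by the auxiliary equation \eqref{Eq-BDH} of Lemma \ref{Lm-BDH}, produces a $D$ giving $DH$ a clean real spectrum and supplies the symmetric positive definite weights behind \eqref{Eq-Q}; these graph Lyapunov equations are precisely what substitute for the symmetry of $L$ used in the undirected analysis. On the observer side, $\phi_d(\hat\eta)$ is block-diagonal with a distinct, time-varying block $\phi(\hat\eta_i)$ per agent, so the persistency of excitation demanded by Lemma \ref{Lm-exp} must be secured uniformly across all agents and simultaneously with the inter-agent coupling carried by $Y\otimes I_m$; this is the delicate content of Lemma \ref{etape}, whose proof is what forces $\mu_1$ large, ensuring each $\hat\eta_i$ tracks the PE leader state closely enough to remain exciting. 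Once these two points are secured, the reduction to Lemma \ref{Lm-exp} is the natural engine that produces all three limits \eqref{eq101a}–\eqref{eq101c} at once.
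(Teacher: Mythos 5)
Your reduction of Theorem \ref{Thm-1} to Lemma \ref{Lm-exp} has a structural flaw that the paper's two-theorem architecture is specifically designed to avoid. In Lemma \ref{Lm-exp} the coupling terms are $I_n\otimes\psi^T(t)$ and $Y\otimes(\psi(t)P_a)$ for a \emph{single} regressor $\psi(t)\in\mathds{R}^{s\times m}$ (with $s=l$, $m=m$ under your identification), i.e.\ the same $l\times m$ block repeated in every agent slot. In the error system \eqref{creeq2} the coupling is $\phi_d^T(\hat\eta)$ and $\phi_d(\hat\eta)(DH\otimes I_m)$, where $\phi_d(\hat\eta)=\textnormal{block diag}(\phi(\hat\eta_1),\dots,\phi(\hat\eta_N))$ carries a \emph{different} block per agent. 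Your identification $\psi\leftrightarrow\phi_d(\hat\eta)$ is therefore not even dimensionally admissible ($\phi_d(\hat\eta)$ is $Nl\times Nm$, not $l\times m$), and no choice of a single $\psi$ makes $I_N\otimes\psi^T=\phi_d^T(\hat\eta)$ unless all the $\hat\eta_i$ coincide. The only consistent common regressor is $\psi=\phi(v)$, since $\phi_d(\bar v)=I_N\otimes\phi(v)$; but then \eqref{creeq2} equals the Lemma \ref{Lm-exp} system \emph{plus} a perturbation built from $\phi_d(\tilde\eta)$, and showing that this perturbation vanishes is exactly the conclusion \eqref{eq101a} you are trying to prove. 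So the reduction is circular as it stands.

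This is why the paper proves Theorem \ref{Thm-1} by a direct Lyapunov argument rather than by Lemma \ref{Lm-exp}: it takes $V=\tilde\eta^T(P\otimes I_m)\tilde\eta+\mu_2^{-1}\tilde\omega^T(W\otimes I_l)\tilde\omega$ with $P=WDH$ from Lemma \ref{Lm-WP}, uses the identity $P=WDH$ to cancel the cross terms involving $\phi(v)$ exactly, bounds the residual cross terms (which involve only $\phi_d(\tilde\eta)$) by $m^*\|\tilde\omega\|\,\|\tilde\eta\|^2$, and invokes the uniform bound $\omega^*$ on $\|\tilde\omega\|$ from Lemma \ref{etape} to choose $\mu_1$ so that $\dot V\le-\|\tilde\eta\|^2$. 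Barbalat's lemma then gives \eqref{eq101a}, hence \eqref{eq101b}; a second application of Barbalat to $\dot{\tilde\eta}$ yields $\phi_d^T(\hat\eta)\tilde\omega\to0$, and \eqref{eq101c} follows from the PE property of $\phi_d(\hat\eta)$ via a separate parameter-convergence lemma (not Lemma \ref{Lm-exp}). Lemma \ref{Lm-exp} is deployed only afterwards, in Theorem \ref{Thm-exp}, applied to the nominal system with regressor $\phi(v)$ and combined with the vanishing-perturbation result once Theorem \ref{Thm-1} has already supplied $\tilde\eta\to0$. Your peripheral observations (boundedness of $\hat\eta$, $\dot{\hat\eta}$, $\hat\omega$ from Lemmas \ref{Lm-unif-bund-eta} and \ref{etape}, skew-symmetry of $S(\omega)$, and the derivation of \eqref{eq101b} from \eqref{creeq2abbbb}) are all correct, but the central step of your proof does not go through.
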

\begin{proof}
Pick the following Lyapunov function candidate
\begin{equation}\label{reeq3b}
V =\tilde{\eta}^T\left(P\otimes I_m\right)\tilde{\eta}+\mu_2^{-1}\tilde{\omega}^T\left(W\otimes I_l\right)\tilde{\omega},
\end{equation}
where $P \in \mathds{R}^{N \times N}$ and $W\in \mathds{R}^{N \times N}$ are symmetric positive definite matrices designed in Lemma \ref{Lm-WP}.
The time derivative of \eqref{reeq3b} along \eqref{creeq2} is
\begin{align}
\dot{V}
%=&2\tilde{\eta}^T\left(P\otimes I_m\right)\dot{\tilde{\eta}}+2\mu_2^{-1}\tilde{\omega}^T\left(W\otimes I_l\right)\dot{\tilde{\omega}}\nonumber\\
%=&2\tilde{\eta}^T\left(P\otimes S\left(\omega\right)\right)\tilde{\eta}-\mu_1\tilde{\eta}^T\left((PDH+H^TDP)\otimes I_m\right)\tilde{\eta} \nonumber\\
%&-2\tilde{\eta}^T\left(P\otimes I_m\right) \phi_d^T (\hat{\eta})\tilde{\omega}
%+2\mu_2^{-1}\tilde{\omega}^T\left(W\otimes I_l\right)\dot{\tilde{\omega}} \nonumber\\
=&2\tilde{\eta}^T\left(P\otimes S\left(\omega\right)\right)\tilde{\eta} -\mu_1\tilde{\eta}^T\left((PDH+H^TDP)\otimes I_m\right)\tilde{\eta} \nonumber\\
&-2\tilde{\eta}^T\left(P\otimes I_m\right)\phi_d^T (\tilde{\eta})\tilde{\omega}-2\tilde{\eta}^T\left(P\otimes I_m\right)\phi_d^T (\bar{v})\tilde{\omega}\nonumber\\
&+2\mu_2^{-1}\tilde{\omega}^T\left(W\otimes I_l\right)\dot{\tilde{\omega}}.\nonumber
\end{align}
Since $S(\omega)$ is skew symmetric and $P$ is symmetric, $P\otimes S\left(\omega\right)$ is also skew symmetric. Considering $\phi_d (\bar{v})=I_N\otimes\phi(v)$ and equation \eqref{Eq-Q}, under Assumptions \ref{Ass-S} and \ref{Ass-graph}, we have
\begin{align}\label{reeq4bb}
\dot{V}=& -\mu_1\tilde{\eta}^T\left(Q\otimes I_m\right)\tilde{\eta}-2\tilde{\eta}^T\left(P\otimes I_m\right)\phi_d^T (\tilde{\eta})\tilde{\omega}\nonumber\\
%=&2\tilde{\eta}^T\left(P\otimes S\left(\omega\right)\right)\tilde{\eta}-\mu_1\tilde{\eta}^T\left((PH+H^TP)\otimes I_m\right)\tilde{\eta} \nonumber\\
&-2\tilde{\eta}^T\left(P\otimes \phi^T(v)\right)\tilde{\omega}+2\mu_2^{-1}\tilde{\omega}^T\left(W\otimes I_l\right)\dot{\tilde{\omega}}.
\end{align}
Substituting \eqref{creeq2abbbb} into \eqref{reeq4bb} leads to
\begin{align}
\dot{V} =& -\mu_1\tilde{\eta}^T\left(Q\otimes I_m\right)\tilde{\eta}\nonumber\\
&-2\tilde{\eta}^T\left(P\otimes I_m\right)\phi_d^T (\tilde{\eta})\tilde{\omega}-2\tilde{\eta}^T\left(P\otimes \phi^T(v)\right)\tilde{\omega}\nonumber\\
&+2\tilde{\omega}^T\left(W\otimes I_l\right)\phi_d\left(\hat{\eta}\right)\left(DH\otimes I_{m}\right)\tilde{\eta}\nonumber\\
% =& -\mu_1\tilde{\eta}^T\left(Q\otimes I_m\right)\tilde{\eta}\nonumber\\
% &-2\tilde{\eta}^T\left(P\otimes I_m\right)\phi_d^T (\tilde{\eta})\tilde{\omega}-2\tilde{\eta}^T\left(P\otimes \phi^T (v)\right)\tilde{\omega}\nonumber\\
% &+2\tilde{\omega}^T\left(W\otimes I_l\right)\phi_d\left(\tilde{\eta}\right)\left(DH\otimes I_{m}\right)\tilde{\eta}\nonumber\\
% &+2\tilde{\omega}^T\left(W\otimes I_l\right)\left(I_N\otimes\phi(v)\right)\left(DH\otimes I_{m}\right)\tilde{\eta}\nonumber\\
=& -\mu_1\tilde{\eta}^T\left(Q\otimes I_m\right)\tilde{\eta}-2\tilde{\eta}^T\left(P\otimes I_m\right)\phi_d^T (\tilde{\eta})\tilde{\omega}\nonumber\\
&+2\tilde{\omega}^T\left(W\otimes I_l\right)\phi_d\left(\tilde{\eta}\right)\left(DH\otimes I_{m}\right)\tilde{\eta}\nonumber\\
&+2\tilde{\omega}^T\left(WDH\otimes \phi(v)\right)\tilde{\eta}-2\tilde{\eta}^T\left(P\otimes \phi^T (v)\right)\tilde{\omega}.\label{reeq4bbbv}
\end{align}
Under Assumption \ref{Ass-graph}, using $P=WDH$ gives
\begin{align}\tilde{\omega}^T\left(WDH\otimes \phi(v)\right)\tilde{\eta}&=\tilde{\omega}^T\left(P\otimes \phi(v)\right)\tilde{\eta}\nonumber\\
&=\tilde{\eta}^T\left(P\otimes \phi^T (v)\right)\tilde{\omega}.\label{reeq4bbbvi}
\end{align}
Hence, from \eqref{reeq4bbbv} and \eqref{reeq4bbbvi}, we have
\begin{align}%\label{reeq4bbb}
\dot{V} =& -\mu_1\tilde{\eta}^T\left(Q\otimes I_m\right)\tilde{\eta}-2\tilde{\eta}^T\left(P\otimes I_m\right)\phi_d^T (\tilde{\eta})\tilde{\omega}\nonumber\\
&+2\tilde{\omega}^T\left(W\otimes I_l\right)\phi_d\left(\tilde{\eta}\right)\left(DH\otimes I_{m}\right)\tilde{\eta}\nonumber\\
\leq & -\mu_1\tilde{\eta}^T\left(Q\otimes I_m\right)\tilde{\eta}\nonumber\\
&+2\left\|\tilde{\omega}\right\|\left\|\phi_d (\tilde{\eta})\right\|\left\|\left(P\otimes I_m\right)\right\|\left\|\tilde{\eta}\right\|\nonumber\\
&+2\left\|\tilde{\omega}\right\|\left\|\left(W\otimes I_l\right)\right\|\left\|\phi_d\left(\tilde{\eta}\right)\right\|\left\|\left(DH\otimes I_{m}\right)\right\|\left\|\tilde{\eta}\right\|.\nonumber
\end{align}
By Proposition 9.4.11 in \cite{Bernstein2009Book-Matrix},
$$\| \phi_d (\tilde{\eta}) \| \leq \| \phi_d (\tilde{\eta}) \|_F = \| \hbox{vec}(\phi_d (\tilde{\eta}))\| = \|\tilde{\eta}\|.$$
Thus,
\begin{align}
\dot{V} \leq &  -\mu_1\tilde{\eta}^T\left(Q\otimes I_m\right)\tilde{\eta}+2\left\|\tilde{\omega}\right\|\left\|P\right\|\left\|\tilde{\eta}\right\|^2\nonumber\\
&+2\left\|\tilde{\omega}\right\|\left\|W\right\|\left\|DH\right\|\left\|\tilde{\eta}\right\|^2. \nonumber
\end{align}
According to Lemma \ref{Lm-WP}, $Q$ is symmetric positive definite. Let $\lambda_q$ be the smallest eigenvalue of $Q$ and
\begin{align}\label{mstar}m^*=2\left\|P\right\|+2\left\|W\right\|\left\|DH\right\|.\end{align}
 Then,
\begin{align}\label{reeq5bbb}
\dot{V}\leq&-\left(\mu_1\lambda_{q}-m^*\|\tilde{\omega}(t)\|\right)\|\tilde{\eta}\|^2.
\end{align}
By Lemma \ref{etape}, $\tilde{\omega}(t)$ is uniformly bounded for $t\geq0$ regardless of $\mu_1$, i.e.,
\begin{align}\label{omegastar}\omega^{*}=\sup\limits_{\begin{array}{c}
                              t\geq0,~\omega\in \mathds{W} \\
                             \omega_i(0)\in \mathds{W},i\in \mathcal{V}
                            \end{array}
}\left\{\|\tilde{\omega}(t)\|  \right\},\end{align}
for some positive number $\omega^{*}$, regardless of $\mu_1$ and $t$.
 Let \begin{align}\label{mubound}\mu_{\max}=\max\left\{\frac{\omega^{*}m^*+1}{\lambda_q},  \mu_1^{*}\right\},\end{align}
 where $\mu_1^{*}$ is given in Lemma \ref{etape}.
 Then, for any $\mu_1\geq \mu_{\max}$,
\begin{align}\label{reeq5c}
\dot{V}\leq&-\|\tilde{\eta}\|^2\leq 0.
\end{align}
Since $\dot{V}$ is negative semi-definite and $V$ is positive definite and lower uniformly bounded,  $\tilde{\eta}$ and $\tilde{\omega}$ are uniformly bounded. From \eqref{creeq2b}, $\dot{\tilde{\eta}}$ is uniformly bounded, which implies that $\tilde{\eta}$ is uniformly continuous. From equation \eqref{reeq5c}, we have
\begin{align}%\label{lyauni}
\int_{0}^{\infty}\|\tilde{\eta}(\tau)\|^2d\tau \leq V(0)-V(\infty).\nonumber
\end{align}
 By Barbalat's lemma \cite{khalil2002nonlinear}, we have
 $$\lim\limits_{t\rightarrow\infty}\tilde{\eta}(t)=0.$$
Combining \eqref{eq101a}  and \eqref{creeq2abbbb} yields \eqref{eq101b}.

What is left is to show the convergence of $\tilde{\omega}$, i.e., \eqref{eq101c}.  Differentiating both sides of \eqref{creeq2b} yields
\begin{align}%\label{reeq6b}
 \ddot{\tilde{\eta}}  = & \left(I_N\otimes S\left(\omega\right)-\mu_1DH\otimes I_m\right) \dot{\tilde{\eta}}   +\phi_d^T (\hat{\eta})\dot{\tilde{\omega}} + {\phi}_d^T(\dot{\hat{\eta}})\tilde{\omega}.\nonumber
 \end{align}
Since ${\tilde{\eta}}$, $\tilde{\omega}$, and $\hat{\eta}$ are all  uniformly bounded (see Lemma \ref{Lm-unif-bund-eta}),
by \eqref{noleader1} and \eqref{creeq2}, it is clear that $\dot{\hat{\eta}}$, $\dot{{\tilde{\eta}}}$ and $\dot{\tilde{\omega}}$ are also uniformly bounded. Thus, $\ddot{\tilde{\eta}}$ is uniformly bounded. Again, by applying Barbalat's lemma, we have $\lim_{t \to \infty} \dot{\tilde{\eta}}(t)=0$, which, together with \eqref{eq101a} and \eqref{creeq2b}, further
 implies
 $$\lim\limits_{t\rightarrow\infty} \phi_d^T (\hat{\eta}(t))\tilde{\omega}(t) =0.$$

By Lemma \ref{etape}, under Assumptions  \ref{Ass-S} and \ref{Ass-graph}, for any $v (0)\in \mathds{V}_0$ satisfying $\col (v_{2k-1} (0), v_{2k} (0)) \neq 0$, $k = 1, \cdots, l$, and any $\mu_2>0$, if  $\mu_1\geq \mu_{\max}$,  then $\phi_d (\hat{\eta})$ is PE.
Also note that \eqref{eq101b} holds and $$ \lim\limits_{t\rightarrow\infty}  \tilde{\omega}^T(t)\phi_d (\hat{\eta}(t))\phi_d^T (\hat{\eta}(t))\tilde{\omega}(t) =0.$$ Then \eqref{eq101c} holds according to Lemma 1 in \cite{zbdelyon2001}. Note that  \eqref{eq101c} can also be shown by \cite[Lemma 4.1]{liu2009parameter} or \cite[Lemma 2.4]{chen2015stabilization}, provided that $\phi_d (\hat{\eta})$ is PE and $\lim_{t \to \infty} \dot{\tilde{\omega}}(t)=0$.
\end{proof}
\begin{rem}
Considering \eqref{mubound} and \eqref{mu1-star-2}, a lower bound of $\mu_1$ is given by
$$\mu_{\max}=\max\left\{\frac{\omega^{*}m^*+1}{\lambda_q},  \frac{32\omega^{*}b_{M}^2N\|v(0)\|^2q_{\eta}}{C_{\min}^2 \lambda_{h} b_{m}}\right\},$$
where $m^{*}$, $\omega^{*}$, $q_{\eta}$ and $C_{\min}$ are given in \eqref{mstar}, \eqref{omegastar},  \eqref{qetabound} and \eqref{betav0}, respectively; $b_m$ and $b_M$ denote the smallest and largest eigenvalues of $B$ given in Lemma \ref{Lm-BDH}; $\lambda_h$ is the smallest eigenvalue of $\bar{H}$ given in \eqref{Eq-BDH}; and $\lambda_q$ is the smallest eigenvalue of $Q$ given in \eqref{Eq-Q}. It is worth noting that the lower bound is sometimes practically conservative. In real applications we do not always select the parameter $\mu_1$ according to this formula. Selection based on trial
and error is more practically efficient, while the lower bound guarantees the existence of such a selection. Such a selection method is well accepted in many engineering applications.
\end{rem}

%%================================================================================================================================
%\subsection{Further convergence result of the observers \eqref{compensator2b} and \eqref{compensator2c}} \label{Convergence-exp}
%%================================================================================================================================

Theorem \ref{Thm-1} shows that $\tilde{\eta}$ and $\tilde{\omega}$ converge to zero asymptotically. Next, we shall further show that the convergence of $\tilde{\eta}$ and $\tilde{\omega}$ is in fact exponential by using Lemma \ref{Lm-exp}.

\begin{thm}\label{Thm-exp}
Consider systems \eqref{leader}, \eqref{compensator2b} and \eqref{compensator2c}.  Given two compact sets $\mathds{V}_0\subset \mathds{R}^{m}$ and $\mathds{W}\subset \mathds{R}^{l}$ with $0\in \mathds{V}_0$ and $0\in \mathds{W}$. Under Assumptions \ref{Ass-S} and \ref{Ass-graph}, for any $\hat{\omega}_i(0)\in \mathds{W}$, $\omega\in \mathds{W}$, $\hat{\eta}_i(0)\in \mathds{V}_0$, $v(0)\in \mathds{V}_0$ with
$\textnormal{\col}(v_{2k-1} (0), v_{2k} (0)) \neq 0$,
$k = 1, \cdots, l$, and $\mu_2>0$, there exists a sufficiently large $\mu_1$ such that $\hat{\eta}(t)$ and $\hat{\omega}(t)$  are uniformly bounded and satisfy
\begin{align*}
\lim\limits_{t\rightarrow\infty} \tilde{\eta} (t) =0 ~~~~{and} ~~~~ \lim\limits_{t\rightarrow\infty}\tilde{\omega}(t)=0, %\label{eq101exc}
\end{align*}
 exponentially.
\end{thm}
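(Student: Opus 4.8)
The plan is to recast the error system \eqref{creeq2} into the canonical form \eqref{creeq3exabbbb} of Lemma~\ref{Lm-exp} plus a higher-order perturbation, and then to upgrade the asymptotic convergence of Theorem~\ref{Thm-1} to exponential convergence. The crucial observation is that the map $\phi(\cdot)$ defined in \eqref{phimatrix} is \emph{linear} in its argument, so that $\phi(\hat{\eta}_i)=\phi(v)+\phi(\tilde{\eta}_i)$ and hence $\phi_d(\hat{\eta})=I_N\otimes\phi(v)+\phi_d(\tilde{\eta})$. Substituting this identity into \eqref{creeq2} and using $(I_N\otimes\phi(v))(DH\otimes I_m)=DH\otimes\phi(v)$, I would split the $(\tilde{\eta},\tilde{\omega})$-dynamics into a nominal linear time-varying part driven by $\phi(v)$ and a remainder collecting the terms $-\phi_d^T(\tilde{\eta})\tilde{\omega}$ and $\mu_2\phi_d(\tilde{\eta})(DH\otimes I_m)\tilde{\eta}$.

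Next I would identify the nominal part with \eqref{creeq3exabbbb} under the correspondence $A_a=S(\omega)$, $Y=\mu_1 DH$, $\psi(t)=\phi(v(t))$, $P_a=I_m$, $\kappa=\mu_2/\mu_1$, and $(n,m,s)=(N,m,l)$. The hypotheses are checked directly: since $S(\omega)$ is skew-symmetric by \eqref{skew}, the choice $P_a=I_m$ gives $P_aA_a+A_a^TP_a=S(\omega)+S^T(\omega)=0\le 0$; by Lemma~\ref{Lm-WP} the matrix $DH$, hence $\mu_1 DH$, is diagonalizable with real positive eigenvalues; and $\phi(v(t))$ together with its derivative is uniformly bounded because $v$ and $\dot v=S(\omega)v$ are bounded. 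For the persistency of excitation of $\psi=\phi(v)$, I would exploit the rotational structure induced by \eqref{skew}: each pair $(v_{2k-1},v_{2k})$ conserves $v_{2k-1}^2+v_{2k}^2$, so $\phi(v)\phi^T(v)$ is the \emph{constant} diagonal matrix with entries $v_{2k-1}(0)^2+v_{2k}(0)^2$, which is positive definite precisely under the standing condition $\col(v_{2k-1}(0),v_{2k}(0))\ne 0$. Lemma~\ref{Lm-exp} then yields that the nominal system is globally exponentially stable.

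With global exponential stability of the nominal system in hand, I would invoke the converse Lyapunov theorem (e.g.\ \cite{khalil2002nonlinear}), which applies because the nominal coefficient matrix is bounded, to obtain a Lyapunov function $V_{\mathrm{nom}}(t,\chi)$ with $\chi=\col(\tilde{\eta},\tilde{\omega})$ satisfying $c_1\|\chi\|^2\le V_{\mathrm{nom}}\le c_2\|\chi\|^2$, $\dot V_{\mathrm{nom}}\le -c_3\|\chi\|^2$ along the nominal flow, and $\|\partial V_{\mathrm{nom}}/\partial\chi\|\le c_4\|\chi\|$. Evaluating $\dot V_{\mathrm{nom}}$ along the \emph{full} system adds the perturbation contribution; using the bound $\|\phi_d(\tilde{\eta})\|\le\|\tilde{\eta}\|$ established in the proof of Theorem~\ref{Thm-1} together with the boundedness of $DH$, both remainder terms are bounded by a constant times $\|\chi\|^2$, so the extra contribution is bounded by a constant times $\|\chi\|^3$. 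Hence $\dot V_{\mathrm{nom}}\le -c_3\|\chi\|^2+c\,\|\chi\|^3$, which is negative definite for $\|\chi\|$ below some threshold $r>0$, establishing local exponential stability of the origin of the full error system.

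Finally I would combine this with Theorem~\ref{Thm-1}: for $\mu_1\ge\mu_{\max}$ that theorem guarantees $\tilde{\eta}(t)\to 0$ and $\tilde{\omega}(t)\to 0$, so there is a time $T$ after which $\|\chi(t)\|<r$. Restarting the clock at $T$, local exponential stability forces $\chi$ to decay exponentially for $t\ge T$, and since $\chi$ is uniformly bounded on $[0,T]$, the estimate propagates to an exponential bound valid for all $t\ge 0$, which gives the claimed exponential convergence of $\tilde{\eta}$ and $\tilde{\omega}$. I expect the main obstacle to be the structural mismatch between the block-diagonal regressor $\phi_d(\hat{\eta})$ and the single-regressor form demanded by Lemma~\ref{Lm-exp}: it is the linearity of $\phi$ that resolves this by relegating the off-nominal blocks to a genuinely higher-order perturbation, but care is needed both to confirm that it is $\phi(v)$ (not $\phi(\hat{\eta}_i)$) that must be shown PE and to justify the converse Lyapunov construction for this bounded linear time-varying nominal system.
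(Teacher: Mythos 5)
Your proposal is correct and follows essentially the same route as the paper: the identical splitting $\phi_d(\hat{\eta})=I_N\otimes\phi(v)+\phi_d(\tilde{\eta})$ into a nominal LTV system matching \eqref{creeq3exabbbb} (with $A_a=S(\omega)$, $Y=\mu_1 DH$, $P_a=I_m$, $\psi=\phi(v)$) plus a perturbation, the same verification that $\phi(v)$ is PE from $\phi(v)\phi^T(v)=\mbox{diag}(C_1^2,\dots,C_l^2)>0$, and the same reliance on Theorem~\ref{Thm-1} to make the perturbation vanish. The only difference is cosmetic: where you spell out a converse-Lyapunov argument with a cubic perturbation bound to get local exponential stability and then enter the basin via Theorem~\ref{Thm-1}, the paper simply invokes the vanishing-perturbation result of Example~9.6 in \cite{khalil2002nonlinear}, whose proof is the very argument you reconstruct.
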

\begin{proof}
Let $A=\left(I_N\otimes S\left(\omega\right)-\mu_1DH\otimes I_{m}\right)$ and $\chi=\col\left(\tilde{\eta},\tilde{\omega}\right)$. Then, system \eqref{creeq2} can be written as
\begin{align}\label{creeq4ex}
% \nonumber to remove numbering (before each equation)
\dot{\chi}&=\left(\mathcal{M}(t)+\mathcal{M}_0(t)\right)\chi,
\end{align}
where
\begin{align}
\mathcal{M}(t)&=\left[
       \begin{matrix}%{cc}
        \left(I_N\otimes S\left(\omega\right)-\mu_1DH\otimes I_{m}\right) & - I_N\otimes \phi^T (v(t))  \\
         \mu_2 DH\otimes \phi (v(t)) &  0 \\
       \end{matrix}
     \right],\nonumber\\\mathcal{M}_0(t)&=\left[
                        \begin{matrix}%{cc}
                         0&  -\phi_d^T (\tilde{\eta}(t)) \\
                         \mu_2\phi_d\left(\tilde{\eta}(t)\right)(DH\otimes I_{m}) & 0 \\
                        \end{matrix}
                      \right],\nonumber
\end{align}
and $\phi_d\left(\tilde{\eta}(t)\right)=\textnormal{block diag}  \left(\phi(\tilde{\eta}_{1}(t)),\cdots,\phi(\tilde{\eta}_{N}(t))\right)$.

System $\dot{\chi}=\mathcal{M}(t)\chi$ is of the form \eqref{creeq3exabbbb} with $Y=\mu_1DH$, $P_a=I_m$ and $A_a=S(w)$. Under Assumption \ref{Ass-S}, since $\col (v_{2k-1} (0), v_{2k} (0)) \neq 0$,  $\forall k = 1, \cdots, l$, $\phi(v)$ is PE, and so is $\phi_d(\bar{v})$. By Lemma \ref{Lm-exp}, we know that the origin of system $\dot{\chi}=\mathcal{M}(t)\chi$ is exponentially stable. By Theorem \ref{Thm-1}, for a sufficiently large $\mu_1$ we have $\lim_{t\rightarrow\infty}\tilde{\eta}(t)=0$, which implies $\lim_{t\rightarrow\infty}\|\mathcal{M}_0(t)\|=0$. Then, by the Example~9.6 in \cite{khalil2002nonlinear}, we have $\lim\limits_{t\rightarrow\infty}\chi(t)=0$ exponentially.
\end{proof}

%\subsection{Convergence of $\Hat{E}_i$} \label{Convergence-E}

Finally, the following result shows the convergence of observer \eqref{compensator2-E}.

\begin{thm}\label{k3}
Consider systems  \eqref{leader} and \eqref{compensator2}. Given two compact sets $\mathds{V}_0\subset \mathds{R}^{m}$ and $\mathds{W}\subset \mathds{R}^{l}$ with $0\in \mathds{V}_0$ and $0\in \mathds{W}$. Under Assumptions \ref{Ass-S} and \ref{Ass-graph}, for any $\hat{\omega}_i(0)\in \mathds{W}$, $\omega\in \mathds{W}$, $\hat{\eta}_i(0)\in \mathds{V}_0$, $v(0)\in \mathds{V}_0$ with $\textnormal{\col} (v_{2k-1} (0), v_{2k} (0)) \neq 0$, $k = 1, \cdots, l$, $\mu_2>0$, and $\Hat{E}_i(0)\in \mathds{R}^{n\times m}$, there exists a sufficiently large $\mu_1$ such that $\Hat{E}_i(t)$ and $\dot{\hat{E}}_i(t)$ are uniformly bounded and
$$\lim\limits_{t\rightarrow\infty}\big(\Hat{E}_i(t)-E\big)=0.$$
\end{thm}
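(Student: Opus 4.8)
The plan is to mirror the argument used for Theorem \ref{Thm-exp}: write the $\zeta$-dynamics \eqref{outputcom} as an exponentially stable nominal system driven by a \emph{vanishing} perturbation, and then invoke a perturbation result. Concretely, I would treat $\dot{\zeta} = -\left(DH\otimes(vv^{T}\otimes I_n)\right)\zeta$ as the nominal part and collect the whole of $\pi$ in \eqref{output2pi} as the perturbation. The first and essential task is to show that this nominal system is exponentially stable \emph{despite} $vv^{T}$ being only rank one. By Lemma \ref{Lm-WP} the eigenvalues of $DH$ are real, positive and distinct, so $DH=T\Lambda T^{-1}$ with $\Lambda=\textnormal{block diag}(\lambda_1,\dots,\lambda_N)$ and $\lambda_k>0$. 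Setting $\xi=(T^{-1}\otimes I_{nm})\zeta$ decouples the nominal dynamics into $N$ independent subsystems $\dot{\xi}_k=-\lambda_k\left(vv^{T}\otimes I_n\right)\xi_k$, each a gradient-type estimator $\dot{\xi}_k=-\lambda_k\Phi(t)\Phi^{T}(t)\xi_k$ with regressor $\Phi(t)=v(t)\otimes I_n$.

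The key observation is that $\Phi$ is PE. Under Assumption \ref{Ass-S} the eigenvalues of $S(\omega)$ are simple with zero real parts, and since $\col(v_{2k-1}(0),v_{2k}(0))\neq 0$ for every $k$, the signal $v(t)$ is a sum of sinusoids at distinct nonzero frequencies with nonvanishing amplitudes, so that $\frac{1}{T}\int_{t}^{t+T}vv^{T}\,ds\geq\alpha_0 I_m$ for some $\alpha_0>0$ and all $t$; this is the same excitation that delivers PE of $\phi(v)$ in Theorem \ref{Thm-exp}. Because $\int_{t}^{t+T}\Phi\Phi^{T}\,ds=\left(\int_{t}^{t+T}vv^{T}\,ds\right)\otimes I_n$, the regressor $\Phi$ is PE as well, and the classical exponential-stability result for $\dot{\xi}_k=-\lambda_k\Phi\Phi^{T}\xi_k$ with $\Phi$ bounded and PE gives a uniform exponential rate for each $\xi_k$, hence for $\zeta$. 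Equivalently, using the graph Lyapunov equation \eqref{Eq-Q}, the candidate $V_\zeta=\zeta^{T}\left(W\otimes I_{nm}\right)\zeta$ satisfies $\dot{V}_\zeta=-2\zeta^{T}\left(P\otimes(vv^{T}\otimes I_n)\right)\zeta\leq 0$ with $P=WDH$ symmetric positive definite, and the PE of $v$ upgrades this marginal estimate to an exponential one.

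Next I would certify that $\pi$ is a vanishing perturbation. Every term of $\pi_i$ in \eqref{output2pi} carries a factor $\tilde{\eta}_i$, $\tilde{\eta}_j$, or $e_{vi}$; by \eqref{eqev} each $e_{vi}$ is a fixed linear combination of the components of $\tilde{\eta}$, while $\hat{\eta}$, $v$ and $\zeta_0=\mbox{vec}(E)$ remain bounded. Since Theorem \ref{Thm-exp} guarantees $\tilde{\eta}(t)\to 0$ exponentially, one can write $\pi=\Delta(t)\zeta+b(t)$ with $\|\Delta(t)\|\to 0$ and $\|b(t)\|\to 0$ exponentially. The closed loop becomes $\dot{\zeta}=-\left(DH\otimes(vv^{T}\otimes I_n)\right)\zeta+\Delta(t)\zeta+b(t)$, where $\Delta(t)\zeta$ is a vanishing perturbation of an exponentially stable linear time-varying system (the setting of Example~9.6 in \cite{khalil2002nonlinear}, already used in Theorem \ref{Thm-exp}) and $b(t)$ is an exponentially decaying input. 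A comparison/ISS argument then yields $\lim_{t\to\infty}\zeta(t)=0$, i.e. $\lim_{t\to\infty}\big(\hat{E}_i(t)-E\big)=0$. Boundedness of $\hat{E}_i$ follows from $\tilde{E}_i\to 0$, and boundedness of $\dot{\hat{E}}_i$ from \eqref{compensator2-E} because every signal on its right-hand side is bounded.

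The main obstacle is the nominal exponential-stability step: the damping $vv^{T}\otimes I_n$ is rank deficient, so $\dot{V}_\zeta$ is only negative semidefinite and pointwise arguments do not suffice. The decoupling afforded by the real, positive, distinct spectrum of $DH$ from Lemma \ref{Lm-WP}, combined with the PE of $v$, is exactly what reduces the problem to $N$ copies of the classical PE gradient estimator and supplies the exponential rate. Handling the non-symmetry of $H$ through the $W$-weighting, so that $P=WDH$ is symmetric, is indispensable here and is precisely where the graph Lyapunov equation \eqref{Eq-Q} enters.
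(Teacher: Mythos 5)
Your proposal is correct and follows essentially the same route as the paper: diagonalize $DH$ via its real, positive, distinct spectrum, use PE of $v(t)$ (hence of $vv^T\otimes I_n$) to get exponential stability of the nominal $\zeta$-dynamics through the classical PE result (the paper invokes Theorem~1 of \cite{Anderson1977}), and then treat $\pi$ as a vanishing perturbation driven by the exponentially decaying $\tilde{\eta}$. The only difference is that you spell out the perturbation step, which the paper dispatches by reference to Theorem~4.1 of \cite{WangHuang2020IJC-adapt}.
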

\begin{proof}
Let us first consider system \eqref{outputcom} with $\pi=0$, i.e.,
\begin{align}\label{outputcom1}
\dot{\zeta}=&-\left(DH\otimes\left(v(t)v^{T}(t)\otimes I_n\right)\right) \zeta.
\end{align}
Since $DH$ is diagonalizable and has real and positive eigenvalues, there exist a nonsingular matrix $P_H$ such that
$$P_HDHP^{-1}_H=\textnormal{block diag} (\lambda_1,\cdots,\lambda_N )=J_{H},$$ where $\lambda_i>0 ~ (i=1,\dots,N)$ are eigenvalues of $DH$.
Let $x=\left(P_H\otimes I_{nm}\right)\zeta$, then system \eqref{outputcom1} can be transformed into the following system
\begin{align}\label{outputcom2}
\dot{x}=&-\left(J_{H}\otimes\left(v(t)v^{T}(t)\otimes I_n\right)\right)x.
\end{align}
Under Assumption \ref{Ass-S}, for $\col (v_{2k-1} (0), v_{2k} (0)) \neq 0 ~(\forall k = 1, \cdots, l)$, from Lemma 3 of \cite{wanghang2019TNN}, $v(t)$ is  PE.
Moreover, considering the structure of the matrix $v(t)v^T(t)$, there exist positive constants $\epsilon_1$, $\epsilon_2$, $t_0$, and $T_0$ such that, $\forall t\geq t_0$,
$$\epsilon_1 I_m \geq\int^{t+T_0}_{t}  v(\tau) v^T (\tau) d \tau \geq\epsilon_2 I_m.$$
Hence
$$\epsilon_1 J_{H}\otimes  I_{nm}\geq \int^{t+T_0}_{t}J_{H}\otimes(v(\tau)v^{T}(\tau)\otimes I_n) d\tau \geq \epsilon_2 J_{H}\otimes  I_{nm}.$$
Thus, by Theorem 1 in \cite{Anderson1977}, the equilibrium point of system \eqref{outputcom2} is exponentially stable. Equivalently, the equilibrium point of system \eqref{outputcom1} is exponentially stable. The rest proof follows similar development as in Theorem 4.1 of \cite{WangHuang2020IJC-adapt} and is thus omitted for brevity.
\end{proof}

%=================================================================
\section{Distributed controller design} \label{Sect-controller}
%=================================================================
From Section \ref{Sect-observer}, we have known that each node can asymptotically observe the leader's information, including $v$, $\omega$ and $E$. Now we are ready to propose the control law for each follower node $i$.
As in \cite{CaiHuang2016TAC-EL}, for $i=1,\cdots,N$, let
\begin{subequations}\label{claw}\begin{align}
\dot{q}_{ri}=&\hat{E}_iS\left(\hat{\omega}_i\right)\hat{\eta}_i-\alpha\big(q_i-\Hat{E}_i\hat{\eta}_i\big), \label{claw1}\\
 s_i=&\dot{q}_i-\dot{q}_{ri},\label{claw2}\\
   \tau_i=&-K_{i}s_i+M_i\left(q_i\right)\ddot{q}_{ri}+C_i\left(q_i,\dot{q}_i\right)\dot{q}_{ri}+G_i\left(q_i\right),\label{dstureq11i}
 \end{align}
\end{subequations}
where $K_i$ is a positive definite matrix, $\alpha$ is a positive number and $\Hat{E}_i$ are generated by \eqref{compensator2}.

\begin{thm}
Consider systems \eqref{leader} and \eqref{compensator2}. Under Assumptions \ref{Ass-S} and \ref{Ass-graph}, for any $\hat{\omega}_i(0)\in \mathds{W}$, $\omega\in \mathds{W}$, $\hat{\eta}_i(0)\in \mathds{V}_0$, $v(0)\in \mathds{V}_0$ with $\textnormal{\col} (v_{2k-1} (0), v_{2k} (0)) \neq 0$, $k = 1, \cdots, l$,    $\mu_2>0$, and $\Hat{E}_i(0)\in \mathds{R}^{n\times m}$, there exists a sufficiently large $\mu_1$ such that  Problem \ref{Problem-track} is solvable by the control law \eqref{claw} along with the observer \eqref{compensator2}.

\end{thm}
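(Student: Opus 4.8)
The plan is to invoke the separation principle, which the paper has already flagged as the governing structure: the observer dynamics are autonomous (they do not depend on the follower states $q_i$), so their convergence is established independently by Theorems \ref{Thm-exp} and \ref{k3}, and what remains is to show that the EL tracking subsystem, driven by these (now known-to-be-convergent) observer signals, achieves $\lim_{t\to\infty}(q_i(t)-y(t))=0$. First I would record what the observer results give us: by Theorem \ref{Thm-exp}, $\tilde{\eta}_i(t)=\hat{\eta}_i(t)-v(t)\to 0$ and $\tilde{\omega}_i(t)=\hat{\omega}_i(t)-\omega\to 0$ exponentially, and by Theorem \ref{k3}, $\tilde{E}_i(t)=\hat{E}_i(t)-E\to 0$. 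Hence $\hat{y}_i=\hat{E}_i\hat{\eta}_i\to Ev=y$, so the observer delivers an asymptotically exact reconstruction of the leader output that each follower is asked to track.

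Next I would analyze the closed-loop EL dynamics under the control law \eqref{claw}. Define the sliding variable $s_i=\dot{q}_i-\dot{q}_{ri}$ as in \eqref{claw2}. Substituting $\tau_i$ from \eqref{dstureq11i} into \eqref{Model-follower} and using Property \ref{prop1} (the linear parametrization) together with the skew-symmetry of $\dot{M}_i-2C_i$, the closed-loop equation collapses to the standard passive form
\begin{align}
M_i(q_i)\dot{s}_i = -C_i(q_i,\dot{q}_i)s_i - K_i s_i.\nonumber
\end{align}
I would then take the Lyapunov function $V_i=\tfrac12 s_i^T M_i(q_i)s_i$; its derivative, after cancelling the Coriolis term via the skew-symmetry property, is $\dot{V}_i=-s_i^T K_i s_i\le 0$, which forces $s_i\to 0$ (and $s_i\in\mathcal{L}_2$). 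This is the routine part and I would not grind through it.

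The substantive step is to propagate $s_i\to 0$ and the observer convergence into the tracking error $e_i:=q_i-y$. From the reference definition \eqref{claw1}, $\dot{q}_{ri}=\hat{E}_iS(\hat{\omega}_i)\hat{\eta}_i-\alpha(q_i-\hat{E}_i\hat{\eta}_i)$; since $\hat{E}_i\hat{\eta}_i\to Ev=y$ and $\hat{E}_iS(\hat{\omega}_i)\hat{\eta}_i\to ES(\omega)v=E\dot{v}=\dot{y}$, the reference velocity $\dot{q}_{ri}$ asymptotically equals $\dot{y}-\alpha(q_i-y)$ plus a vanishing perturbation. I would write $\dot{e}_i=\dot{q}_i-\dot{y}=s_i+\dot{q}_{ri}-\dot{y}$ and show this reduces to $\dot{e}_i=-\alpha e_i + s_i + \varepsilon_i(t)$, where $\varepsilon_i(t)$ collects the observer estimation errors and tends to zero. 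This is a stable linear filter with gain $-\alpha$ driven by inputs $s_i(t)\to 0$ and $\varepsilon_i(t)\to 0$, so by standard input-to-state / vanishing-perturbation arguments (the converging-input converging-state property of exponentially stable linear systems, cf.\ the cascade reasoning in \cite{khalil2002nonlinear}) we conclude $e_i(t)\to 0$, i.e.\ $\lim_{t\to\infty}(q_i(t)-y(t))=0$.

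The main obstacle I anticipate is rigorously handling the perturbation term $\varepsilon_i(t)$ and establishing uniform boundedness of all closed-loop signals \emph{before} asserting convergence. The observer signals $\hat{\eta}_i$, $\hat{\omega}_i$, $\hat{E}_i$ are already uniformly bounded by Lemmas \ref{Lm-unif-bund-eta}, \ref{etape} and Theorem \ref{k3}, so $\dot{q}_{ri}$ and $\ddot{q}_{ri}$ are well-defined and bounded; the care lies in showing $\ddot{q}_{ri}$ stays bounded (it involves $\dot{\hat{E}}_i$, $\dot{\hat{\omega}}_i$, $\dot{\hat{\eta}}_i$, all of which are bounded by the earlier lemmas) so that $M_i(q_i)\ddot{q}_{ri}$ in \eqref{dstureq11i} is admissible and the cascade has no finite-escape-time pathology. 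Once boundedness of $q_i$, $\dot{q}_i$ is secured, invoking the asymptotic stability of the $-\alpha$ filter with vanishing inputs closes the argument; because the observer convergence is in fact exponential, one expects the overall tracking error to converge as well, though only asymptotic convergence is claimed. I would assemble these pieces and cite the separation-principle remark from Section \ref{Sect-observer}, noting that this is precisely why the distributed observer can be paired with the EL controller of \cite{CaiHuang2016TAC-EL} to solve Problem \ref{Problem-track}.
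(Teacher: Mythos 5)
Your proposal is correct and follows essentially the same route as the paper: the same sliding variable $s_i$, the same Lyapunov function $V_i=\tfrac12 s_i^TM_i(q_i)s_i$ with the skew-symmetry cancellation, the same boundedness-then-Barbalat chain for $s_i\to 0$, and the same reduction of the tracking error to a stable first-order filter driven by $s_i$ and vanishing observer errors. The only cosmetic difference is that you work with $e_i=q_i-y$ directly while the paper uses $e_i=q_i-\hat{E}_i\hat{\eta}_i$ and then appeals to $\hat{E}_i\hat{\eta}_i\to Ev=y$; and the boundedness concern you flag at the end (needing $q_i,\dot q_i,\ddot q_{ri}$ bounded before concluding $s_i\to0$, since $\dot V_i\le 0$ alone only gives boundedness) is exactly the step the paper resolves via the filter $\dot q_i=-\alpha q_i+s_i+\hat{E}_iS(\hat\omega_i)\hat\eta_i+\alpha\hat{E}_i\hat\eta_i$.
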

\begin{proof}
Differentiating both sides of \eqref{claw1} and \eqref{claw2} gives,
\begin{subequations}\label{dclaw}
\begin{align}
\ddot{q}_{ri}=&\dot{\hat{E}}_iS\left(\hat{\omega}_i\right)\hat{\eta}_i+\hat{E}_iS\left(\hat{\omega}_i\right)\dot{\hat{\eta}}_i\nonumber\\
&+\hat{E}_iS\big(\dot{\hat{\omega}}_i\big)\hat{\eta}_i-\alpha\big(\dot{q}_i-\dot{\hat{E}}_i\hat{\eta}_i-\Hat{E}_i\dot{\hat{\eta}}_i\big),\label{dclaw1}\\
 \dot{s}_i=&\ddot{q}_i-\ddot{q}_{ri}. \label{dclaw2}
 \end{align}
\end{subequations}
Considering  \eqref{Model-follower}, \eqref{dclaw} and \eqref{claw}, gives
\begin{align}
 M_i\left(q_i\right)\dot{s}_i&=-C_i\left(q_i,\dot{q}_i\right)s_i-K_is_i.\label{MARINEVESSEL14}
\end{align}
Define the following Lyapunov function:
\begin{equation}\label{MARINEVESSEL15}
V_i=\frac{1}{2}s_{i}^TM_i\left(q_i\right)s_i,
\end{equation}
Then, along the trajectory of (\ref{MARINEVESSEL14}),
\begin{align} %\label{MARINEVESSEL16}
  \dot{V}_i=&s_{i}^TM_i\left(q_i\right)\dot{s}_i+\frac{1}{2}s_{i}^T\dot{M}_i\left(q_i\right)s_i\nonumber\\
  =&s_{i}^T\left(-C_i\left(q_i,\dot{q}_i\right)-K_i\right)s_i+\frac{1}{2}s_{i}^T\dot{M}_i\left(q_i\right)s_i.\nonumber
  \end{align}
 Noting that $\big(\dot{M}_i\left(q_i\right)-2C_i\left(q_i,\dot{q}_i\right)\big)$ is skew symmetric, we have
\begin{equation}%\label{MARINEVESSEL16}
  \dot{V}_i=-s_{i}^T K_is_i.\nonumber
  \end{equation}
 Since $K_i$ are positive definite, the vectors $s_{i}$ is uniformly bounded. We now further show $s_i(t)\rightarrow0$ as $t\rightarrow\infty$ using Barbalat's lemma.
For this purpose, substituting (\ref{claw1}) into (\ref{claw2}) gives
\begin{equation}\label{MARINEVESSEL17}
 \dot{q}_i= -\alpha q_i+ s_i+\hat{E}_iS\left(\hat{\omega}_i\right)\hat{\eta}_i+\alpha \hat{E}_i\hat{\eta}_i.
 \end{equation}
 Since $s_i$ is uniformly bounded, $\hat{E}_i$, $S\left(\hat{\omega}_i\right)$ and $\hat{\eta}_i$ are also uniformly bounded by Theorem \ref{Thm-1} and Theorem \ref{k3}, and $ \alpha  $ is positive, equation (\ref{MARINEVESSEL17}) can be viewed as a stable first order linear system in $q_i$ with a uniformly bounded input. Thus, both $q_i$ and $\dot{q}_i$ are uniformly bounded, and so are $\dot{q}_{ri}$ and $\ddot{q}_{ri}$, according to \eqref{claw1}  and \eqref{dclaw1}.
Equation \eqref{dclaw2} further implies that $\dot{s}_i$ is  uniformly bounded. Thus $\ddot{V}_i$ are uniformly bounded, which implies $\dot{V}_i$ is uniformly continuous. By Barbalat's lemma, we have, for $i=1,\cdots,N$, $\dot{V}_i(t)\rightarrow 0$ as $t\rightarrow \infty$, which implies  $s_i(t)\rightarrow 0$ as $t\rightarrow \infty$.

Using \eqref{compensator2b} and \eqref{MARINEVESSEL17}, we have
   \begin{eqnarray}\label{MARINEVESSEL19}
   \dot{q}_{i}-\hat{E}_i\dot{\hat{\eta}}_i+\alpha(q_{i}-\hat{E}_i\hat{\eta}_i) =s_i-\mu_1 d_i \hat{E}_i\sum\limits_{j \in \mathcal{\bar{N}}_i}(\hat{\eta}_j-\hat{\eta}_i).
  \end{eqnarray}
Let $e_i=(q_i-\hat{E}_i\hat{\eta}_i)$, \eqref{MARINEVESSEL19} can be rewritten as
   \begin{equation}\label{reMARINEVESSEL19}
 \dot{e}_{i} = -\alpha e_i + s_i-\mu_1 d_i \hat{E}_i e_{vi}.
  \end{equation}
By  Theorem \ref{Thm-1}, $\lim_{t\rightarrow\infty}e_{vi}(t)= 0$.  Equation \eqref{reMARINEVESSEL19} can be viewed as a stable linear system with
$(s_i-\mu_1 d_i \hat{E}_ie_{vi})$ as the input, which is uniformly bounded over $t\geq 0$ and vanishes at the origin. Thus, we conclude that both $e_i=(q_i-\hat{E}_i\hat{\eta}_i)$ and $\dot{e}_i=(\dot{q}_i-\hat{E}_i\dot{\hat{\eta}}_i)$ are uniformly bounded over $t\geq 0$ and will tend to zero as $t\rightarrow \infty$. These facts together with $\lim\limits_{t\rightarrow\infty}(\hat{E}_i(t)\dot{\hat{\eta}}_i(t)-E\dot{v}(t))=0$ and $\lim\limits_{t\rightarrow\infty}(\hat{E}_i(t)\hat{\eta}_i(t)-E{v}(t))=0$ complete the proof.
\end{proof}

%=================================================================
\section{Simulation example} \label{Sect-example}
%=================================================================

Consider a MAS consisting of four follower nodes and one leader node, whose communication network is shown in Figure \ref{fig1}. Each follower node describes a two-link robot arm \cite{lewis2004Book-robot} whose motion equation takes the following  form
\begin{equation}\label{numerica1}
  M_i\left(q_i\right)\ddot{q}_i+C_i\left(q_i,\dot{q}_i\right)\dot{q}_i+G_i\left(q_i\right)=\tau_i, ~~i=1,\dots, 4,
\end{equation}
where $q_i=\col\left(q_{i1}, q_{i2}\right)$,
\begin{subequations}
\begin{align}
% \nonumber to remove numbering (before each equation)
  M_i\left(q_i\right) &= \left(
                            \begin{matrix}%{cc}
                              a_{i1} +a_{i2}+2a_{i3}\cos q_{i2} &  a_{i2}+a_{i3}\cos q_{i2} \\
                               a_{i2}+ a_{i3}\cos q_{i2}&  a_{i2} \\
                            \end{matrix}
                          \right),\nonumber \\
  C_i\left(q_i,\dot{q}_i\right) &= \left(
                                      \begin{matrix}%{cc}
                                        -a_{i3}\left(\sin q_{i2}\right)\dot{q}_{i2}& -a_{i3}\left(\sin q_{i2}\right)\left(\dot{q}_{i1}+\dot{q}_{i2}\right) \\
                                         a_{i3}\left(\sin q_{i2}\right)\dot{q}_{i1} & 0  \\
                                      \end{matrix}
                                    \right),
  \nonumber\\
 G_i\left(q_i\right) &=\left(
                          \begin{matrix}%{c}
                            a_{i4}g\cos q_{i1}+a_{i5}g\cos\left( q_{i1}+ q_{i2}\right) \\
                            a_{i5}g\cos\left( q_{i1}+ q_{i2}\right) \\
                          \end{matrix}
                        \right),
 \nonumber
\end{align}
\end{subequations}
and the actual values of $\Theta_i=\col(a_{i1},\cdots,a_{i5})$ are
\begin{eqnarray*}
% \nonumber to remove numbering (before each equation)
 && \Theta_1 = \col(0.64,1.10, 0.08,0.64,0.32), \\
  && \Theta_2 =\col(0.76,1.17,0.14,0.93,0.44),\\
  &&\Theta_3 = \col(0.91,1.26,0.22, 1.27,0.58),\\
  &&\Theta_4 = \col(1.10,1.36,0.32,1.67,0.73).
\end{eqnarray*}
 The leader's signal is
generated by \eqref{leader} with $$\omega=\col\left(10,20,30\right)~~\textnormal{and}~~E=\left[
                                                                                 \begin{matrix}%{cccccc}
                                                                                   1 & 0 & 2 & 0 & 3 & 0 \\
                                                                                   0 & 3 & 0 & 2 & 0 & 1 \\
                                                                                 \end{matrix}
                                                                               \right]
.$$ Thus, Assumption \ref{Ass-S} is satisfied. The communication network satisfies Assumption \ref{Ass-graph} with
$$H=\left[
      \begin{matrix}%{cccc}
       2   &  0  &   0  &  -1\\
    -1   &  2  &  -1   &  0\\
     0   & -1  &   1   & 0\\
    0  &   -1  &  -1  &  2\\
      \end{matrix}
    \right].
$$
We can pick $D=\textnormal{block diag} (1,2,3,4)$ such that $DH$ is diagonalizable and all of its eigenvalues are real,  positive, and distinct.
Also, we can find $W$ and $P$ as follows:
\begin{align}W&=\left[
      \begin{matrix}%{cccc}
       0.7993 &  -0.4421  & -0.0671 &   0.4092\\
   -0.4421   & 1.1599  &  0.4099   &-0.8280\\
   -0.0671   & 0.4099  &  0.6599  & -0.6405\\
    0.4092   &-0.8280  & -0.6405  &  1.1979\\
           \end{matrix}
    \right], \nonumber \\
P&= \left[
       \begin{matrix}%{cccc}
          2.4828 &  -3.2040  & -0.9540  &  2.4745\\
   -3.2040   & 6.7221   & 2.2221 &  -6.1822\\
   -0.9540  &  2.2221  &  3.7221  & -5.0572\\
    2.4745  & -6.1822  & -5.0572 &   9.1741\\
         \end{matrix}
     \right].\nonumber
\end{align}
It can be verified that $P>0$ and $W>0$.
We design a distributed observer of the form \eqref{compensator2} with  $\mu_1=80$, $\mu_2=60$ and a control law of the form \eqref{claw} with $K_i=0.5I_2$ and $\alpha=0.5$. The initial condition of the leader system is $v(0)=\col(2, 0.6, 2, 0.8, 2, 1)$.
Since $v(0)$ satisfies the condition of Lemma 4 in \cite{WangHuang2019IJC-coop}, both $\phi\left(v (t)\right)$ and $v(t)$ are PE.
Figures \ref{fig2}, \ref{fig6i} and \ref{figEe} show the convergence of the estimation errors $\|\tilde{\eta}_{i}\|$, $\|\tilde{E}_i\|$ and $\|\tilde{\omega}_i\|$, and Figure \ref{figqe} shows the asymptotic tracking performance of the follower nodes.

\begin{figure}[H]
\centering
\begin{tikzpicture}
    \node (home) [circle, draw=blue!80, fill=blue!20, very thick, minimum size=7mm] {\textbf{0}};
    \node (school) [circle, right=of home, draw=red!80, fill=red!20, very thick, minimum size=7mm] {\textbf{1}};
    \draw[very thick,->, right] (home) edge (school);
    \node (school2) [circle, right=of school, draw=red!80, fill=red!20, very thick, minimum size=7mm] {\textbf{2}};
    \draw[very thick,->,  bend  right] (school) edge (school2);
    \node (school3) [circle, right=of school2, draw=red!80, fill=red!20, very thick, minimum size=7mm] {\textbf{3}};
         \draw[very thick,->,  bend  right] (school3) edge (school2);
          \draw[very thick,->,  bend  right] (school2) edge (school3);
    \node (school4) [circle, right=of school3, draw=red!80, fill=red!20, very thick, minimum size=7mm] {\textbf{4}};
    \draw[very thick,->,  bend right] (school3) edge (school4);
    \draw[very thick,->,  bend  right] (school2) edge (school4);
        \draw[very thick,<-,, bend left] (school) edge (school4);
        %\draw[very thick,<-,, bend left] (school4) edge (school);
\end{tikzpicture}
\caption{ Communication graph $\bar{\mathcal{G}}$}\label{fig1}
\end{figure}
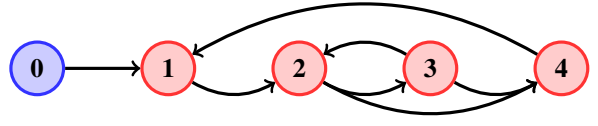

\begin{figure}[ht]
  \centering\setlength{\unitlength}{0.65mm}
  \epsfig{figure=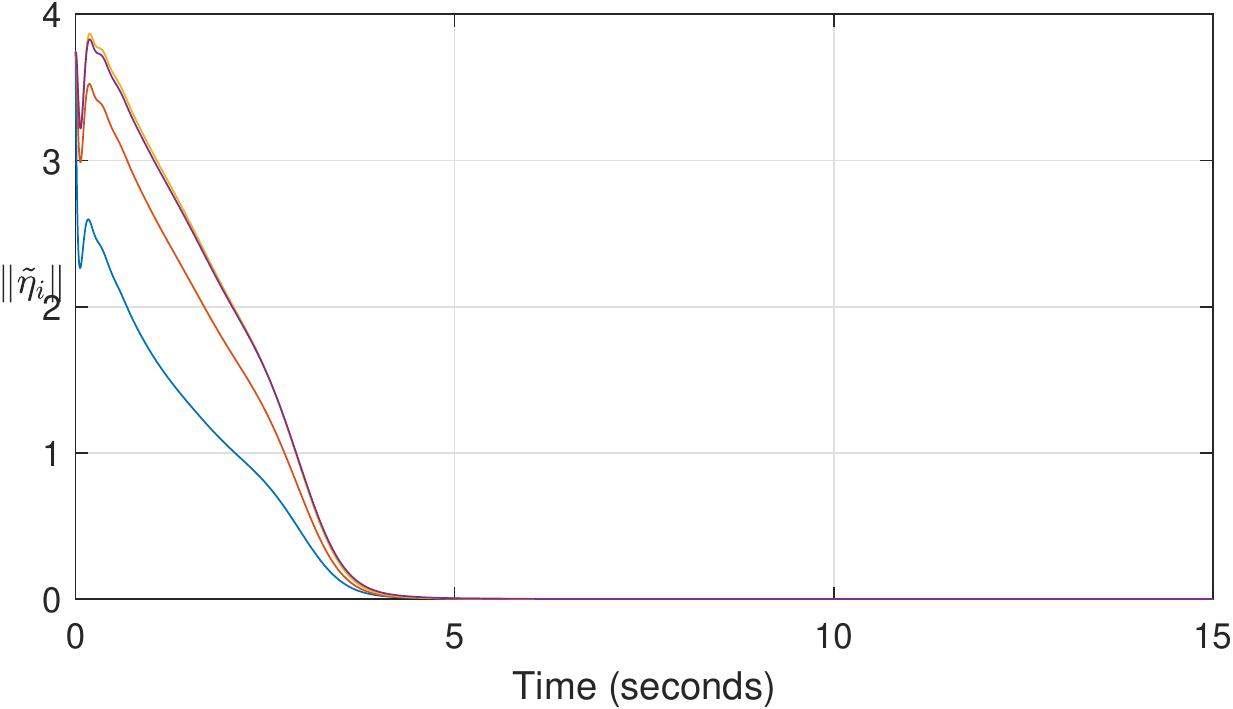,width=8cm}
   \caption{Trajectory of $\|\tilde{\eta}_{i}(t)\|$, $i=1,2,3,4$. }\label{fig2}
\end{figure}

\begin{figure}[ht]
  \centering\setlength{\unitlength}{0.65mm}
  % Requires \usepackage{graphicx}
   %\includegraphics[width=9cm]{Se.pdf}
  \epsfig{figure=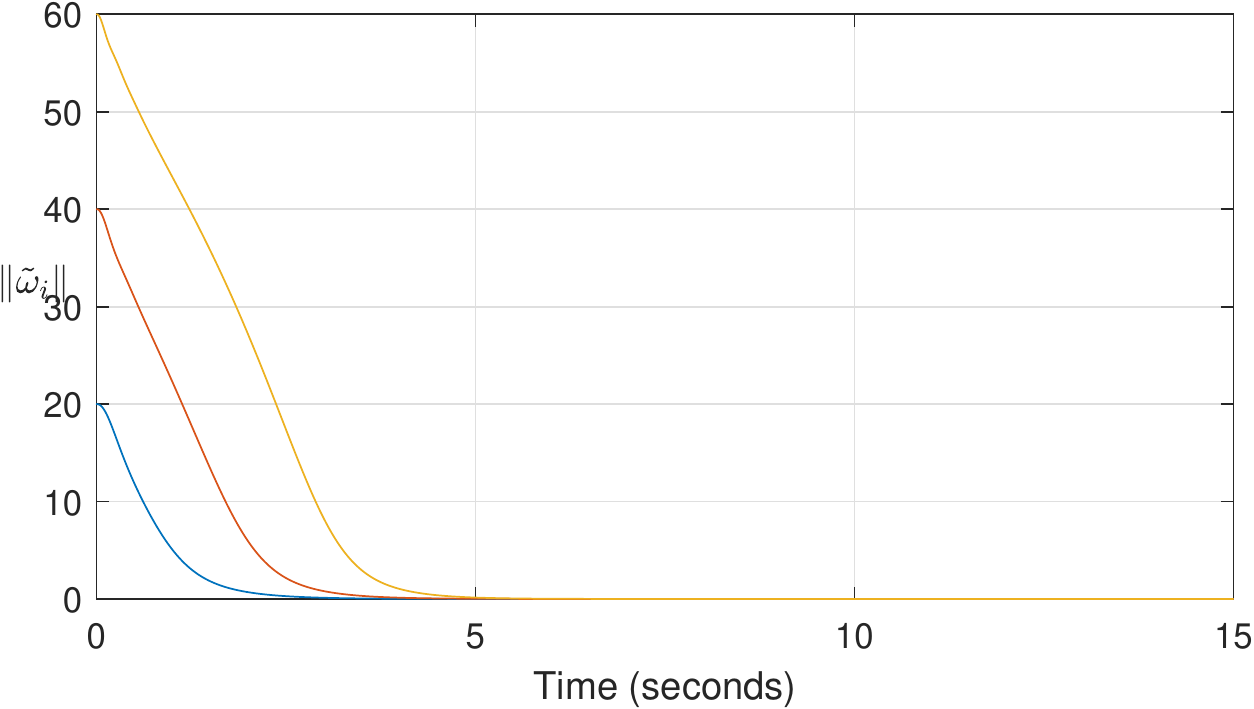,width=8cm}
  % \caption{Tracking Error}\label{fig2i}
  \caption{Trajectory of $\|\tilde{\omega}_i(t)\|$, $i=1,2,3,4$.}\label{fig6i}
\end{figure}

\begin{figure}[ht]
  \centering\setlength{\unitlength}{0.65mm}
  % Requires \usepackage{graphicx}
  % \includegraphics[width=9cm]{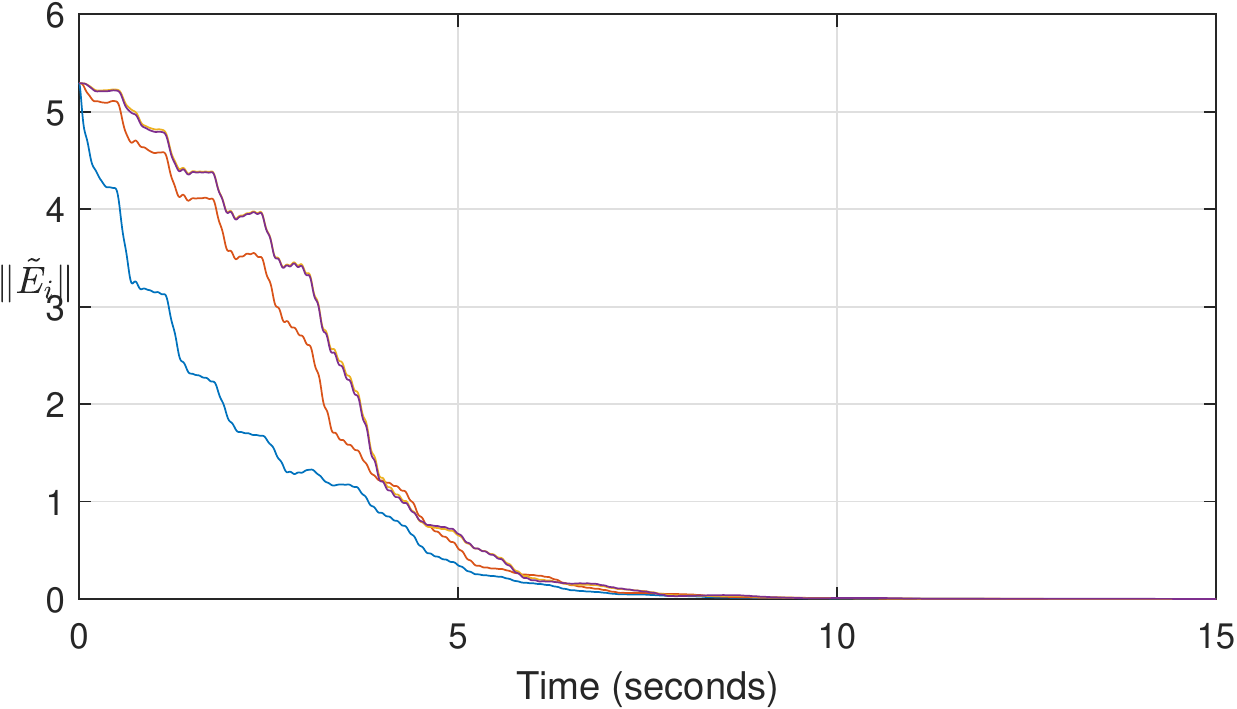}
  \epsfig{figure=Ee.pdf,width=8cm}
  % \caption{Tracking Error}\label{fig2i}
  \caption{Trajectory of $\|\tilde{E}_i(t)\|$, $i=1,2,3,4$.}\label{figEe}
\end{figure}

\begin{figure}[ht]
  \centering\setlength{\unitlength}{0.65mm}
  % Requires \usepackage{graphicx}
  % \includegraphics[width=9cm]{Ee.pdf}
  \epsfig{figure=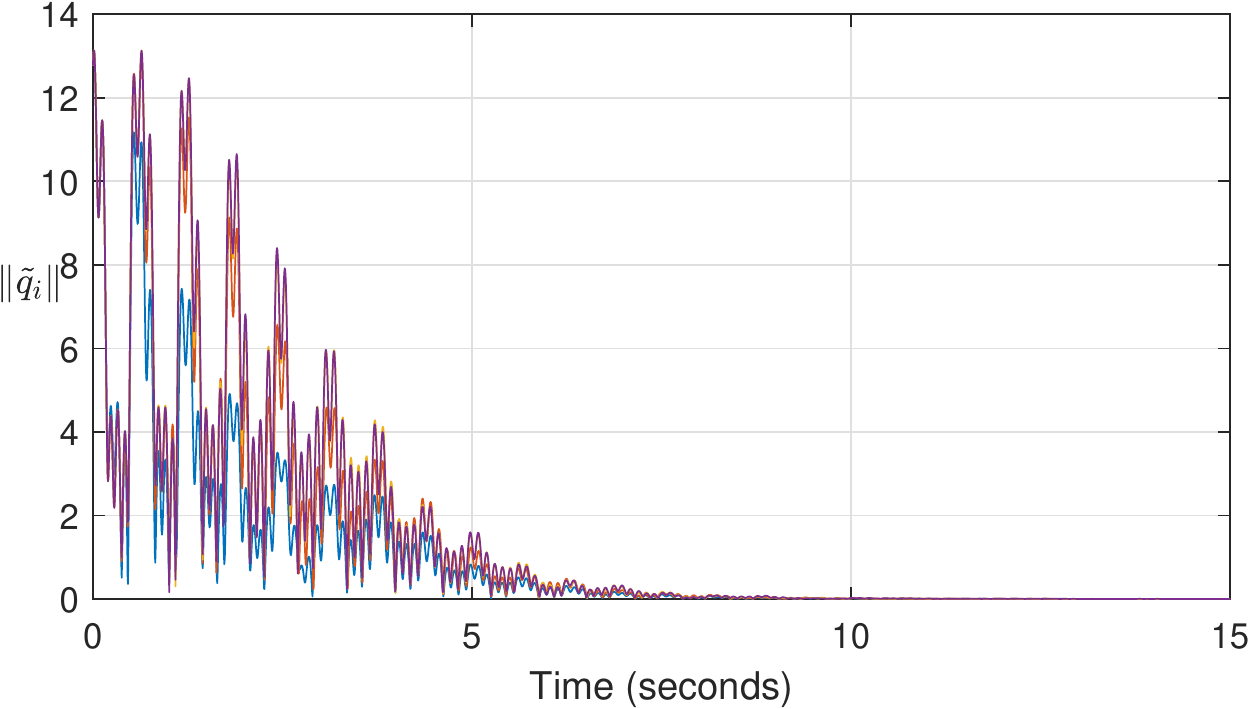,width=8cm}
  % \caption{Tracking Error}\label{fig2i}
  \caption{Trajectory of $\|\tilde{q}_i(t)\|=\|q_i(t)-y(t)\|$, $i=1,2,3,4$.}\label{figqe}
\end{figure}

%=================================================================
\section{Conclusion} \label{Sect-Con}
%=================================================================
This paper studied the leader-following consensus problem of heterogeneous EL systems over directed graphs, where the leader's system matrix and output matrix are both unknown. By suitably picking a diagonal matrix $D$, which solely depends on the communication graph, distributed observers are designed for each follower node, simultaneously estimating both state and parameters of the leader node. The observers are shown to converge exponentially fast. Through the convergence analysis of the distributed observers, two novel graph based Lyapunov equations are developed, which have their own interests in enriching the design and analysis tools for MASs over directed graphs; and the stability of a class of linear time-varying systems is also investigated, which plays a central role in analysis of parameter convergence for adaptive observers. It is also worth mentioning that, due to the separation principle, the proposed observers may also be applied to cooperative tracking problems of other MASs with an uncertain leader, besides EL systems.

%=========================================================================================================================
 \begin{appendices}
%------------------------------------------------------------------
\section{Proof of Lemma \ref{Lm-unif-bund-eta}} \label{Appendix-A}
%------------------------------------------------------------------
Define the following Lyapunov function for \eqref{noleader1}
$$X_0(\hat{\eta})=\hat{\eta}^T\left(B\otimes I_m\right)\hat{\eta}, $$
where $B=\textnormal{block diag}(b_1,\dots,b_N)$ is a positive diagonal matrix defined in Lemma \ref{Lm-BDH}. Let $b_m$ and $b_M$ denote the smallest and {\myr largest} eigenvalues of $B$, respectively. Then, the derivative of $X_0(t)$ along the trajectory of \eqref{noleader1} is
\begin{align}
\dot{X}_0%=&2\hat{\eta}^T\left(B\otimes I_m\right)\dot{\hat{\eta}}\nonumber\\
=&2\hat{\eta}^T\left(B\otimes I_m\right)S_d\left(\hat{\omega}\right)\hat{\eta}-2\mu_1\hat{\eta}^T\left(BDH\otimes I_m\right)\hat{\eta}\nonumber\\
&+2\mu_1\hat{\eta}^T\left(BDH\otimes I_m\right)\bar{v}\nonumber\\
= &2\mu_1\hat{\eta}^T\left(BDH\otimes I_m\right)\bar{v}- \mu_1\hat{\eta}^T\left(\bar{H}\otimes I_m\right)\hat{\eta}\nonumber\\
&+\sum\nolimits_{i=1}^{N}2b_i\hat{\eta}_i^TS\left(\hat{\omega}_i\right)\hat{\eta}_i. \nonumber
 \end{align}
Since $S(\hat{\omega}_i)$ is skew symmetric, by using $X_0(\hat{\eta})\leq  b_M \|\hat{\eta}\|^2$, we have
\begin{align}
\dot{X}_0=&- \mu_1\hat{\eta}^T\left(\bar{H}\otimes I_m\right)\hat{\eta}+2\mu_1\hat{\eta}^T\left(BDH\otimes I_m\right)\bar{v}\nonumber\\
\leq&-\mu_1\lambda_{h}\|\hat{\eta}\|^2+2\mu_1 \|\bar{v}\|\|BDH\|\|\hat{\eta}\|\nonumber\\
%\leq&-\mu_1\lambda_{h}\|\hat{\eta}\|^2+\mu_1\left[\frac{4}{\lambda_{h}}N \|v\|^2\|BDH\|^2+\frac{\lambda_{h}}{4}\|\hat{\eta}\|^2\right]\nonumber\\
\leq&-\frac{3\mu_1\lambda_{h}}{4}\|\hat{\eta}\|^2+\mu_1\frac{4}{\lambda_{h}}N\|v\|^2 \|BDH\|^2\nonumber\\
\leq&-\mu_1\gamma X_0+\mu_1\|v\|^2q_0^*,\nonumber
 \end{align}
where $\lambda_h$ is the smallest eigenvalue of $\bar{H}$, $\gamma=\frac{3\lambda_{h}}{4b_M}$ with $b_M$ the largest eigenvalue of $B$, and $q_0^*=\frac{4}{\lambda_{h}} N\|BDH\|^2$. By the Comparison Lemma \cite[Lemma 3.4]{khalil2002nonlinear},  $X_0(t)$ satisfies the inequality
\begin{align*}
X_0(t)&\leq e^{-\mu_1\gamma t}X_0(0)+\mu_1q_0^*\int_{0}^{t}\|v(\tau)\|^2e^{-(t-\tau)\mu_1\gamma}d\tau.
\end{align*}
%By Remark \ref{v0uniformly bounded}, we have
For any $v(0)\in \mathds{R}^{m}$,  we have $v(t)=e^{S(\omega)t}v(0)$ and $\dot{v}(t)=S(\omega)e^{S(\omega)t}v(0)$.
Under Assumption \ref{Ass-S}, $S(\omega)$ is skew symmetric, $\|v(t)\|=\|v(0)\|$ and $\|\dot{v}(t)\|\leq\|\omega\|\|v(0)\|$.
Then,
\begin{align}
X_0(t)%&\leq e^{-\mu_1\gamma t}X_0(0)+\frac{\mu_1\|v(0)\|^2q_0^*}{\mu_1\gamma}\left[1-e^{-\mu_1\gamma t}\right]\nonumber\\
&\leq e^{-\mu_1\gamma t}X_0(0)+\frac{q_0^*\|v(0)\|^2}{\gamma}.\nonumber
\end{align}
Thus, by using $ b_m \|\hat{\eta}\|^2\leq X_0(\hat{\eta})$, we have
\begin{align}\|\hat{\eta}(t)\|^2\leq \frac{ b_M}{b_m}\|\hat{\eta}(0)\|^2e^{-\mu_1\gamma t}+\frac{q_0^*\|v(0)\|^2}{\gamma b_m}.\tag{A.1}
\end{align}
Hence, $\hat{\eta}(t)$ is uniformly bounded, regardless of $\mu_1$ and $t$. Therefore, for some $t_0$, we have
\begin{align}\label{qetabound}\|\hat{\eta}(t)\|\leq \eta_M\triangleq \|v(0)\|q_{{\eta}},\;\forall t\geq t_0,\tag{A.2}
\end{align}
where $q_{{\eta}}=4\lambda_{h}\|BDH\|\sqrt{\frac{2N b_M}{3b_m}}$.

Next, differentiating both sides of \eqref{noleader1} gives
\begin{align}
\ddot{\hat{\eta}}=&S_d\left(\hat{\omega}\right)\dot{\hat{\eta}}+S_d(\dot{\hat{\omega}})\hat{\eta}-\mu_1\left(DH\otimes I_m\right)\big(\dot{\hat{\eta}}-\dot{\bar{v}}\big). \label{noleaderd2}\tag{A.3}
\end{align}
Define the following Lyapunov function for \eqref{noleaderd2}
  $$X_1(\dot{\hat{\eta}})=\dot{\hat{\eta}}^T\left(B\otimes I_m\right)\dot{\hat{\eta}}.$$
The time derivative of $X_1(t)$ along the trajectory of \eqref{noleaderd2} is
\begin{align*}
\dot{X}_1%=&2\dot{\hat{\eta}}^T\left(B\otimes I_m\right)\ddot{\hat{\eta}}\nonumber\\
= &- \mu_1\dot{\hat{\eta}}^T\left(\bar{H}\otimes I_m\right)\dot{\hat{\eta}}+2\dot{\hat{\eta}}^T\left(B\otimes I_m\right)S_d(\dot{\hat{\omega}})\hat{\eta}\nonumber\\
&+2\mu_1\dot{\hat{\eta}}^T\left(BDH\otimes I_m\right)\dot{\bar{v}}\nonumber\\
\leq &- \mu_1\lambda_{h}\|\dot{\hat{\eta}}\|^2+2\|\dot{\hat{\eta}}\|\|B\|\|\dot{\hat{\omega}}\|\|\hat{\eta}\|\nonumber\\
 &+2\mu_1\|B\|\|DH\|\|\dot{\bar{v}}\|\|\dot{\hat{\eta}}\|.
\end{align*}
Using equations \eqref{noleaderd3} and \eqref{eqev} gives
 \begin{align*}
\dot{X}_1\leq &- \mu_1\lambda_{h}\|\dot{\hat{\eta}}\|^2+2\mu_2\|\dot{\hat{\eta}}\|\|B\|\|\phi_d(\hat{\eta})\left(DH\otimes I_m\right)\tilde{\eta}\|\|\hat{\eta}\|\nonumber\\
 &+2\mu_1\|B\|\|DH\|\|\dot{\bar{v}}\|\|\dot{\hat{\eta}}\|\nonumber
\nonumber\\
\leq &- \mu_1\lambda_{h}\|\dot{\hat{\eta}}\|^2+2\mu_2\|\dot{\hat{\eta}}\|\|B\|\|DH\|\|\tilde{\eta}\|\|\hat{\eta}\|^2\nonumber\\
&+2N\mu_1\|B\|\|DH\|\|\dot{v}\|\|\dot{\hat{\eta}}\| .
 \end{align*}
Let $\gamma=\frac{3\lambda_{h}}{4b_M}$, $q^{*}=N\|B\|\|DH\|$, and
\begin{align*}
  \rho(t)=&N\|B\|\|DH\|\left[\mu_2(\|v\|\|\hat{\eta}\|^2+\|\hat{\eta}\|^3)+\|\dot{v}\|\right]\nonumber\\ \leq& \rho_M\triangleq q_{\dot{\eta}}\|{v}(0)\|,\;\forall t\geq t_0,
\end{align*}
where $q_{\dot{\eta}}=N\|B\|\|DH\|\left[\mu_2 q_{{\eta}}^2\|v(0)\|(1+\|v(0)\|q_{{\eta}})+\|\omega\|\right]$. 
Then, for $\mu_1>1$, by using $X_1(\dot{\hat{\eta}})\leq  b_M \|\dot{\hat{\eta}}\|^2$, we have
 \begin{align*}
\dot{X}_1 %\leq &- \mu_1\lambda_{h}\|\dot{\hat{\eta}}\|^2+2\mu_2q^{*}\|\dot{\hat{\eta}}\|\|\hat{\eta}\|^3\nonumber\\
%&+2N\mu_2\|\dot{\hat{\eta}}\|q^{*}\|v\|\|\hat{\eta}\|^2+2N\mu_1q^{*}\|\dot{v}\|\|\dot{\hat{\eta}}\|\nonumber\\
\leq &-\mu_1\lambda_{h}\|\dot{\hat{\eta}}\|^2+2\mu_1\|\dot{\hat{\eta}}\|\rho(t),\nonumber\\
%\leq &-\mu_1\lambda_{h}\|\dot{\hat{\eta}}\|^2+\mu_1\left[\frac{\lambda_{h}}{4}\|\dot{\hat{\eta}}\|^2+\frac{4}{\lambda_{h}}\rho^2(t)\right]\nonumber\\
%\leq &-\frac{3\mu_1\lambda_{h}}{4}\|\dot{\hat{\eta}}\|^2+\frac{4\mu_1}{\lambda_{h}}\rho^2(t)\nonumber\\
\leq&-\mu_1\gamma X_1 + \frac{4\mu_1}{\lambda_{h}}\rho^2(t).
 \end{align*}
By the Comparison Lemma \cite[Lemma 3.4]{khalil2002nonlinear},  $X_1(t)$ satisfies the inequality
\begin{align*}
X_1(t)\leq &e^{-\mu_1\gamma (t-t_0)}X_1(t_0)+\frac{4\mu_1}{\lambda_{h}}\int_{t_0}^{t}\rho^2(\tau)e^{-(t-\tau)\mu_1\gamma}d\tau.
\end{align*}
Since,  under Assumptions \ref{Ass-S} and \ref{Ass-graph}, both $\hat{\eta}$ and $v$ are uniformly bounded independent of $\mu_2$ and $t$,
$\rho^2(t)\leq \rho_M^2$ over $t\geq t_0$.
Thus,
\begin{align}
X_1(t)%\leq&e^{-\mu_1\gamma (t)}X_1(0)+\frac{4\mu_1\rho^{*}}{\lambda_{h}}\int_{0}^{t}e^{-(t-\tau)\mu_1\gamma}d\tau\nonumber\\
%\leq&e^{-\mu_1\gamma t}X_1(0)+\frac{4\mu_1\rho^{*}}{\mu_1\gamma\lambda_{h}}\left(1-e^{-\mu_1\gamma t}\right)\nonumber\\
%&\leq e^{-\gamma t}X_3(t_0)+\frac{b_M\mu_1X^*}{\mu_1\lambda_h} \nonumber\\
\leq &e^{-\mu_1\gamma (t-t_0)}X_1(t_0)+\frac{4\rho_M^2}{\gamma\lambda_{h}},\nonumber
\end{align}
 together with $ b_m \|\dot{\hat{\eta}}\|^2\leq X_1(\dot{\hat{\eta}})$, which further implies that
$$\|\dot{\hat{\eta}}(t)\|^2\leq  e^{-\mu_1\gamma (t-t_0)}\frac{b_M}{b_m}\|\dot{\hat{\eta}}(t_0)\|^2+\frac{4\rho_M^2}{\gamma\lambda_{h}b_m},$$
i.e., $\dot{\hat{\eta}}(t)$ is uniformly bounded for $t\geq 0$. Clearly, for any $\mu_1>1$, there exist a $t_1$, such that $$\|\dot{\hat{\eta}}(t)\|^2\leq \frac{8\rho_M^2}{\gamma\lambda_{h}b_m},$$ which is regardless of $\mu_1$ over $t\geq t_1$.

Since $v(t)$, $\dot{\hat{\eta}}(t)$ and $\hat{\eta}(t)$ are all uniformly bounded for $t\geq 0$, which further implies $S_{d}(\hat{\omega}(t))\hat{\eta}(t)$ is also uniformly bounded for $t \geq 0$ from equation \eqref{noleader1}.
By Lemma 3.3 in \cite{WangHuang2019IJC-coop}, from equation \eqref{creeq2abbbb}, we have
\begin{align}\label{SWETa}
S_{d}&(\hat{\omega}(t))\hat{\eta}(t)\nonumber\\
=&-\phi^T_{d}(\hat{\eta}(t))\hat{\omega}(t)\nonumber\\
=&-\mu_2\phi^T_{d}(\hat{\eta}(t))\int_{t_0}^{t}\phi_d\left(\hat{\eta}(\tau)\right)\left(DH\otimes I_{m}\right)\big[\hat{\eta}(\tau)-\bar{v}(\tau)\big]d\tau\nonumber\\
&-\phi^T_{d}(\hat{\eta}(t))\hat{\omega}(t_0)\tag{A.4}%\\
%=&{\myr -\frac{\mu_2}{\mu_1}\phi^T_{d}(\hat{\eta}(t))\int_{t_0}^{t}\phi_d\left(\hat{\eta}(\tau)\right)\big[S_d\left(\hat{\omega}(\tau)\right)\hat{\eta}(\tau)-\dot{\hat{\eta}}(\tau)\big]d\tau}\nonumber\\
%&{\myr -\phi^T_{d}(\hat{\eta}(t))\hat{\omega}(t_0)},
\end{align}
where $\phi_d\left(\hat{\eta}\right)=\textnormal{block diag} \left(\phi(\hat{\eta}_{1}),\cdots,\phi(\hat{\eta}_{N})\right)$.
Equation \eqref{SWETa} is independent of $\mu_1$. Besides, $\hat{\eta}$ and $\bar{v}$ are uniformly bounded, regardless of $\mu_1$ and $t$. Hence, it can be concluded from \eqref{SWETa} that $S_d(\hat{\omega}(t))\hat{\eta}(t)$ is also uniformly bounded, regardless of $\mu_1$ and $t$.

%===================================================
\section{Proof of Lemma \ref{etape}} \label{Appendix-B}
%====================================================

Define the following Lyapunov function for \eqref{leadererror1},
$$X_{2}(\tilde{\eta})=\tilde{\eta}^T\left(B\otimes I_m\right)\tilde{\eta}, $$
where $B=\textnormal{block diag}(b_1,\dots,b_N)$ is a positive diagonal matrix defined in Lemma \ref{Lm-BDH}.
The time derivative of $X_2(t)$ along the trajectory of \eqref{leadererror1} is
\begin{align}\label{lyanierro}
\dot{X}_2%=&2\tilde{\eta}^T\left(B\otimes I_m\right)\dot{\tilde{\eta}}\nonumber\\
=&2\tilde{\eta}^T\left(B\otimes S(\omega)- \mu_1\left(BDH\otimes I_m\right)\right)\tilde{\eta}\nonumber\\
&+2\tilde{\eta}^T\left(B\otimes I_m\right)S_d\left(\tilde{\omega}\right)\hat{\eta}\nonumber\\
%=&-\mu_1\tilde{\eta}^T \left(\bar{H}\otimes I_m\right)\tilde{\eta}+2\tilde{\eta}^T\left(B\otimes I_m\right)S_d\left(\tilde{\omega}\right)\hat{\eta}\nonumber\\
=&-\mu_1\tilde{\eta}^T \left(\bar{H}\otimes I_m\right)\tilde{\eta}+2\hat{\eta}^T\left(B\otimes I_m\right)S_d\left(\tilde{\omega}\right)\hat{\eta}\nonumber\\
&-2\bar{v}^T\left(B\otimes I_m\right)S_d\left(\tilde{\omega}\right)\hat{\eta}\nonumber\\
=&-\mu_1\tilde{\eta}^T \left(\bar{H}\otimes I_m\right)\tilde{\eta}-2\bar{v}^T\left(B\otimes I_m\right)S_d\left(\tilde{\omega}\right)\hat{\eta}\nonumber\\
\leq&-\mu_1\lambda_{h}\|\tilde{\eta}\|^2+2b_{M}N\|v(0)\|\|S_d\left(\tilde{\omega}\right)\hat{\eta}\|\nonumber\\
\leq&-\mu_1\delta X_2+\rho_1(t),\tag{B.1}
%\leq&-\gamma X_1+2\frac{b_{M}}{\sqrt{b_m}}\sqrt{X_1}\|S_d\left(\tilde{\omega}\right)\hat{\eta}\|,
\end{align}
where $\delta=\frac{\lambda_{h}}{b_{M}}$ and $\rho_1(t)=2b_{M}N\|v(0)\|\|S_d\left(\tilde{\omega}\right)\hat{\eta}\|$ with $b_m$ and $b_M$ are the smallest and biggest eigenvalues of $B$, respectively.
Since
$$S_d\left(\tilde{\omega}\right)\hat{\eta}= S_d\left(\hat{\omega}\right)\hat{\eta}-\left(I_N\otimes S\left(\omega\right) \right)\hat{\eta},$$
by Lemma \ref{Lm-unif-bund-eta}, under Assumptions \ref{Ass-S} and \ref{Ass-graph},
$S_d\left(\tilde{\omega}\right)\hat{\eta}$ is  uniformly bounded, regardless of $\mu_1$ and $t$. Thus, for any initial condition $v(0)$, $\hat{\eta}_i(0)$ and $\hat{\omega}_i(0)$, we have
$ \rho_1(t) \leq\rho_1^{*}$, where $$ \rho_1^{*}=\max\nolimits_{\hat{\eta}_i(0)\in \mathds{V}_0,~\hat{\omega}_i(0)\in \mathds{W}}\left\{\rho_1(t)\right\}$$ is regardless of $\mu_1$ and $t$. From equation \eqref{lyanierro}, we have
\begin{align}
\dot{X}_2\leq-\mu_1\delta X_2+ \rho_1^{*}.\nonumber
\end{align}
By  the Comparison Lemma \cite[Lemma 3.4]{khalil2002nonlinear},  $X_2(t)$ satisfies the inequality
\begin{align}
X_2(t)&\leq e^{-\mu_1\delta t}X_2(0)+ \rho_1^{*}\int_{0}^{t}e^{-(t-\tau)\mu_1\delta}d\tau.\nonumber%\\
%&=e^{-\mu_1\delta t}X_2(0)+\frac{ \rho_1^{*}}{\mu_1\delta}\left(1-e^{-\mu_1\delta t}\right)\nonumber\\
%&\leq e^{-\mu_1\delta t}X_2(0)+\frac{ \rho_1^{*}}{\mu_1\delta}.\nonumber
\end{align}
Since $b_m \|\tilde{\eta}\|^2\leq X_2(\tilde{\eta})$, we have
\begin{align}
 b_m\|\tilde{\eta}(t)\|^2&\leq X_2(t)\leq e^{-\mu_1\delta t}X_2(0)+\frac{ \rho_1^{*}}{\mu_1\delta}.\nonumber
\end{align}
Thus, $\lim\limits _{t\rightarrow \infty}\|\tilde{\eta}(t)\|^2\leq\lim\limits _{t\rightarrow \infty}\frac{X_2(t)}{b_m}\leq\frac{ \rho_1^{*}}{\mu_1\delta b_m}$.
Hence, we have
$$\lim\limits _{t \rightarrow \infty}\|\tilde{\eta}_i(t)\|\leq \alpha(\mu_1), \quad i=1,\dots,N,$$
where $ \alpha(\mu_1)= \frac{ \sqrt{\rho_1^{*}}}{\sqrt{\mu_1\delta b_m}}$. Since  $\lim\limits_{\mu_1\rightarrow\infty}\alpha(\mu_1)=0$,  there exist $T_0> 0$ such that, for $t\geq T_0$,
$$\|\tilde{\eta}_i\|\leq 2\alpha(\mu_1),~~~~i =1\cdots,N.$$
Then, for $i=1,\cdots,N$, we have, $\forall t\geq T_0$,
\begin{align}\left(\phi(\tilde{\eta}_i)\phi^T(\tilde{\eta}_i)\right)^{\frac{1}{2}}
=\left[
                                                                 \begin{matrix}%{ccc}
                                                                  A_1(\tilde{\eta}_{i}) &  \cdots &   0\\
                                                                     \vdots& \ddots &  \vdots \\
                                                                    0 &  \cdots &  A_l(\tilde{\eta}_{i})\\
                                                                 \end{matrix}\right] \leq 2\alpha(\mu_1)I_{l},\nonumber\end{align}
where $A_k^2(\tilde{\eta}_{i})=\tilde{\eta}_{i,2k-1}^2+\tilde{\eta}_{i,2k}^2$,  $k=1,\cdots,l$. Since $\forall x,y\in \mathds{R}^{m}$, $\|x-y\|\geq\|x\|-\|y\|$,  for $i=1,\cdots,N$, and $k = 1, \cdots, l$, we have
\begin{align}
A_k(\tilde{\eta}_{i})&=A_k(\hat{\eta}_{i}-v)\geq A_k(v)-A_k(\hat{\eta}_{i}).\nonumber
\end{align}
 Thus, $\forall t\geq T_0$, we have
\begin{align}2\alpha(\mu_1)I_{l}\geq &\left(\phi(\tilde{\eta}_i)\phi^T(\tilde{\eta}_i)\right)^{\frac{1}{2}}\nonumber\\
\geq &\left(\phi(v)\phi^T(v)\right)^{\frac{1}{2}}-\left(\phi(\hat{\eta}_i)\phi^T(\hat{\eta}_i)\right)^{\frac{1}{2}}.\nonumber\end{align}
 Simple calculation shows that,
for $k = 1, \cdots, l$,
$$\left[
    \begin{array}{c}
       v_{2k-1}(t) \\
       v_{2k}(t) \\
    \end{array}
  \right]
=\left[
    \begin{array}{c}
      C_k\sin (\omega_{0k} t + \upsilon_k)\\
       C_k\cos (\omega_{0k} t + \upsilon_k)\\
    \end{array}
  \right],
$$
where $C_k = \sqrt{v^2_{2k-1} (0) + v^2_{2k} (0)}$ and $ \tan \upsilon_k =  \frac{v_{2k-1} (0)}{v_{2k} (0)}$. Let $C_{\min}=\min\left\{C_1,\cdots,C_l\right\}$. Since,  for $i = 1, \cdots, l$,   $\col (v_{2i-1} (0), v_{2i} (0)) \neq 0$, $C_{\min}>0$.
 In fact,
\begin{align}\label{betav0}\left(\phi(v)\phi^T(v)\right)^{\frac{1}{2}}
&=\left[
      \begin{matrix}%{ccc}
        C_1 &  \cdots &   0\\
         \vdots & \ddots &  \vdots \\
        0  & \cdots  & C_l \\
      \end{matrix}
    \right]\geq C_{\min} I_{l}.\tag{B.2}
\end{align}
Choose  $\mu_1>\mu_1^{*}$ with
\begin{align} \label{mu0-star}
\mu_1^{*}
&\geq \frac{ 16 \rho_1^{*}}{C_{\min}^2 \delta b_{m}},\tag{B.3}
\end{align}
such that, for
$t\geq T_0$, and $i=1,\cdots,N$,
\begin{align}\frac{C_{\min}}{2}I_l&\geq\left(\phi(\tilde{\eta}_i)\phi^T(\tilde{\eta}_i)\right)^{\frac{1}{2}}\nonumber\\
&\geq \left(\phi(v)\phi^T(v)\right)^{\frac{1}{2}}-\left(\phi(\hat{\eta}_i)\phi^T(\hat{\eta}_i)\right)^{\frac{1}{2}}.\nonumber
\end{align}
Then, for $t\geq T_0$ and $\mu_1>\mu_1^{*}$, we have
\begin{align}
\left(\phi(\hat{\eta}_i)\phi^T(\hat{\eta}_i)\right)^{\frac{1}{2}}\geq & \left(\phi(v)\phi^T(v)\right)^{\frac{1}{2}}-\frac{C_{\min}}{2}I_l\nonumber\\
%\geq&C_{\min}I_{l}-\frac{C_{\min}}{2}I_l\nonumber\\
\geq  &\frac{C_{\min}}{2}I_l,\nonumber
\end{align}
and thus
\begin{align} \label{eqphieta}
\phi(\hat{\eta}_i)\phi^T(\hat{\eta}_i)\geq& \frac{C_{\min}^2}{4}I_l.\tag{B.4}
\end{align}
By integrating both sides of equation \eqref{eqphieta}, we have
\begin{align}\frac{1}{T}\int^{t+T}_{t}\phi(\hat{\eta}_i(\tau))\phi^T(\hat{\eta}_i(\tau))d\tau \geq \frac{C_{\min}^2}{4}I_l.\nonumber%\label{peenta0}
\end{align}
In other words, $\phi(\hat{\eta}_i(\tau))$ is PE.
According to \eqref{eqphieta}, we have that
\begin{align}\|\phi^T (\hat{\eta}_i)\hat{\omega}_i\|^2= \hat{\omega}_i^T\phi (\hat{\eta}_i)\phi^T (\hat{\eta}_i)\hat{\omega}_i
\geq  \frac{C_{\min}^2}{4} \hat{\omega}_i^T\hat{\omega}_i\nonumber
\end{align}
By Lemma 3.3 in \cite{WangHuang2019IJC-coop}, we have $S(\hat{\omega}_i)\hat{\eta}_i = -\phi^T (\hat{\eta}_i)\hat{\omega}_i$, and thus
\begin{align} \label{20210201-eq1}
\hat{\omega}_i^T\hat{\omega}_i \leq \frac{4}{C_{\min}^2} \|\phi^T (\hat{\eta}_i)\hat{\omega}_i\|^2 =\frac{4}{C_{\min}^2}  \| S(\hat{\omega}_i)\hat{\eta}_i \|^2.\tag{B.5}
\end{align}
Lemma \ref{Lm-unif-bund-eta} shows that $S(\hat{\omega}_i)\hat{\eta}_i$ is  uniformly bounded and independent of $\mu_1$, so is $\hat{\omega}_i$, according to \eqref{20210201-eq1}. The boundedness also holds for $\tilde{\omega}_i$ as $\omega$ is a constant vector and belongs to a compact set $\mathds{W}$.
Therefore, from \eqref{mu0-star} and the definition of $\rho_1^{*}$, we can choose $\mu_1^{*}$ as
\begin{align}\label{mu1-star-2} \mu_1^{*}= &\max\limits_{\hat{\eta}_i(0)\in \mathds{V}_0,~\hat{\omega}_i(0)\in \mathds{W}}\left\{\frac{32b_{M}^2N\|v(0)\|\|\tilde{\omega}\|\|\hat{\eta}\|}{C_{\min}^2b_{m}\lambda_{h}}\right\}.\tag{B.6}
\end{align}

%===================================================
\section{Proof of Lemma \ref{Lm-exp}} \label{Appendix-C}
%====================================================

Since $Y$ is diagonalizable and has real and positive eigenvalues, by Lemma \ref{matWC}, there exists a symmetric positive definite matrix $U$ such that $UY$ is symmetric positive definite.
The following development is motivated by Lemma B.2.3 in \cite{marino1996nonlinear}.
Let us first show that, for any initial condition, $\lim\limits_{t\rightarrow\infty}x(t)=0$.

Consider the radially unbounded function
\begin{align}\label{omeganormv0}
V_0 =x^T\left(UY\otimes P_a\right)x+\kappa^{-1}z^T\left(U\otimes I_s\right)z. \tag{C.1}
\end{align}
Its time derivative along system \eqref{creeq3exabbbb} is
\begin{align}
\dot{V}_0%=&2x^T\left(UY\otimes (P_aA_a)-UYY\otimes P_a\right)x \nonumber\\
%&-2x^T\left(UY\otimes P_a\right)\left(I_n\otimes \psi^T(t)\right)z\nonumber\\
%&+2z^T\left(U\otimes I_s\right)\left(Y\otimes(\psi(t)P_a)\right)x \nonumber\\
=&x^T\left(UY\otimes (P_aA_a+A^T_aP_a)-2Y^TUY\otimes P_a\right)x \nonumber\\
&-2x^T\left(UY\otimes (P_a\psi^T(t))\right)z+2z^T\left(UY\otimes(\psi(t)P_a)\right)x. \nonumber%\\
\end{align}
As $P_aA_a+A^T_aP_a\leq 0$ and $UY$ is positive definite, we have
\begin{align}
\label{omeganormv1}
\dot{V}_0 \leq -2x^T\left(Y^TUY\otimes P_a\right)x \leq 0.\tag{C.2}
\end{align}
Thus $V_0$ is uniformly bounded and non-increasing, and hence both $x$ and $z$ are uniformly bounded, which further implies the boundedness of $\dot{x}$.

Let $\lambda_{\min}$ be the smallest eigenvalue of $Y^T U Y\otimes P_a$. Then according to \eqref{omeganormv1}, we have
  \begin{align}\label{Chapt7attract}
 2 \lambda_{\min}\lim_{t\rightarrow \infty}\int_{t_0}^{t} \| x(\tau)\|^2 d\tau  & \leq - \lim_{t\rightarrow \infty}\int_{t_0}^{t} \dot{V}_0(\tau)d\tau \nonumber\\
&=V_0(t_0)-V_0(\infty)\leq\infty. \tag{C.3}
\end{align}

By Barbalat's lemma, we can conclude that
  \begin{align}\label{x0}\lim\limits_{t\rightarrow\infty}x(t)=0.\tag{C.4}\end{align}
Now, we are left to show that for any initial condition
 \begin{align}\label{z0}\lim\limits_{t\rightarrow\infty}z(t)=0.\tag{C.5}\end{align}
%i.e. for any $\epsilon> 0$ and $T> 0$, for any initial condition $x(t_0)$ and $z(t_0)$ there exists $t> T$ such that $\| z(t)\|\leq \epsilon$.\\
Before doing that, we first prove the following claim.  For convenience, let $\kappa=1$.
\begin{claim} \label{Claim-z}
Consider system \eqref{creeq3exabbbb}. Given any $\epsilon>0$,  and any initial condition $x(t_0)$ and $z(t_0)$, there exist $T>0$ such that $\| z(t)\| <\epsilon$, $\forall t>T$.
\end{claim}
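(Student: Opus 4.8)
The plan is to prove the stronger statement $\lim_{t\to\infty}z(t)=0$, which is exactly the $\epsilon$--$T$ assertion of the claim. The facts already in hand are that $x(t)\to 0$ by \eqref{x0}, that $x$, $z$, and $\dot{x}$ are uniformly bounded (from \eqref{omeganormv1}), that $\psi(t)$ and $\dot{\psi}(t)$ are uniformly bounded, and that $\psi(t)$ is PE with window $T$ and level $\alpha_0$ in the sense of Definition \ref{defipe}. The difficulty is that the Lyapunov analysis \eqref{omeganormv1} only controls $x$; the variable $z$ lies in a direction along which $\dot V_0$ is indefinite, so its decay must be extracted from the persistency of excitation rather than from $\dot V_0$. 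The key device is to test the $x$-dynamics against the regressor $I_n\otimes\psi(t)$ and integrate over one excitation window, thereby manufacturing the factor $\psi\psi^T$ on which the PE bound acts.

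First I would record two preliminary limits. From the $\dot z$-equation of \eqref{creeq3exabbbb} (with $\kappa=1$), $\dot z=(Y\otimes(\psi(t)P_a))x$, so $\|\dot z(t)\|\le\|Y\|\,\|P_a\|\,\|\psi(t)\|\,\|x(t)\|\to 0$ because $\psi$ is bounded and $x(t)\to 0$; consequently $\sup_{s\in[t,t+T]}\|z(s)-z(t)\|\le\int_t^{t+T}\|\dot z(\tau)\|\,d\tau\to 0$ as $t\to\infty$, i.e.\ $z$ becomes arbitrarily slowly varying over a fixed-length window. Likewise $\int_t^{t+T}\|x(s)\|\,ds\to 0$, since $x(s)\to 0$ and the window has fixed length.

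Next I would left-multiply the $\dot x$-equation of \eqref{creeq3exabbbb} by $I_n\otimes\psi(s)$ and integrate over $[t,t+T]$. Writing $M=I_n\otimes A_a-Y\otimes I_m$, this produces
\begin{align}
&\int_t^{t+T}(I_n\otimes\psi(s))\dot x(s)\,ds\nonumber\\
&\quad=\int_t^{t+T}(I_n\otimes\psi(s))Mx(s)\,ds\nonumber\\
&\quad\;\;-\int_t^{t+T}\big(I_n\otimes\psi(s)\psi^T(s)\big)z(s)\,ds.\nonumber
\end{align}
The left-hand side is treated by integration by parts: it equals $\big[(I_n\otimes\psi(s))x(s)\big]_t^{t+T}-\int_t^{t+T}(I_n\otimes\dot\psi(s))x(s)\,ds$, whose boundary term vanishes as $t\to\infty$ (because $x(t),x(t+T)\to 0$ and $\psi$ is bounded) and whose remaining integral vanishes (because $\dot\psi$ is bounded and $\int_t^{t+T}\|x\|\to 0$). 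The first term on the right vanishes for the same reason ($M$ constant, $\psi$ bounded, $\int_t^{t+T}\|x\|\to 0$). In the last term I would substitute $z(s)=z(t)+(z(s)-z(t))$; the correction integrates to zero by the slowly-varying estimate together with boundedness of $\psi\psi^T$, leaving $\big(I_n\otimes\Gamma(t)\big)z(t)$ with $\Gamma(t)=\int_t^{t+T}\psi(s)\psi^T(s)\,ds$.

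Collecting these limits gives $\big(I_n\otimes\Gamma(t)\big)z(t)\to 0$. By the PE property, $\Gamma(t)\ge\alpha_0 T\,I_s$ for all sufficiently large $t$, so $I_n\otimes\Gamma(t)\ge\alpha_0 T\,I_{ns}$ is symmetric and uniformly invertible with $\|(I_n\otimes\Gamma(t))^{-1}\|\le(\alpha_0 T)^{-1}$. Hence $\|z(t)\|\le(\alpha_0 T)^{-1}\|(I_n\otimes\Gamma(t))z(t)\|\to 0$, which proves the claim: given $\epsilon>0$, any time beyond the point where the right-hand side drops below $\epsilon$ serves as the required threshold. I expect the integration-by-parts bookkeeping and the careful verification that the frozen-$z$ replacement error is negligible to be the only delicate points; all remaining steps are direct substitutions of the boundedness and decay facts already established before the claim.
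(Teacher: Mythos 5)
Your proof is correct, but it takes a genuinely different route from the paper. The paper argues by contradiction: assuming $\|z(t)\|\geq\epsilon$ persists beyond some $t_1$, it introduces the window--difference function $\varphi(t)=\tfrac{1}{2}\bigl[z^T(t+T)\left(U\otimes I_s\right)z(t+T)-z^T(t)\left(U\otimes I_s\right)z(t)\bigr]$, computes $\dot\varphi=\varphi_1-\varphi_2$ as in \eqref{equniform}, shows $\varphi_1\to 0$ because $x\to 0$ while the PE property forces $\varphi_2\geq\kappa\lambda_{p_a}\lambda_{uy}\epsilon^2$, and derives a contradiction with the boundedness of $\varphi$. Strictly speaking that contradiction only rules out $\|z\|$ staying above $\epsilon$ forever (recurrent smallness), and the paper upgrades this to permanence afterward by invoking the monotonicity of $V_0$ from \eqref{omeganormv1} at the time $T_\epsilon$. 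You instead test the $\dot x$-equation against the regressor $I_n\otimes\psi(s)$, integrate over one excitation window, kill every term except $\left(I_n\otimes\Gamma(t)\right)z(t)$ using $x\to 0$ from \eqref{x0}, the boundedness of $\psi$ and $\dot\psi$, and the slowly-varying estimate on $z$ (which itself follows from $\dot z\to 0$), and then invert via the uniform lower bound $\Gamma(t)\geq\alpha_0 T I_s$ supplied by Definition \ref{defipe}. This is the classical averaging/swapping argument from adaptive control; it is direct rather than by contradiction, it delivers $\lim_{t\to\infty}z(t)=0$ (hence the full $\epsilon$--$T$ statement) in one pass, and it renders the paper's subsequent propagation step via $V_0$ unnecessary for mere convergence of $z$. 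What the paper's route buys in exchange is that its machinery ($V_0$ monotonicity together with the claim) is exactly what is reused to argue uniformity with respect to $t_0$, which is needed to conclude exponential stability at the end of Appendix C; if you adopt your argument you would still want to note that all your estimates depend only on the bounds $M_\psi$, $\alpha_0$, $T$ and on $\sup_{s\geq t}\|x(s)\|$, so that the convergence is uniform in the initial time. All of your individual steps (integration by parts, the frozen-$z$ replacement, the vanishing of the boundary and cross terms) check out dimensionally and analytically.
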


\begin{proof}
Suppose for any $\epsilon>0$, there does not exist $t_1>0$ such that
\begin{align}\label{omegabe}
\| z(t)\|\geq \epsilon,~~~~\forall t\geq t_1.\tag{C.6}
\end{align}
Consider
\begin{align}%\label{omeganorm}
\varphi(t)=\frac{1}{2}\Big[z^T(t+T)\left(U\otimes I_{s}\right)z(t+T)-z^T(t)\left(U\otimes I_{s}\right)z(t)\Big]\nonumber
\end{align}
which is uniformly bounded since $\| z(t)\|$ is uniformly bounded $\forall t\geq t_0$. Then, the time derivative of $\varphi(t)$ along \eqref{creeq3exabbbb} is
\begin{align}\label{equniform}
\dot{\varphi}(t)=&z^T(t+T)\left(U\otimes I_{s}\right)\dot{z}(t+T)-z^T(t)\left(U\otimes I_{s}\right)\dot{z}(t)\nonumber\\
=&\int_{t}^{t+T}\frac{d}{d\tau}\left(z^T(\tau)\left(U\otimes I_{s}\right)\dot{z}(\tau)\right)d\tau\nonumber\\
 =&\int_{t}^{t+T}\frac{d}{d\tau}\left[z^T(\tau) \left((UY)\otimes (\psi(\tau)P_a)\right)x(\tau)\right]d\tau\nonumber\\
=&\int_{t}^{t+T}\Big[x^T(\tau)\left((Y^TUY)\otimes (P_a\psi^T(\tau)\psi(\tau)P_a)\right)\nonumber\\
&+z^T(\tau)\big((UY)\otimes (\dot{\psi}(\tau)P_a)-(Y^TUY)\otimes (\psi(t)P_a)\big)\nonumber\\
&+z^T(\tau)\left((UY)\otimes (\psi(t)P_a)\right)\left(I_n\otimes  A_a\right)\Big]x(\tau)d\tau\nonumber\\
&-\int_{t}^{t+T}z^T(\tau)\left[(UY)\otimes \left(\psi(\tau)P_a\psi^T(\tau)\right)\right]z(\tau)d\tau\nonumber\\
=& \varphi_1(t)- \varphi_2(t),\tag{C.7}
\end{align}
where $\varphi_1(t)$ and $\varphi_2(t)$ are the first and second term of equation \eqref{equniform}, respectively.
Since $ \psi(t)$ and $ \dot{\psi}(t)$ are uniformly bounded, there exists $M_\psi>0$ such that $\| \psi(t)\|\leq M_\psi$ and $\|\dot{\psi}(t)\|\leq M_\psi$, $\forall t\geq t_0$. By  \eqref{omeganormv0} and \eqref{omeganormv1}, it is straightforward to show
\begin{align}\label{boundxz}
\|x(t)\|^2&\leq{\frac{\|x_0\|^2\lambda_{UY}\lambda_{P_a}+\|z_0\|^2\lambda_{U}}{\lambda_{uy}\lambda_{p_a}}}\triangleq M_x^2,\nonumber\\
\|z(t)\|^2&\leq {\frac{\|x_0\|^2\lambda_{UY}\lambda_{P_a}+\|z_0\|^2\lambda_{U}}{\lambda_{u}}}\triangleq M_z^2,\tag{C.8}
\end{align}
 where $(x_0,z_0)$ is the initial condition, $\lambda_{u}$ and $\lambda_{U}$ are the smallest and largest eigenvalues of $U$, $\lambda_{p_a}$ and $\lambda_{P_a}$ are the smallest and largest eigenvalues of $P_a$, and $\lambda_{uy}$ and $\lambda_{UY}$ are the smallest and largest eigenvalues of $UY$.
Then,
\begin{align}
\varphi_1(t)\leq& \lambda_{UY}\lambda_{P_a}M_\psi\Big[M_\psi M_x\lambda_{P_a}\| Y\|+M_z+M_z  \| Y\|\nonumber\\
&+M_z \|A_a\|\Big]\int_{t}^{t+T}\| x(\tau)\| d\tau.\nonumber
\end{align}
%and
%\begin{align}
%\int_{t}^{t+T}\tilde{\omega}^T\hat{W}\phi_d\left(\tilde{\eta}\right)\hat{H}\phi_d^T (\eta)\tilde{\omega}d\tau&\leq M^3\lambda_{W}\|\hat{H}\|\int_{t}^{t+T}\|\tilde{\eta}\|d\tau\nonumber\\
%\int_{t}^{t+T}\tilde{\omega}^T\hat{W}\phi_d\left(\bar{v}\right)\hat{H}\phi_d^T (\tilde{\eta})\tilde{\omega}d\tau&\leq
%M^3\lambda_{W}\|\hat{H}\|\int_{t}^{t+T}\|\tilde{\eta}\|d\tau\nonumber
%\end{align}
On the other hand, from \eqref{x0}, there exists a time instant $t_2$ such that
\begin{align}\label{Chapt7equniform2}
\varphi_1(t)\leq \frac{\kappa}{2}\lambda_{uy}\lambda_{p_a}\epsilon^2,~~~~\forall t\geq t_2.\tag{C.9}
\end{align}
Suppose there exists a time instant $t_1$ such that \eqref{omegabe} holds. By assumption, $\psi(t)$ is PE such that
 $$\int_{t}^{t+T}\psi(\tau)P_a\psi^T(\tau)d\tau\geq\kappa \lambda_{p_a}I_{s}, $$ for some constant $\kappa>0$.
Thus we have,
$$\int_{t}^{t+T}c^T\left[(UY)\otimes \left(\psi(\tau)P_a\psi^T(\tau)\right)\right]cd\tau\geq \kappa\lambda_{p_a}\lambda_{uy},$$
$\forall t \geq t_0$, and $\forall c:\| c\|=1$, which in turn, along with  \eqref{omegabe}, implies
\begin{align}\frac{1}{\epsilon^2}&\int_{t}^{t+T}z^T(\tau)\left((UY)\otimes \left(\psi(\tau)P_a\psi^T(\tau)\right)\right)z(\tau)d\tau\nonumber\\
&\geq \int_{t}^{t+T}\frac{z^T(\tau)}{\| z(\tau)\|}\left[(UY)\otimes \left(\psi(\tau)P_a\psi^T(\tau)\right)\right]\frac{z(\tau)}{\| z(\tau)\|}d\tau\nonumber\\
&\geq \kappa\lambda_{p_a}\lambda_{uy} ,~~~~\forall t\geq t_1.\nonumber
\end{align}
Thus, we have, $\forall t\geq t_1$,
\begin{align}\varphi_2(t)=&\int_{t}^{t+T}z^T(\tau)\left[(UY)\otimes \left(\psi(\tau)P_a\psi^T(\tau)\right)\right]z(\tau)d\tau\nonumber\\
\geq & \kappa\lambda_{p_a}\lambda_{uy}\epsilon^2.\nonumber
\end{align}
From  \eqref{equniform} and \eqref{Chapt7equniform2}, we obtain
\begin{align}%\label{Chapt7equniform4}
\dot{\varphi}(t)\leq\varphi_1(t)- \varphi_2(t)\leq -\frac{\kappa}{2}\lambda_{p_a}\lambda_{uy}\epsilon^2,~~\forall t\geq t_3,\nonumber
\end{align}
with $t_3=\max\{t_1,t_2\}$, which contradicts the boundedness of $\varphi(t)$ for any $t\geq t_0$.
\end{proof}

Since $x(t) \to 0$, $\forall \epsilon>0$, there exists $t_{\epsilon}>0$ such that
\begin{align}\label{Chapt7equniform5}
\| x(t)\|^2\leq \frac{{\lambda_{u}}\epsilon}{{2\lambda_{UY}\lambda_{P_a}}}, ~~\forall t\geq t_{\epsilon}.\tag{C.10}
\end{align}
By Claim \ref{Claim-z}, there exists $T_{\epsilon}\geq t_{\epsilon}$ such that
\begin{align}\label{Chapt7equniform6}
\| z(T_{\epsilon})\|^2\leq \frac{{\lambda_{u}}\epsilon}{{2\lambda_{U}}}.\tag{C.11}
\end{align}
For initial conditions $\tilde{\eta}(T_{\epsilon})$ and $\tilde{\omega}(T_{\epsilon})$,
according to \eqref{boundxz}, \eqref{Chapt7equniform5} and \eqref{Chapt7equniform6},
we have $\forall t\geq T_{\epsilon}$,
 $$\| z(t)\|^2\leq {\frac{\| x(T_{\epsilon})\|^2\lambda_{UY}\lambda_{P_a}+\| z(T_{\epsilon})\|^2\lambda_{U}}{\lambda_{u}}}\leq\epsilon, $$
i.e., $z(t) \to 0$ as $t \in \infty$.
Since \eqref{Chapt7attract} holds uniformly with respect to $t_0$, so $\lim\limits_{t\rightarrow\infty}x(t)=0$ and $\lim\limits_{t\rightarrow\infty}z(t)=0$ uniformly.
  It follows that the equilibrium point at the origin is globally uniformly asymptotically stable. Since system \eqref{creeq3exabbbb} is a linear time varying system, by the Theorem 4.11 in \cite{khalil2002nonlinear}, the equilibrium is also exponentially stable.

%=============================
\end{appendices}
%=============================

\bibliographystyle{ieeetr}%plain
\bibliography{TAC}

\begin{IEEEbiography}[{\includegraphics[width=1in,height=1.25in,clip,keepaspectratio]{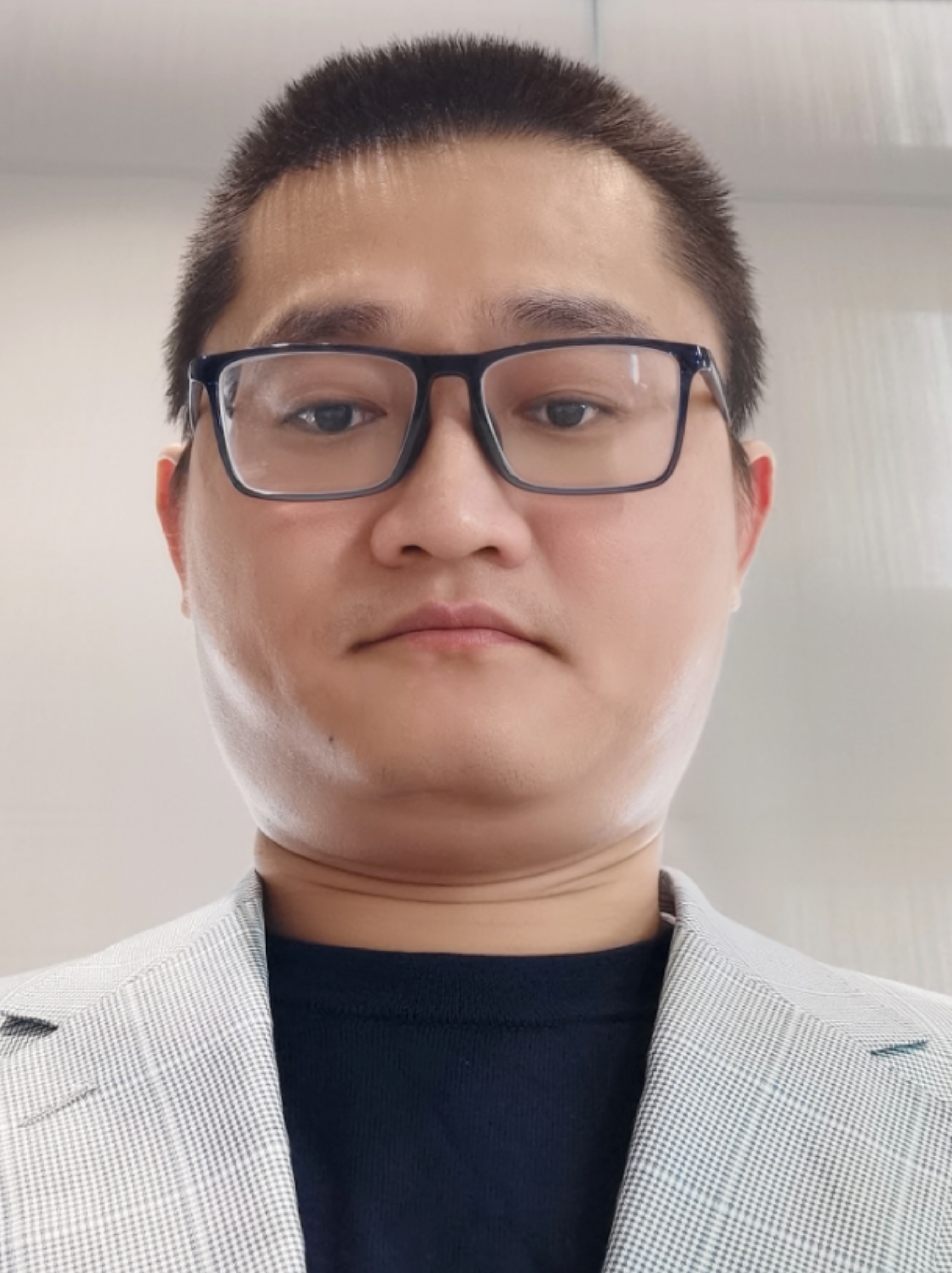}}]{Shimin Wang}
received the B.Sci. degree and M.Eng. degree from Harbin
Engineering University, Harbin, China, in 2014. He then received Ph.D. degree from The Chinese University of Hong Kong, Hong Kong, China, in 2019. From 2014 to 2015, he was an engineer in the Jiangsu Automation Research Institute.  He is now working as a postdoctoral fellow with the Department of Electrical and Computer Engineering, University of Alberta, Edmonton, AB, Canada.
\end{IEEEbiography}

\begin{IEEEbiography}[{\includegraphics[width=1in,height=1.25in,clip,keepaspectratio]{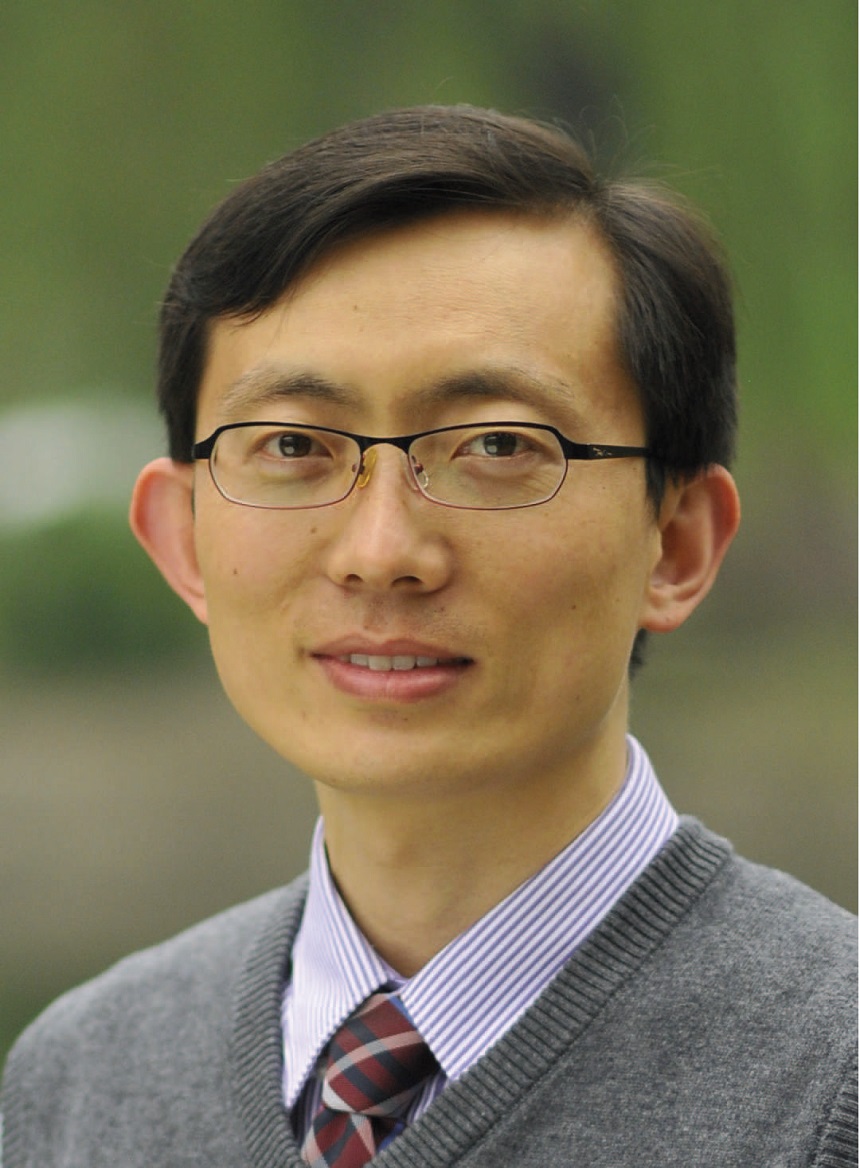}}]{Hongwei Zhang}
 received the B.S. and M.S. degrees from Tianjin University in 2003
and 2006, respectively; and the Ph.D. degree from the Chinese
 University of Hong Kong in 2010. Subsequently, he held postdoctoral positions at the University of
Texas at  Arlington and the City University of Hong Kong. He had been working at the Southwest Jiaotong University from 2012 to 2020. He joined the Harbin Institute of Technology, Shenzhen, in Nov. 2020, where he is now a
Professor.  His research
interests include cooperative control of multi-agent systems, neural adaptive control, nonlinear control, and active
noise control. He is an Associate Editor of Neurocomputing and Transactions of the Institute of Measurement and Control.
\end{IEEEbiography}

\begin{IEEEbiography}[{\includegraphics[width=1in,height=1.25in,clip,keepaspectratio]{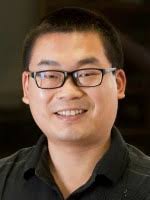}}]{Zhiyong Chen} received the B.E. degree from the University of Science and Technology of China, and the M.Phil. and Ph.D.
degrees from the Chinese University of Hong Kong, in 2000, 2002 and 2005, respectively. He worked as a Research
Associate at the University of Virginia during 2005--2006. He joined the University of Newcastle, Australia, in 2006,
where he is currently a Professor. He was also a Changjiang Chair Professor with Central South University, Changsha,
China. His research interests include non-linear systems and control, biological systems, and multi-agent systems. He
is/was an associate editor of Automatica, IEEE Transactions on Automatic Control and IEEE Transactions on Cybernetics.\end{IEEEbiography}

\end{document}